\newtheorem{theorem}{Theorem}
\newtheorem{proposition}{Proposition}
\newtheorem{lemma}{Lemma}
\newtheorem{corollary}{Corollary}
\newtheorem{definition}{Definition}
\newtheorem{example}{Example}
\newtheorem{remark}{Remark}
\newtheorem{assumption}{Assumption}
\theoremstyle{break}
  \def\elabel#1{\label{#1}}%
  \def\elabel#1{\@bsphack \protected@write \@auxout {}{\string \newlabel {#1}{{\theAlgoLine }{\thepage }}}\@esphack}%
\def\g{\mathbf}
\DeclareMathOperator{\argmin}{arg\,min}
\newcommand{\eps}{\varepsilon}
\newcommand{\einsfun}{\mathbbm 1}
\pgfplotsset{compat=newest}
\newcommand{\Ptr}{\mathbb P}
\newcommand{\Pte}{\mathbb Q}
\newcommand{\Qtesttau}{\mathbb{Q}_{\tauzero}}
\renewcommand{\P}{\mathbb P}
\newcommand{\bfX}{\mathbf X}
\newcommand{\tauzero}{\tau_0}
\newcommand{\tzero}{t_0}
\def\M{\text{median}}
\newcommand{\beao}{\begin{eqnarray*}}
	\newcommand{\eeao}{\end{eqnarray*}\noindent}
\newcommand{\beam}{\begin{eqnarray}}
	\newcommand{\eeam}{\end{eqnarray}\noindent}
\title[Progression: an extrapolation principle for regression]{Progression: an extrapolation principle for regression}
\author{Gloria Buritic\'a}
\address{
    Université Paris-Saclay, AgroParisTech, INRAE, 
	UMR MIA Paris-Saclay, 22 place de l'Agronomie,
	91123 Palaiseau, France
  }
\author[G. Buritic\'a and S.~Engelke]{Sebastian Engelke}
\address{
	Research Institute for Statistics and Information Science, University of Geneva, Boulevard du Pont d'Arve 40, 1205 Geneva, Switzerland.
}
\begin{document}

\begin{abstract}
	The problem of regression extrapolation, or out-of-distribution generalization, arises when predictions are required at test points outside the range of the training data. In such cases, the non-parametric guarantees for regression methods from both statistics and machine learning typically fail. Based on the theory of tail dependence, we propose a novel statistical extrapolation principle. After a suitable, data-adaptive marginal transformation, it assumes a simple relationship between predictors and the response at the boundary of the training predictor samples.	
	This assumption holds for a wide range of models, including non-parametric regression functions with additive noise. Our semi-parametric method, \texttt{progression}, leverages this extrapolation principle and offers guarantees on the approximation error beyond the training data range.
	We demonstrate how this principle can be effectively integrated with existing approaches, such as random forests and additive models, to improve extrapolation performance on out-of-distribution samples.
\end{abstract}

\keywords{Extreme value theory; non-parametric regression; out-of-distribution generalization; random forest}

\section{Introduction}

Regression methods from both statistics and machine learning model the relationship between a response variable $Y$ and a set of predictors $\mathbf X$ in a $p$-dimensional space. These methods aim to predict $Y$ given a new predictor value $\g x\in \mathbb R^p$, typically through the conditional mean, or, as is the focus of this paper, the conditional median
\begin{align}\label{cond_median}
	 m(\g x) := \M(Y \mid \g X = \g x). 
\end{align}
Flexible non-parametric methods such as random forests or neural networks perform well at interpolation, that is, prediction within the range of the training data. 
However, classical non-parametric guarantees break down for extrapolation, when the test point $\g x$ lies outside the domain of the training data \citep{xu2021how}. While extrapolation in parametric models is mathematically straightforward, it comes with the drawback of requiring strong structural assumptions about the relationship between the predictors and the response. Such assumptions are particularly hard to justify in regions of the predictor space where training data is sparse or absent.

In climate science, where impacts $Y$ are typically modeled as functions of environmental variables $\g X$ (e.g., temperature, precipitation), models are trained on current climate data
and applied to future climate projections, where the distribution of predictors has shifted in mean and/or variance due to climate change. 
Although the challenge of extrapolation is frequently acknowledged in this field, no universally effective solution exists. For example, in insurance \citep{pim2021} or fire risk modeling \citep{hei2023} based on climate drivers, extrapolation is often assumed constant above the largest training observation, or by excluding test points outside the training range.
In machine learning, extrapolation is known as the out-of-distribution generalization problem, encountered in settings like image recognition \citep{gan2016}, protein fitness prediction \citep{fre2024}, and large language models \citep{graphcast}.
In weather forecasting, artificial intelligence methods such as graph neural networks \citep{graphcast} outperform physical models on average but show limitations in predicting extreme events, particularly when it involves extrapolation \citep{pasche2024}.

Mathematically, extrapolation occurs if the test point $\g x$ lies far from the training samples or outside the training distribution $\Ptr$, based on some metric.
One common cause is covariate shift, where the test distribution $\Pte$ differs from the training distribution, increasing the likelihood of encountering test points $\g x$ outside the training support.
Even when the test and training distributions are identical, the stochastic nature of sampling may result in test points that are more extreme than any observation in the training set. Extrapolation is also required when the test point $\g x$ is deterministically set to a value of interest beyond the range of the training data.

Extrapolation is inherently challenging, and any method offering statistical guarantees must impose additional assumptions on the structure or regularity of the data-generating process; see Section~\ref{sec:literature} for a brief literature review. Methods that assess point-wise performance at a new test point $\g x$ often assume that the point is outside a fixed, bounded support of $\g X$ under $\Ptr$ \citep{shen:meinshausen:2023,pfister:buhlman:2023}. Theoretical guarantees for extrapolation beyond this support are typically provided on population level or rely on the assumption that the sample size on the fixed support grows to infinity.

We consider a related but fundamentally different perspective on extrapolation.
In many applications, in particular in environmental and climate sciences, extrapolation is inherently linked to the sample size $n$.
As such, out-of-distribution generalization should be defined relative to the training sample, rather than the population version $\Ptr$.
Assume $n$ samples $X_1,\dots, X_n$ from a univariate predictor, with $X_{(1)}$ and $X_{(n)}$ denoting the smallest and largest observations, respectively. Non-parametric methods typically fail to predict at test points $x$ outside of $[X_{(1)}, X_{(n)}]$, and may even struggle near the boundary of this range. As the sample size increases, the range of observed data expands, necessitating extrapolation beyond the new boundaries. 

To capture this mathematically, we propose a semi-parametric approximation $\tilde m_{\tauzero}$ of the conditional median function $m$ in~\eqref{cond_median}, which can be estimated on the training domain $[Q_X(1-\tauzero), Q_X(\tauzero)]$, where $\tauzero$ is close to one. In practice, $\tauzero = 1-k/n$ depends on the sample size $n$ and includes a tuning parameter $k<n$. Our goal is to provide extrapolation guarantees on significantly larger domains $[\nu^-(\tauzero), \nu^+(\tauzero)]$, where the extrapolation limits $\nu^\pm(\tauzero)$ depend on the distributional properties of the model. This sample-based perspective is well-known in extreme quantile regression \citep[e.g.,][]{che2005,gne2024}, which addresses extrapolation beyond the training range of the response $Y$, rather than in the predictor space.

The rationale behind our principled approach to extrapolation is inspired by the work of \cite{box1964}. Their power transformations of the response variable are designed to simplify the relation with predictors, allowing a simple model (e.g., linear) to fit well. 
We consider a different class of parametric transformations, justified by asymptotic theory, which can be estimated from data. 
Crucially, we transform both predictor and response variables into $X^*$ and $Y^*$ with a common marginal distribution, and we do not assume that a parametric model holds on the whole predictor space. Instead, our key assumption is that the functional dependence on the transformed scale simplifies at the boundary of the training data, where the 
conditional median follows a leading linear term
\begin{align}\label{main_assumption_intro}
	\M(Y^* \mid X^* = x^*) = a x^*  +  (x^*)^\beta b  + r(x^*), \quad  x^* \to \infty,
\end{align}
where $a\in [-1,1]$, $b\in \mathbb R$, $\beta \in [0,1)$ and the remainder term is negligible with $r(x^*) = o(1)$ as $x^*\to\infty$.
This surprisingly simple relationship in the tail regions is motivated by the theory of dependence between extreme values in multivariate data \citep{resnick:2007, heffernan:tawn:2004}.   
We argue that these ideas are highly relevant in the framework of regression extrapolation.
In fact, for a broad class of additive noise models $Y = f(X) + \varepsilon$, as well as for the pre-additive noise model $Y = g(X + \eta)$ considered in \texttt{engression} \cite{shen:meinshausen:2023}, we demonstrate that this assumption holds.

Our new methodology leverages this principle of regression extrapolation, abbreviated as the \texttt{progression} method. The name also reflects a fundamental departure from the original meaning of regression in \cite{gal1886}, who observed that extreme samples in hereditary stature data tend to regress toward the overall mean. In contrast, methods based on \texttt{progression} can extrapolate towards values more extreme than in the training data. 
Figure~\ref{fig:intro} illustrates our principled approach to extrapolation. While the data may exhibit arbitrarily complex behavior on the original scale (left panel), it simplifies to an approximately linear relationship at the boundary of the training predictor distribution on the transformed scale (right panel). 
Our \texttt{progression} method fits the conditional median function $m$ (orange line) by first learning appropriate monotone transformations of both the predictor and response variables, and then fitting the parametric approximation from~\eqref{main_assumption_intro} on the transformed scale (blue line). By inverting the transformation, we obtain an accurate semi-parametric approximation $\tilde m_{\tauzero}$ of the true function $m$.
For any test distribution $\mathbb Q_{\tauzero}$ supported on the larger interval $[\nu^-(\tauzero), \nu^+(\tauzero)]$, we obtain the extrapolation guarantee
\begin{align}
	\mathbb E_{\mathbb Q_{\tauzero}} \left( \frac{\tilde m_{\tauzero}(X)- m(X)}{m(X)} \right) \to 0, \qquad \tauzero \to 1.
\end{align}

Building on this principle, we propose the random forest \texttt{progression} method for univariate predictors $X$. This method uses the localizing weights from a regression random forest \cite{breiman:2001} to obtain a smoother of $Y$ as a function of $X$ within the sample range, while applying an appropriate parametric approximation in the boundary regions of the training data.		
This allows for a smooth, data-driven transition from a non-parametric random forest fit to the extrapolation-aware $\texttt{progression}$ method inspired by~\eqref{main_assumption_intro}.  
Our extrapolation principle can be extended in various ways to handle $p$-dimensional predictors $\g X$. Here, we adopt the framework of additive models. By leveraging random forest-based extrapolation for individual predictors, we propose an adaptation of the backfitting algorithm \citep{has1987} to implement the additive model \texttt{progression}. This approach provides accurate extrapolation for multiple predictors under an additive structure.

In Section~\ref{sec:literature}, we give a brief overview of different perspectives on extrapolation and the related literature. Section~\ref{sec:prelim} covers the extrapolation of marginal distributions, which is crucial for the transformations to $X^*$ and $Y^*$ in~\eqref{main_assumption_intro}. Our extrapolation principle for regression, the foundation of the \texttt{progression} methods, is presented in Section~\ref{sec:extrapolation_principle}, where we also establish the main result on the uniform bound for the relative approximation error outside the training distribution. In Section~\ref{sec:statistics}, we explore statistical aspects and introduce the random forest and additive model \texttt{progression} methods. A simulation study in Section~\ref{sec:experiments} demonstrates the improved performance  of \texttt{progression} under domain shifts compared to classical methods such random forest or local linear forest \citep{friedberg:tibshirani:wager:2021}, and also compares it to the recent extrapolation method \texttt{engression} \citep{shen:meinshausen:2023}. 

All proofs are provided in the appendix.

\begin{figure}[h!]
	\begin{center}
		\includegraphics[width=.35\linewidth]{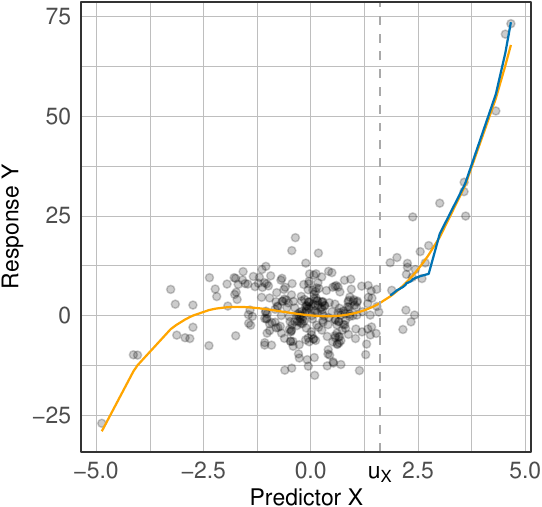}%
		\includegraphics[width=.335\linewidth]{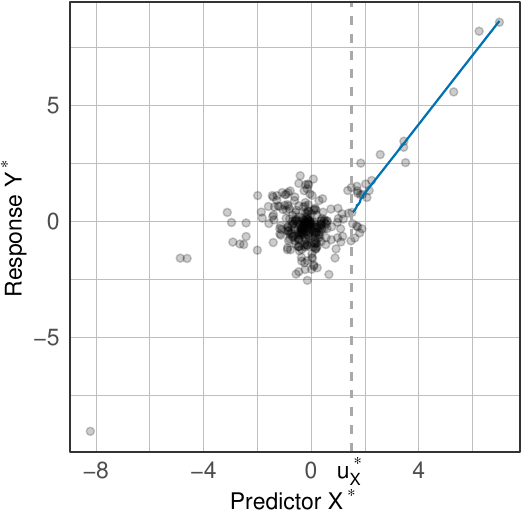}
	\end{center}
	\caption{Left: training data (grey points) of $(X,Y)$ sampled from model $Y = f(X) +\varepsilon$ with regression function $f$ (orange line) and approximation from the extrapolation principle (blue line). Right: transformed training data on Laplace scale $(X^*, Y^*)$ and linear approximation above threshold $u_{X}^*$ (blue line).
	}
	\label{fig:intro}
\end{figure}

\subsection{Related work}\label{sec:literature}
We study the problem of extrapolation in non-parametric regression beyond the range of the training data.  
Extrapolation, or of out-of-domain generalization, is difficult and has many facets. There is not one method that is superior in all applications, since the different assumptions will only be satisfied by certain data structures.
Our approach makes generic assumptions on the training distributions of the noise and predictor variables, rather than explicit structural assumptions on the model. It relies on tools of extreme value theory that allow for accurate approximations of dependence between predictors and response at the boundary of observed predictor samples. The sample-based notion of extrapolation is different from most existing methods that rely on population quantities.

\cite{shimodaira:2000} first studied covariate shifts and introduced 
importance-reweighted maximum likelihood estimators. The weights 
are likelihood ratios between the training and test distribution. 
Subsequent work studied these methods in the context of cross validation \citep{sugiyama:krauledat:muller:2007} and derived theoretical guarantees \cite[e.g.][]{cortes:mansour:mohri:2010,ma:pathak:wainwright:2023}. 
Using likelihood ratios, which are often assumed to be bounded, requires the test distribution to be dominated by the training distribution; for a reference book see \cite{quinonero:sugiyama:schwaighofer:lawrence:2022}. This excludes the out-of-distribution extrapolation scenarios studied in this article.

Another line of research investigates distributionally robust optimization methods that minimize the worst-case risk over a neighborhood of around the training data. This requires assumptions on the distances between adversarial distributions and the training sample, and several divergences and metrics have been considered \citep{kuh2019, NIPS2016_4588e674}. In contrast, our method based on principles of extreme value theory relies on an asymptotically motivated distance measure; see also \cite{vig2020} where such a distance is used in novelty detection.

There is an extensive body of literature on structural robustness, where distributional changes can be explained through interventions in a causal system
\citep[e.g.,][]{rothensausler:meinshausen:buhlmann:peters:2021,buhlmann:2020,christiansen:pfister:jakobsen:gnecco:peters:2021,gnecco:peters:engelke:pfister:2023}. The extrapolation is however typically linear, and moreover domain knowledge of the causal graph is required. 

Non-linear extrapolation beyond the observed data-range has been studied in  \citep{dong:ma:2023} who show that it is necessary to restrict the non-linear function class to have control over the extrapolation guarantees. 
\cite{pfister:buhlman:2023} assume bounds on the directional derivatives of the regression function on the training domain to derive extrapolation-aware bounds outside of this domain. The \texttt{engression} method of \cite{shen:meinshausen:2023} proposes a methodology for regression outside of the training domain with theoretical guarantees for univariate predictors in a pre-additive noise model. As described in the introduction, our approach differs in the sample-based perspective on extrapolation, and the fact that we obtain theoretical guarantees also for the widely used additive noise models.

Our semi-parametric, non-linear extrapolation method relies on principles from extreme value theory. Traditionally, this field studies risk assessment and uses extrapolation in the direction of the response variable $Y$ in terms of high quantiles. If predictors are present, there is substantial research on flexible models for extreme quantile regression \citep{ExGAM2,Kohfireboosting, gbex, gne2024, Pasche2025}.
Some works have considered classification \cite{jalalzai:clemencon:sabourin:2018} and regression in extreme regions \cite{huet2024regressionextremeregions}, under certain distributional assumptions and bounded regression functions.
\cite{bou2022} use univariate extreme value extrapolation in combination with generative adversarial networks to generate climate data beyond historical records.
In this paper we argue that principles from multivariate extreme value theory can be leveraged for regression extrapolation in the predictor space, and our \texttt{progression} method formalizes this idea.

\section{Preliminaries: marginal tail extrapolation}\label{sec:prelim}

\subsection{First- and second-order conditions for tail extrapolation}
\label{sec:first_order}
Standardizing the margins of a random variable $Y$ requires knowledge of its distribution function, 
which in practice is typically unknown and needs to be estimated. 
In our case, we are interested in the tail of the distribution and for this reason we rely on classical extreme value theory \citep[][]{dehaan:ferreira:2006}. 

Let $F$ be the distribution function of a random variable $Y$ with upper endpoint $u^* \in (-\infty, \infty]$.
We say that $F$ is in the domain of attraction of a generalized Pareto distribution 
with shape parameter $\gamma \in \mathbb{R}$ 
if there exists a positive scale function $\sigma :\mathbb{R}\to (0,\infty)$ such that 
\begin{equation}
    \label{eq:domain:attraction:1}
\lim_{u \to u^*} \P(Y - u \leq x \sigma(u) \mid  Y > u)  = 1- (1+\gamma x)_+^{-1/\gamma} := H_\gamma(x), \quad  x \geq 0,
\end{equation}
where $H_\gamma$ is the distribution function of the generalized Pareto distribution and $a_+ = \max(0,a)$ for $a\in\mathbb R$; for $\gamma = 0$ it should be understood as the limit as $\gamma \downarrow 0$, that is, $H_0(x) = 1-\exp(-x)$ 
\cite[see, e.g.,][]{balkema:dehaan:1974,pickands:1975}.
The shape parameter determines the heaviness of the tail of $Y$, ranging from heavy tails for $\gamma > 0$, light tails for $\gamma = 0$, and distributions with finite upper endpoint $-1/\gamma$ for $\gamma < 0$.
Note that when $\gamma = 0$, there are certain distributions in the domain of attraction of $H_0$ that have a finite upper endpoint $u^* < \infty$. 
In this paper, we only use results where $Y$ has an infinite upper endpoint $u^* = \infty$, and focus on this case in the sequel. 
The shape parameter is then always non-negative $\gamma \geq 0$.

The importance of the limit~\eqref{eq:domain:attraction:1} is that it
yields a strategy to approximate tail probabilities and quantiles of $F$.
Indeed, for a high threshold {$u \in \mathbb{R}$,} we define the approximate distribution function based on the generalized Pareto distribution (GPD) as
\beam \label{eq:tail:approximation:tilde:1}
\tilde F(x) = 1 - \{1-F(u)\}\left\{1- H_\gamma\left(\frac{x-u}{\sigma(u)}\right)\right\},\quad x > u,
\eeam 
and $\tilde F(x) = F(x)$ if $x \leq u$.
In addition, by inverting this approximation, we obtain an approximation of the quantile function above level $\tauzero = F(u)$ as 
\beam \label{eq:quantile:tail:approximation}
\tilde Q(\tau) = u  + \frac{\sigma(u)}{\gamma} \left\{ \left(\frac{1-\tauzero}{1-\tau}\right)^{\gamma} - 1 \right\}, \quad \tau > \tauzero,
\eeam 
and $\tilde Q(\tau) = Q(\tau)$ for $\tau \leq \tauzero$. 
In order to compute these extrapolations in a statistical application,
 we replace the intermediate quantile $ u= Q(\tauzero)$ and the GPD parameters $\sigma = \sigma(u)$ and $\gamma$ with suitable estimators. The approximations above then allow us to estimate tail probabilities and quantiles beyond the data range. Numerous practically relevant distributions satisfy the first-order conditions that we introduce here; see Section~\ref{subsec:examples:tails}. 
  The statistical aspects will be discussed in Section~\ref{sec:stat_univariate}.

\begin{remark}\label{rem:sigma_U}
	In the case $\gamma=0$ the scale function satisfies
	$\lim_{u \to \infty} \sigma(u)/u = 0;$
	see Appendix~\ref{sec:app:1} and \citet[Lemma 1.2.9]{dehaan:ferreira:2006}.
	A weak additional assumption needed later is that there exists $\delta > 0$ such that
	\begin{align}\label{ass_aU}
		\lim_{u \to \infty}	\frac{\sigma(u) }{u}
	   \left\{- \log \bar F(u) \right\}^{\delta} = 0.
   \end{align}	
	This assumption holds for almost all practically relevant distributions. 
\end{remark}

In order to derive sharp statistical guarantees and to obtain uniformly valid approximations, it is customary to make a second-order assumption that quantifies the rate of convergence in~\eqref{eq:domain:attraction:1}. 
We say that a distribution $F$ satisfies the second-order domain of attraction condition if there exists $\Sigma : \mathbb{R} \to \mathbb{R}$, not changing sign eventually, such that $\lim_{u \to \infty} \Sigma(u) = 0$, and a second-order parameter $\rho \leq 0$, such that there exists a non-zero function $h_{\gamma,\rho}$ with
\beam \label{eq:second:order:intro}
\lim_{u \to \infty} \frac{1}{\Sigma(u) }\left\{ \frac{ \bar F( u + x\sigma(u)) }{ \bar F (u)} -  \bar H_{\gamma}(x)\right\} = h_{\gamma,\rho}(x), \quad x \geq 0,
\eeam 
where the explicit form of possible limit functions $h_{\gamma,\rho}$ is given in~\eqref{eq:hrho}.
Intuitively, this second-order condition quantifies the bias in the approximation~\eqref{eq:domain:attraction:1}, where  smaller values of $\rho$ lead to better rates of convergence; in particular $\Sigma$ is regularly varying with index $\rho \gamma$.
We refer to Appendix~\ref{sec:appendix} for a review on domain of attraction conditions.

A direct consequence of the domain of attraction conditions is that the approximation error vanishes in the tail, both for extrapolation 
of probabilities and quantiles. The second-order condition quantifies how far extrapolation is possible.

\begin{lemma}\label{lem:uniform:tail:approximation}
	Let $Y$ be a random variable with distribution function $F$ with infinite upper endpoint that satisfies the first-order domain of attraction condition with shape parameter $\gamma \geq 0$ and scale function $\sigma$.
	Then, for any sequences $\theta(\tau_0) $ and $\nu(u) $ such that
		\begin{align}\label{eq:extra:condition}
			\lim_{u \to \infty} \left\{ \frac{\bar F(u)}{\bar F(\nu(u))} \right\} = \lim_{\tau_0 \to 1}  \left\{ \frac{1-\tau_0}{1-\theta(\tau_0)} \right\} < \infty,
		\end{align}
		the tail and quantile tail extrapolations in~\eqref{eq:tail:approximation:tilde:1} and~\eqref{eq:quantile:tail:approximation} satisfy
	\begin{align} \label{eq:f:tilde:approx} 
		&\lim_{u \to \infty} 
		\sup_{ { u <   x < \nu(u) } }
		\Big| \frac{1-\tilde F(x)}{1-F(x)} - 1 \Big| = 0, \\
 		& \lim_{\tauzero  \to 1} 
		\sup_{ \tauzero < \tau < \theta(\tau_0)}   \Big|\frac{ \tilde Q(\tau)}{  Q(\tau) }  - 1\Big| = 0,    \label{eq:U:tilde:approx}
		\end{align}
		respectively.
	In addition, if $Y$ satisfies the second-order domain of attraction condition with auxiliary function $\Sigma$ and second-order parameter $\rho \leq  0$, 
	then \eqref{eq:f:tilde:approx} and \eqref{eq:U:tilde:approx} also hold for all sequences $\nu(u)$ and $\theta(\tau_0)$ such that 
	 there exists $\eta > 0$ with 
	\begin{align}\label{eq:extra:condition:rho1}
		\lim_{u \to \infty} \Sigma(u)\left\{ \frac{\bar F(u)}{\bar F(\nu(u))} \right\}^{\rho + \eta} = \lim_{\tau_0 \to 1} \Sigma\{Q(\tau_0)\} \left\{ \frac{1-\tau_0}{1-\theta(\tau_0)} \right\}^{\rho + \eta} = 0.
	\end{align}
	In particular, if $\rho<0$ we may always choose $\nu(u) =\infty$ and $\theta(\tauzero) = 1$.
\end{lemma}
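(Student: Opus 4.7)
The plan is to decompose the lemma into its tail and quantile statements and prove each by reducing to uniform convergence theorems for (rapidly or regularly) varying functions from \citet{dehaan:ferreira:2006}. After the change of variables $t = (x-u)/\sigma(u)$, the region $\{u < x < \nu(u)\}$ becomes $\{0 < t < t_u\}$ with $t_u := (\nu(u) - u)/\sigma(u)$, and on the quantile scale $\tau = F(u + t\sigma(u))$ gives the analogous reduction for the interval $(\tau_0, \theta(\tau_0))$.

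\textbf{First-order statements.} I would first argue that condition~\eqref{eq:extra:condition} forces $t_u$ to stay bounded. When $\gamma > 0$, the regular variation of $\bar F$ with index $-1/\gamma$ combined with $\sigma(u)\sim\gamma u$ implies $\bar F(\nu(u))/\bar F(u) \to \bar H_\gamma(t_\infty)$ for some finite $t_\infty \geq 0$; when $\gamma = 0$ the same conclusion comes from Remark~\ref{rem:sigma_U}. Writing
\[
\frac{1-\tilde F(x)}{1-F(x)} = \frac{\bar H_\gamma(t)}{\bar F(u+t\sigma(u))/\bar F(u)},
\]
the first-order condition~\eqref{eq:domain:attraction:1} gives pointwise convergence of the denominator to the continuous, strictly decreasing limit $\bar H_\gamma(t)$, and Pólya's theorem promotes this to uniform convergence on $[0, t_\infty]$; this yields~\eqref{eq:f:tilde:approx}. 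For~\eqref{eq:U:tilde:approx} I would use the inverse form of~\eqref{eq:domain:attraction:1}, namely uniform convergence of $\{Q(\tau) - u\}/\sigma(u)$ to $\bar H_\gamma^{-1}((1-\tau_0)/(1-\tau))$ on compacts \citep[Theorem 1.1.6]{dehaan:ferreira:2006}, and compare directly with $\tilde Q(\tau) - u = \sigma(u)\,\bar H_\gamma^{-1}((1-\tau_0)/(1-\tau))$.

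\textbf{Second-order statements.} Here the key input is the quantitative expansion
\[
\frac{\bar F(u+t\sigma(u))}{\bar F(u)} - \bar H_\gamma(t) = \Sigma(u)\, h_{\gamma,\rho}(t)\bigl(1 + o(1)\bigr),
\]
which, after dividing by $\bar F(x)/\bar F(u)$, gives the relative-error bound
\[
\left|\frac{1-\tilde F(x)}{1-F(x)} - 1\right| \lesssim \Sigma(u)\left(\frac{\bar F(u)}{\bar F(u+t\sigma(u))}\right)^{\rho+\eta}(1+o(1)),
\]
using Potter-type inequalities for $h_{\gamma,\rho}/\bar H_\gamma$, whose dominant growth is polynomial with exponent $\rho$, so that any fixed cushion $\eta > 0$ absorbs the lower-order terms. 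Taking the supremum over $t \in [0, t_u]$ and using monotonicity of $\bar F$ bounds the second factor by $\{\bar F(u)/\bar F(\nu(u))\}^{\rho+\eta}$, which is exactly~\eqref{eq:extra:condition:rho1}. The quantile version follows by the same inversion argument as in the first-order case. Finally, when $\rho<0$, the function $\Sigma$ is regularly varying with negative index $\rho\gamma\le 0$ and in particular vanishes; choosing $\eta \in (0,-\rho)$ makes $\rho+\eta<0$, so the product in~\eqref{eq:extra:condition:rho1} tends to $0$ even with $\nu(u)=\infty$ and $\theta(\tau_0)=1$ under the convention $\infty^{-\delta}=0$.

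\textbf{Main obstacle.} The technically delicate step is the uniform control of the remainder $o(\Sigma(u))$ in the second-order expansion when $t_u \to \infty$. This requires Potter-type inequalities for $h_{\gamma,\rho}$ with the correct growth exponent, and the cases $\gamma>0$ vs $\gamma=0$ behave differently, with the Gumbel case $\gamma=0$ additionally relying on the strengthened scale-bound~\eqref{ass_aU}. The rest of the argument is essentially bookkeeping of growth exponents; the cushion $\eta>0$ in~\eqref{eq:extra:condition:rho1} is precisely what makes these Potter-type bounds go through uniformly, and is therefore the conceptual reason for its appearance.
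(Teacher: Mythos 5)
Your proposal is essentially correct and for the first-order statements takes a genuinely different route from the paper. The paper's proof works entirely on the tail quantile scale $U(t)=Q(1-1/t)$, writes the error term as $\frac{U(t_0y)-U(t_0)}{a(t_0)}-\frac{y^\gamma-1}{\gamma}$ with $y=\bar F(u)/\bar F(x)\in[1,C]$, and invokes the Potter-type uniform inequality in Proposition~\ref{lema:potters:bounds} to control this difference. You instead argue that condition~\eqref{eq:extra:condition} forces the normalized boundary $t_u=(\nu(u)-u)/\sigma(u)$ to stay in a fixed compact, and then upgrade the pointwise convergence in~\eqref{eq:domain:attraction:1} to uniform convergence via P\'olya's theorem; since $\bar H_\gamma$ is bounded below on the compact, the ratio converges uniformly to one. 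This is a valid and somewhat softer argument than the paper's for the first-order part, because it avoids the Potter machinery entirely. For the second-order statements your sketch matches the paper's method in spirit — both invoke the uniform second-order bound of \citet[Theorem 2.3.6]{dehaan:ferreira:2006} (the paper's Theorem~\ref{eq:thm:second:order:bounds}), where the cushion $\eta>0$ appears exactly as you describe — so there the routes are the same.

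Two inaccuracies are worth flagging. First, the boundedness of $t_u$ in the $\gamma=0$ case does not follow from Remark~\ref{rem:sigma_U} (the statement $\sigma(u)/u\to0$); it follows, just as in the $\gamma>0$ case, from the domain of attraction condition~\eqref{eq:domain:attraction:1} together with monotonicity of $\bar F$: if $t_{u_k}\to\infty$ along a subsequence, then $\bar F(\nu(u_k))/\bar F(u_k)\le\bar F(u_k+T\sigma(u_k))/\bar F(u_k)\to\bar H_\gamma(T)$ for every fixed $T$, hence the ratio tends to $0$, contradicting~\eqref{eq:extra:condition}. Second, in your ``main obstacle'' paragraph you suggest the Gumbel case relies on the scale bound~\eqref{ass_aU}; it does not. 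That condition plays no role in Lemma~\ref{lem:uniform:tail:approximation} or its proof — it is only invoked in the proof of Theorem~\ref{thm:main} to handle the relaxed remainder $r(x^*)=O\{\log(x^*)\}$ when $\gamma_Y=0$. Confusing the hypotheses of the lemma with those of the downstream theorem is a meaningful misattribution even if the lemma itself goes through.
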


\begin{remark}\label{rem:lower_tail}
	In this section we only discussed the upper tail of $Y$. The lower tail can be treated in an analogous way by replacing $Y$ by $-Y$, yielding potentially different first- and second-order tail parameters $\gamma^-$ and $\rho^-$, respectively.
\end{remark}

\subsection{Distributions satisfying first- and second-order conditions}\label{subsec:examples:tails}

While first- and second-order domain of attraction conditions might seem technical, it is important to note that they are satisfied for most of
the commonly used distributions. In this section we discuss a few classes of examples for later reference, 
and we show that all functions and coefficients can be computed explicitly; for a more detailed review we refer to \cite{fraga:gomes:dehaan:neves:2007}.
We start with classes of exponential and Pareto distributions, which correspond to the case of no approximation bias in~\eqref{eq:domain:attraction:1} 
for $\gamma =0$ and $\gamma >0$, respectively.

\begin{example}[Exponential distribution]\label{ex:exponential:distribution}
	The exponential distribution $F(x)  = 1 - e^{-\lambda x}$ for $x >0$ and $\lambda > 0$, satisfies first- and second-order conditions for $\sigma(u) \equiv 1$, $\gamma = 0$, and $\rho = - \infty$. Lemma~\ref{lem:uniform:tail:approximation} then holds with $\nu(u) = \infty$.
\end{example}

\begin{example}[Pareto distribution]
	The Pareto$(\alpha)$ distribution $F(x) = 1 - x^{-\alpha}$ for $x > 1$ and $\alpha > 0$, satisfy first- and second-order conditions for $\sigma(u) = u/\gamma $, $\gamma = 1/\alpha$, and $\rho = - \infty$.
	Lemma~\ref{lem:uniform:tail:approximation} then holds with $\nu(u) = \infty$.
\end{example}

The interpretation of a second-order parameter $\rho = -\infty$ is that the
right-hand side of~\eqref{eq:second:order:intro} is zero for all $x\geq  0$.
This is an idealized case that typically does not occur in practice. The normal distribution, for instance, has a very slow convergence rate. 

\begin{example}[Normal distribution]
	The standard normal distribution satisfies 
	\[F(x) = 1 - e^{-x^2/2}\{x^{-1} - x^{-3} + o(x^{-3})\}/\sqrt{2\pi }, \quad x\to \infty.\] 
	One can show that 
	$\sigma(u) = 1/u$, $\Sigma(u) = O(1/u^2)$, 
	and $\gamma = \rho = 0$;
	 see Example 1.1.7 and Exercise 2.9 in \citet{dehaan:ferreira:2006}. 
	With the choice $\nu(u) = O(u^{1+\kappa})$, 
	for any fixed  $\kappa > 0$, Lemma~\ref{lem:uniform:tail:approximation} holds. 
\end{example}
We observe that in this case, uniform extrapolation of tail probabilities and quantiles is only possible until a level that is determined by the second-order behavior. A similar result holds for the larger class of Weibull distributions.

\begin{example}[Weibull distribution]\label{example:weibull}
	The family of Weibull distributions $F(x) = 1 - e^{-\lambda x^\tau},$ $x\geq 0$, with $\lambda, \tau > 0$, satisfies first- and second-order conditions with 
	$\sigma(u) = 1/(\lambda \tau u^{\tau - 1})$, $\Sigma(u) = O(1/u^\tau)$, and $\gamma= \rho =0$.
	Lemma~\ref{lem:uniform:tail:approximation} holds with $\nu(u) = O(u^{1+\kappa})$, for any $\kappa > 0$.
\end{example}

\begin{example}[Hall-type distribution]\label{ex:hall}
	A well-known class of heavy-tailed distributions with $\gamma > 0$ are
	Hall-type distributions defined by their expansion of the tail quantile function $U(t) = Q(1-1/t)$, $t\geq 1$ as 
	\begin{equation}
		U(t) = c t^{\gamma}\left( 1 + c_1t^{-\beta_1} + c_2 t^{-\beta_2} + \dots \right),
	\end{equation}
	where $c, \gamma >0$, $c_i \geq 0$, $i \in \mathbb N$, and $0 \leq \beta_1 < \beta_2 < \dots$. 
	All Hall-type distributions satisfy the second-order condition \citep[][Section 3.3.2]{beirlant:goegebeur:teugels:segers:dewaal:ferro:2004}.
	A member of this class is the Burr distribution $F(x) = 1 - \eta^\lambda (\eta + x^\tau)^{-\lambda}$ for $\lambda, \tau , \eta > 0$ and $x \geq 0$. It has first- and second-order parameters $\gamma = \tau \lambda$ and $\rho = -1/\lambda < 0$, respectively, and
	 $\sigma(u) = \eta \left( 1 + u^\tau/\eta\right)^{1/\lambda} /(\lambda \tau u^{(\tau -1)}) $, $\Sigma(u) = O(1/u^\tau)$. Lemma~\ref{lem:uniform:tail:approximation} holds with $\nu(u) = \infty$.
\end{example}

\section{Regression extrapolation principle}\label{sec:extrapolation_principle}

\subsection{Modeling assumption}\label{sec:model_assumption}

Regression modeling studies the influence of a predictor variable $X$ on the target response variable $Y$. We derive in this section a general extrapolation principle for the median regression function
\beam \label{eq:ocnd:median} 
	m(x) = \M(Y \mid X = x), \qquad x\in\mathbb R,
\eeam 
that enables the prediction beyond the range of observed predictors.
Throughout, we assume that during training, we observe the pair $(X,Y)$ under a given training distribution $\Ptr$. Our approach does not require knowledge of the test distribution. Instead, we first focus on the point-wise error at a new, deterministic test point $x\in\mathbb R$. We adopt the standard assumption from the covariate shift literature that the conditional distribution $Y\mid X=x$ remains the same at both training and test time.
This assumption allows us to infer the behavior of $m(x)$ at test points based on the observed dependence between $X$ and $Y$ in the training data.  
Our primary interest lies in extrapolation, where the test point $x$ falls outside the range of the training data, that is, as $x \to \pm \infty$. Consequently, we must analyze the relationship between predictors and the response in the tail regions, which are sparsely represented in the training set.

We assume $X$ and $Y$ admit continuous distribution functions $F_X$ and $F_Y$, respectively, and we study the random variables $X^*$ and $Y^*$ transformed to have Laplace margins with distribution function $F_L(x) = \exp(x)/2$ for $x<0$ and $F_L(x) = 1 - \exp(-x)/2$ for $x\geq 0$. 
  We obtain the standardized variables through the probability integral transform,
\begin{align} \label{eq:laplace:trans}
X^* = Q_L \circ F_X(X), \qquad Y^* = Q_L \circ F_Y(Y),
\end{align}
where $Q_L$ is the quantile function of the Laplace
distribution $F_L$. Since monotone functions can be pulled out of the median, the regression function can then be rewritten as
\begin{align}\label{f_rewritten} 
	m(x) 
	&= Q_Y \circ F_L\{ \text{median}(Y^* \mid X^* = x^* ) \}, 
\end{align}
where $x^*= Q_L \circ F_X(x)$ is the transformation of the point of interest~$x$. Note that the transformation $Q_Y \circ F_L$ takes the data from Laplace scale to the original scale of the response.
Instead of regressing the median on the original scale, we may therefore fit a regression model on the transformed scale $(X^*, Y^*)$. 
The fitted median function can then be transformed back to the original scale to obtain a prediction of the original median function $m$.

Our approach leverages the fact that 
by standarizing margins, the extremal relation between the transformed variables $X^*$ and $Y^*$ simplifies, even for complex regression functions $m$. 
To illustrate this, consider the additive noise model $Y= f(X) + \varepsilon$ with $\M(\varepsilon) =0$, and note that $m=f$.
Inverting formula~\eqref{f_rewritten} we obtain for the median function on transformed scale
\beao 
  m^*(x^*) := \M( Y^* \mid X^* = x^*) =  Q_L \circ F_Y \circ f\{ Q_X \circ F_L (x^*)\}.
\eeao 
If the model is noiseless and $f$ is a strictly monotone function, we can compute straightforwardly the distribution of $Y$ as $F_Y = F_X \circ f^{-1}$ and plug it in in the above expression. A direct calculation then yields $ m^*(x^*) = x^*$. 
We generalize this argument in Section~\ref{sub:sec:examples} and show that when the extremal dependence between $X$ and $Y$ is strong, then the conditional median of standardized Laplace margins is well approximated by a first-order linear term as $x^* \to \infty$. 
This motivates the following assumption.

\begin{assumption}\label{main_assumption_text}
	Let $X$ and $Y$ be random variables with $X^*$ and $Y^*$ as defined in~\eqref{eq:laplace:trans}.	
	Assume there exists $a\in[-1,1]$, $\beta \in [0,1)$, $b \in \mathbb R$,	with $\beta=0$ if $|a|=1$, and $\beta, b \neq0$ if $a=0$, 
	and a rate function $r: \mathbb{R} \to \mathbb{R}$ such that the approximation of the median on transformed scale behaves as
    \begin{align}\label{convergence:median:1}
		| \M( Y^* \mid X^* = x^*) - a x^*  -  (x^*)^\beta b  |  = r(x^*),
	\end{align}
	and the rate function $r^*$ satisfies $r(x^*) = o(1)$ as $x^*\to\infty$.
\end{assumption}

In the case where $Y$ satisfies a domain of attraction conditions~\eqref{eq:domain:attraction:1} with $\gamma_Y = 0$, we will often relax this assumption to $r(x^*) = O\{\log(x^*)\}$. For positive shape parameter $\gamma_Y>0$ we always require that the approximation converges to the true conditional median. 
The case $|a| = 1$ corresponds to strong dependence between $X$ and $Y$ for large values of the predictor.
More generally, if $a > 0$, there is positive tail association between $X$ and $Y$ and we have $m^*(x^*) \to \infty$ as $x^* \to \infty$. On the other hand, for $a< 0$, there is negative association and $m^*(x^*)  \to - \infty$, as $x^* \to \infty$.
For $a=0$, tail dependence between response and predictor is weak and the conditional median on transformed scale is quantified by the sub-linear term with parameter $\beta$. 
Assumption~\ref{main_assumption_text} is inspired by multivariate regular variation in extreme value theory for $|a|=1$ \citep{resnick:2007}, and by the work of \cite{heffernan:tawn:2004} for general $a\in[-1,1]$ who used a similar assumption in dependence modelling of copulas.

While at first sight, Assumption~\ref{main_assumption_text} seems very strong, it is indeed surprisingly mild. Indeed, in Section~\ref{sub:sec:examples}, we show that it is satisfied by a wide range of  regression models, including non-linear additive and pre-additive noise models.

\subsection{Extrapolation principle}
\label{sec:extrapolation:principle}

Representation~\eqref{f_rewritten} and Assumption~\ref{main_assumption_text} 
motivate an extrapolation principle for the conditional median for large values $x$ of the predictor space. 
We define the \texttt{progression} approximation above the quantile at probability level $\tau_0\in(0,1)$ as
\begin{align}\label{extr_pinciple}
	\tilde m_{\tauzero}(x) =
	\tilde Q_Y \circ F_L\{ a \tilde x^* + (\tilde x^*)^\beta b \}, \quad x > Q_X(\tau_0),
\end{align}
 where $\tilde x^* = Q_L \circ \tilde F_X(x)$ and we typically omit the dependence on $\tau_0$ in $\tilde m = \tilde m_{\tau_0}$ for notational simplicity.
 The functions $\tilde F_X$ and $\tilde Q_Y$ denote the GPD approximations of the distribution function of $X$ and quantile function of $Y$, respectively; see Section~\ref{sec:first_order}.
Figure~\ref{fig:intro} in the introduction illustrates this extrapolation principle. The right panel shows simulated data (grey points) from and additive noise model $Y = f(X) + \varepsilon$ for a particular regression function $f$ (orange line). In regions with few or no training data such as in the upper tail of the predictor $X$, predictions of non-parametric or machine learning methods would be very poor. 
The left panel shows the same data transformed to the Laplace scale.
In the bulk of the training data, no particular structure is apparent, but for predictor values $x^*$ larger than a threshold $u_{X}^* =  Q_L(\tauzero)$, the regression on this scale in fact seems to be linear (blue line). Transforming this approximation back to the original scale yields an accurate extrapolation of the (highly non-linear) regression function $f$ above the corresponding threshold $u_X = Q_X(\tauzero)$.
We discuss here the case where the predictor value $x$ of interest is out-of-distribution since it is larger than most or all training samples, that is, $x > u_X$ for a high in-sample level.
In practice, the level $u_X =  Q_X(\tauzero)$ is typically chosen relative to the training sample size $n$ by setting $\tauzero = 1-k/n$ for some tuning parameter $k<n$; see Section~\ref{sec:stat_univariate}
for statistical details. 
Extrapolation at the lower end of the training data where $x < l_X = Q_X(1-\tauzero)$ is done analogously.

Our extrapolation principle for the conditional median in~\eqref{extr_pinciple} involves three layers of approximations of the target quantities in~\eqref{f_rewritten}. 

\begin{itemize}
	\item[(i)] Assumption~\eqref{convergence:median:1} motivates the approximation of the conditional median on Laplace scale by $m^*(x^*) \approx  a x^* + (x^*)^\beta b$, as $x^* \to \infty$. 
	\item[(ii)] Computing $x^*$ in practice requires the unknown distribution function $F_X$ in the tail region above the threshold $u_X$. Since there are only few data points above this level, we use the semi-parametric GPD approximation $\tilde F_X$ in \eqref{eq:tail:approximation:tilde:1} to obtain the approximate transformed predictor value $ \tilde x^* \approx  x^*$.
 	\item[(iii)] If $a > 0$, the regressed median will be in the upper tail of the Laplace distribution of~$Y^*$. Therefore, the back-transformation to the original margin $F_Y$ requires the GPD approximation $\tilde Q_Y$ to have an accurate extrapolation to quantiles beyond $u_Y = Q_Y(\tauzero)$. If $a < 0$, a similar approximation holds for the lower tail of $F_Y$.
\end{itemize}

Our main result provides bounds on the relative error between $\tilde m$ and $m$ under suitable conditions that control the biases in the different approximations.

\begin{theorem}\label{thm:main}
	Let $X$ and $Y$ be random variables with infinite upper endpoint satisfying the domain of attraction condition \eqref{eq:domain:attraction:1} 
	with shape parameters $\gamma_X, \gamma_Y \geq 0$. 
	Under Assumption~\ref{main_assumption_text} with $a\in[0,1]$
	the relative error of between the approximation $\tilde m(x)$ in~\eqref{extr_pinciple} and the true median function $m(x) = \M(Y \mid X = x)$  
	vanishes uniformly for large predictors, that is, 
	\begin{align}\label{main_conv}
		\lim_{\tau_0 \to 1} \sup_{
	   \substack{
			Q_X(\tau_0)  < x < \nu(\tau_0) 
				}  } \left|\frac{\tilde m(x)}{ m(x)} - 1 \right| = 0. 
	\end{align}
	for any extrapolation boundary $\nu(\tau_0) = \nu\{ Q_X(\tau_0)\}$ satisfying \eqref{eq:extra:condition} for $F_X$. 
	If $\gamma_Y = 0$, then this statement remains true even if $r(t) = O\{\log(t)\}$ in Assumption~\ref{main_assumption_text} and the mild assumption~\eqref{ass_aU} holds for $F_Y$.

	Moreover, if $X$ and $Y$ satisfy the second-order domain of attraction condition in \eqref{eq:second:order:intro} for auxiliary functions $\Sigma_X, \Sigma_Y$ and parameters $\rho_X, \rho_Y \leq 0$, respectively, then~\eqref{main_conv} holds for any  sequence $\nu(\tau_0)$ satisfying~\eqref{eq:extra:condition:rho1} for $F_X$ and such that there exists $\eta > 0$ with
\begin{align}\label{eq:extra:condition:rho1:thm}
	\lim_{\tau_0 \to 1} \Sigma_Y \{Q_Y(\tau_0) \}\left\{
	\frac{ 1-\tau_0 }{ 1 - \theta_X(\tau_0)} \right\}^{\rho_Y + \eta} = 0,
\end{align}
where $\theta_X(\tau_0) = F_X\{\nu(\tau_0)\}$ is the probability level to where we can extrapolate. 

	If $a\in[-1,0)$, the same results hold with first- and second-order parameters of the upper tail of $Y$ replaced by those of its lower tail; see Remark~\ref{rem:lower_tail}.
\end{theorem}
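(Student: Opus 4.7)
The plan is to decompose the relative error in line with the three approximation layers listed after~\eqref{extr_pinciple}. With $x^* = Q_L\circ F_X(x)$, $\tilde x^* = Q_L\circ \tilde F_X(x)$, and
\begin{align*}
v = m^*(x^*), \qquad u = a\tilde x^* + (\tilde x^*)^\beta b,
\end{align*}
the representation~\eqref{f_rewritten} together with the definition in~\eqref{extr_pinciple} gives
\begin{align*}
\frac{\tilde m(x)}{m(x)} \;=\; \frac{\tilde Q_Y(F_L(u))}{Q_Y(F_L(u))}\cdot \frac{Q_Y(F_L(u))}{Q_Y(F_L(v))}.
\end{align*}
The first factor is the back-transformation error for $Y$ (step (iii)); the second bundles the transformation error on $X$ (step (ii)) with the conditional median approximation from Assumption~\ref{main_assumption_text} (step (i)), converted into a shift in the argument of $Q_Y\circ F_L$.

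I would first show that the predictor transformation error is uniformly negligible: the Laplace identity $x^* = -\log\{2\bar F_X(x)\}$ yields $x^* - \tilde x^* = \log\{(1-\tilde F_X(x))/\bar F_X(x)\}$, which tends to $0$ uniformly on the admissible window by~\eqref{eq:f:tilde:approx} of Lemma~\ref{lem:uniform:tail:approximation} applied to $F_X$ under~\eqref{eq:extra:condition} (or its sharper version~\eqref{eq:extra:condition:rho1}). Combined with Assumption~\ref{main_assumption_text} this gives
\begin{align*}
u - v = a(\tilde x^* - x^*) + b\{(\tilde x^*)^\beta - (x^*)^\beta\} - r(x^*),
\end{align*}
which is uniformly $o(1)$ in the general case and at most $O(\log x^*)$ in the relaxed light-tailed setting. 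The first factor is then controlled by part~\eqref{eq:U:tilde:approx} of Lemma~\ref{lem:uniform:tail:approximation} applied to $F_Y$ at the level $\tau_u = F_L(u)$; admissibility (either~\eqref{eq:extra:condition} or~\eqref{eq:extra:condition:rho1:thm}) is verified by a direct rate comparison, since $u = -\log\{2(1-\tau_u)\}$ is asymptotic to $a x^* \asymp -a\log\bar F_X(x)$ when $a>0$ (and to $b(x^*)^\beta$ when $a=0$), so for $a\in[0,1]$ the admissibility on the $Y$-side is implied by the assumed admissibility of $\nu$ for $X$.

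The main obstacle is the second factor $Q_Y(F_L(u))/Q_Y(F_L(v))$, which involves the true quantile function of $Y$ rather than its GPD surrogate. Setting $t_u = 1/(1-\tau_u)$, $t_v = 1/(1-\tau_v)$ so that $t_u/t_v = e^{u-v}$, the task reduces to $U_Y(t_u)/U_Y(t_v)\to 1$ for $U_Y(t) = Q_Y(1-1/t)$. When $\gamma_Y > 0$, $U_Y$ is regularly varying of index $\gamma_Y$ and Potter's bounds yield $U_Y(t_u)/U_Y(t_v) \to 1$ as soon as $u - v \to 0$, which forces $r(x^*) = o(1)$ in that regime. When $\gamma_Y = 0$, the de Haan representation $U_Y(tx) - U_Y(t) = \sigma\{U_Y(t)\}\log x + o(\sigma\{U_Y(t)\})$ combined with $\sigma(u)/u \to 0$ from Remark~\ref{rem:sigma_U} gives
\begin{align*}
\log\frac{U_Y(t_u)}{U_Y(t_v)} = O\!\left(\frac{\sigma\{U_Y(t_v)\}}{U_Y(t_v)}\,|u-v|\right),
\end{align*}
and the strengthened decay~\eqref{ass_aU} absorbs the slack $|u-v| = O(\log x^*)$ via the chain $v \asymp x^* \asymp -\log\bar F_Y\{U_Y(t_v)\}$, so the right-hand side vanishes.

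Finally, the case $a\in[-1,0)$ is symmetric: passing from $Y$ to $-Y$ as in Remark~\ref{rem:lower_tail} exchanges upper and lower tails, and the same three-factor argument applies verbatim with $\gamma_Y^-$ and $\rho_Y^-$ in place of $\gamma_Y$ and $\rho_Y$.
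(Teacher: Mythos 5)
Your decomposition
\[
\frac{\tilde m(x)}{m(x)} \;=\; \frac{\tilde Q_Y(F_L(u))}{Q_Y(F_L(u))}\cdot \frac{Q_Y(F_L(u))}{Q_Y(F_L(v))}
\]
is exactly the paper's, just written multiplicatively (with $t_u = t_0\tilde z$ and $t_v = t_0 z$ in the paper's notation), and your treatment of the first factor, the $X$-marginal error, and the $\gamma_Y>0$ case all match. The gap is in the second factor in the light-tailed case $\gamma_Y=0$ with the relaxed rate $r(t)=O(\log t)$. You invoke the de Haan representation $U_Y(tx)-U_Y(t) = a_Y(t)\log x + o\{a_Y(t)\}$ and conclude that
\[
\log\frac{U_Y(t_u)}{U_Y(t_v)} = O\!\left(\frac{a_Y(t_v)}{U_Y(t_v)}\,|u-v|\right).
\]
But the $o\{a_Y(t)\}$ remainder in that representation is locally uniform only for $x$ in compact sets, whereas here $x=e^{u-v}$ with $|u-v| = O(\log x^*)\to\infty$, so $x$ grows polynomially in $x^*$. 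What the uniform (Potter-type) bound of Proposition~\ref{lema:potters:bounds} actually gives is a remainder of order $a_Y(t)\max\{x^\eta,x^{-\eta}\}$, i.e.\ an extra factor $e^{\eta|u-v|}=(x^*)^{O(\eta)}$ that dominates the linear term $|u-v|$. Your claimed rate is strictly too strong — a simple consistency check is that under your bound the theorem would need only $\sigma_Y(u)\log\log(\cdot)/u\to 0$, which already follows from $\sigma_Y(u)/u\to 0$ and would render hypothesis~\eqref{ass_aU} superfluous; but~\eqref{ass_aU} is genuinely needed, precisely because the Potter correction is polynomial in $x^*$. The fix is to use Proposition~\ref{lema:potters:bounds} in place of the pointwise de Haan expansion, write the resulting bound as $c\,(x^*)^{C\eta}\,a_Y(t)/U_Y(t)$, and then choose $\eta$ small enough (relative to the $\delta$ in~\eqref{ass_aU}, after dividing by the constant $C$ coming from the $O(\log x^*)$ rate of $r$) so that $(m^*(x^*))^{C\eta}\,a_Y(t)/U_Y(t)\to 0$ with $t = 2e^{m^*(x^*)}$. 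The rest of your argument, including the verification that the admissibility conditions for $\tau_u$ on the $Y$-side follow from~\eqref{eq:extra:condition} or~\eqref{eq:extra:condition:rho1:thm}, is sound.
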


	The bound in~\eqref{main_conv} is on the relative error. Extrapolation is a difficult problem, and under such light assumptions as in Theorem~\ref{thm:main}, we believe that error bounds on absolute or squared errors are only possible by further restricting the extrapolation limit $\nu(\tauzero)$. In view of classical results in extreme value theory on relative erorrs for extrapolation of quantiles \citep[][Theorem 4.3.1]{dehaan:ferreira:2006}, this is not surprising.

The scope of our extrapolation principle~\eqref{extr_pinciple} is restricted by the sequence $\nu(\tauzero)$ in Theorem~\ref{thm:main}, which tells us how far we can reliably extrapolate the conditional median function.
This sequence is constrained by the first order conditions on $X$. 
Additionally assuming second-order conditions on the tails of $X$ and $Y$, we can relax the conditions on $\nu(\tauzero)$ and require \eqref{eq:extra:condition:rho1:thm}. 
This allows us to extrapolate further in the domain of $X$. 
In particular, if both $\rho_X, \rho_Y<0$, then we can always choose $\nu(\tau_0) = \infty$, which means that the extrapolation guarantee~\eqref{main_conv} holds on the whole interval $(Q_X(\tauzero), \infty)$.
We note that condition~\eqref{eq:extra:condition:rho1:thm} is different from classical assumptions in extreme value theory such as in Lemma~\ref{lem:uniform:tail:approximation}. It combines second-order conditions on the predictor and the response to make it suitable for regression extrapolation.

Theorem~\ref{thm:main} only addresses extrapolation of the median function in the upper tail of $X$. We can use the same principle for the lower tail to obtain an extrapolation on both sides of the training domain in the predictor space.
To this end, let $m_X$ denote the median of $X$ under the training distribution $\Ptr$ and define $(X^-, Y^-)$ and $(X^+, Y^+)$ as $(X,Y)$ conditioned on the events $\{X \leq m_X\}$ and $\{X > m_X\}$, respectively.
For the lower tail of $X$ we can directly apply Theorem~\ref{thm:main} by considering the vector $(-X^-, Y^-)$, with possibly different parameters for the domain of attraction conditions and in Assumption~\ref{main_assumption_text}.
Considering $Y^+$ and $Y^-$ avoids that the tails of $Y$ are dominated by the behavior at either very small or large predictor values. Indeed, with this setup, Assumption~\ref{main_assumption_text} is more easily satisfied for both lower and upper tail of $X$.
 This extends the approximation $\tilde m(x)$ in~\eqref{extr_pinciple} in an obvious way to values $x < Q_X(1-\tauzero)$.
More precisely, we define the both-sided semi-parametric approximation of the conditional mean function by $\tilde m(x) = m(x)$ for $x \in [Q_X(1-\tauzero), Q_X(\tauzero)]$, since in this region, non-parametric methods have classical guarantees \citep[e.g.,][Section 10]{dev1996} and produce reliable predictions. For any $x\in\mathbb R$ below or above this interval, $\tilde m(x)$ is defined by the approximation~\eqref{extr_pinciple} from the extrapolation principle with parameters determined by the distributions of $(X^-, Y^-)$ and $(X^+, Y^+)$, respectively. The following result is a corollary of Theorem~\ref{thm:main}.

\begin{corollary}\label{cor:main}
	Assume that $(X^+, Y^+)$ and $(-X^-, Y^-)$
	satisfy the assumptions of Theorem~\ref{thm:main}. Denote the lower and upper endpoints for extrapolation by $\nu^-(\tau_0)$ and $\nu^+(\tau_0)$, respectively. 
	Let $\tilde m = \tilde m_{\tauzero}$ be the both-sided \texttt{progression} approximation at level $\tauzero\in(0,1)$ as discussed above. Then
	\begin{align}\label{main_conv_both_sided}
		\lim_{\tau_0 \to 1} \sup_{
	   \substack{
			x \in [\nu^-(\tau_0), \nu^+(\tau_0)]}}
			 \left|\frac{\tilde m(x)}{m(x)} - 1 \right| = 0.
	\end{align}
\end{corollary}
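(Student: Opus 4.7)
The strategy is to decompose the supremum over $[\nu^-(\tau_0),\nu^+(\tau_0)]$ into three regions matching the piecewise definition of the both-sided approximation, namely the lower extrapolation window $[\nu^-(\tau_0),Q_X(1-\tauzero)]$, the central window $[Q_X(1-\tauzero),Q_X(\tauzero)]$, and the upper extrapolation window $[Q_X(\tauzero),\nu^+(\tau_0)]$. Since the supremum of a non-negative function over a union is the maximum of the suprema, it suffices to show that each piece tends to zero.

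On the central window, the both-sided approximation is defined by $\tilde m(x) = m(x)$, so $|\tilde m(x)/m(x) - 1| = 0$ identically, and the contribution of this region to the supremum is zero for every $\tauzero$. For the upper window, I would exploit the identity $m(x) = \M(Y\mid X=x) = \M(Y^+\mid X^+ = x)$ valid for every $x > m_X$, which holds because conditioning on the event $\{X > m_X\}$ does not change the conditional law of $Y$ given $X = x$ when $x$ lies in the conditioning set. Likewise, $\tilde m(x)$ in this range coincides by construction with the one-sided \texttt{progression} approximation built from the distributions of $X^+$ and $Y^+$, since the tail approximations $\tilde F_X$ and $\tilde Q_Y$ above the threshold $Q_X(\tauzero)$ are determined by the same upper-tail behaviour as those of $X^+$ and $Y^+$ (conditioning on a positive-probability event rescales but does not alter the first- and second-order tail characteristics).

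With these identifications, I would apply Theorem~\ref{thm:main} to the pair $(X^+,Y^+)$ at probability level $\tau_0' = 2\tauzero - 1$, which is chosen precisely so that $Q_{X^+}(\tau_0') = Q_X(\tauzero)$; as $\tauzero \to 1$ we have $\tau_0' \to 1$, and the extrapolation boundary $\nu^+(\tau_0)$ inherited from Corollary~\ref{cor:main} matches the boundary provided by Theorem~\ref{thm:main} for $(X^+, Y^+)$ under the stated assumptions. This yields
\begin{align*}
\lim_{\tauzero \to 1} \sup_{Q_X(\tauzero) < x < \nu^+(\tau_0)} \left| \frac{\tilde m(x)}{m(x)} - 1 \right| = 0.
\end{align*}
The lower window is treated symmetrically: applying Theorem~\ref{thm:main} to $(-X^-,Y^-)$ (so that ``upper tail'' in that theorem corresponds to the lower tail of $X$), and using the change of variables $x \mapsto -x$, gives the analogous vanishing supremum over $[\nu^-(\tau_0),Q_X(1-\tauzero)]$.

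\textbf{Main obstacle.} The only non-routine step is the bookkeeping that justifies replacing $m$ by $m^\pm$ and $\tilde m$ by the corresponding one-sided \texttt{progression} approximation inside each extrapolation window, and verifying that the probability levels and extrapolation sequences translate consistently between the unconditioned and conditioned distributions. Once this identification is in place, the assertion follows immediately from Theorem~\ref{thm:main} applied separately to $(X^+,Y^+)$ and $(-X^-,Y^-)$, and from the fact that a maximum of finitely many null sequences is null.
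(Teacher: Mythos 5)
Your proof is correct and fills in the bookkeeping that the paper omits: the paper presents Corollary~\ref{cor:main} as an immediate consequence of Theorem~\ref{thm:main} without a separate proof, and your three-region decomposition (with the observation that the central bulk contributes nothing since $\tilde m = m$ there, and the identity $\M(Y\mid X=x) = \M(Y^+\mid X^+ = x)$ for $x>m_X$) is precisely the implied argument. Your explicit level translation $\tau_0' = 2\tau_0-1$ for the conditioned pair is a detail the paper leaves implicit, and it is handled correctly since $\tau_0'\to 1$ as $\tau_0\to 1$.
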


The convergence in~\eqref{main_conv_both_sided} is a uniform statement on the point-wise performance of the extrapolation.
It is stronger than a corresponding result on the expected test error under a new distribution, as often used in the machine learning literature on out-of-distribution generalization under covariate shifts \citep[e.g.,][]{ma:pathak:wainwright:2023}.
Indeed, our result implies a bound on the expected error under 
a test distribution $\Qtesttau$ satisfying 
\begin{align}\label{Qtest}
	\mathbb P_{\Qtesttau}\left(X \in  [\nu^-(\tau_0), \nu^+(\tau_0)]\right) = 1.
\end{align}
Note that $\Qtesttau$ can have support beyond the training observations, with boundaries given by the extrapolation limits at level $\tauzero$.

\begin{theorem}\label{thm2}
	Let $\tilde m = \tilde m_{\tauzero}$ be the both-sided \texttt{progression} approximation at level $\tauzero\in(0,1)$, and let $\Qtesttau$ be a test distribution satisfying~\eqref{Qtest}. Under the assumptions of Corollary~\ref{cor:main}, the expected relative extrapolation test error vanishes, that is,
	\begin{align}
		\mathbb E_{\Qtesttau} \left( \frac{\tilde m_{\tauzero}(X)- m(X)}{m(X)} \right) \to 0, \qquad \tauzero \to 1.
	\end{align}
\end{theorem}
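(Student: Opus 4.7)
The plan is to derive Theorem~\ref{thm2} as a direct consequence of the uniform pointwise bound established in Corollary~\ref{cor:main}. Since $\Qtesttau$ is supported on $[\nu^-(\tau_0), \nu^+(\tau_0)]$ by assumption~\eqref{Qtest}, the expectation of the relative error involves only values of $x$ lying in the interval where we already have a uniform approximation guarantee for $\tilde m_{\tauzero}$. The task is therefore to turn a pointwise uniform bound into a bound on the expectation.

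First I would apply Jensen's inequality (or, more directly, monotonicity of the integral) to pass the absolute value inside the expectation:
\begin{align*}
\left| \mathbb E_{\Qtesttau} \left( \frac{\tilde m_{\tauzero}(X)- m(X)}{m(X)} \right) \right|
\le \mathbb E_{\Qtesttau} \left| \frac{\tilde m_{\tauzero}(X)}{m(X)} - 1 \right|.
\end{align*}
Next, using the support assumption~\eqref{Qtest}, I would bound the integrand pointwise by its supremum over the extrapolation interval,
\begin{align*}
\mathbb E_{\Qtesttau} \left| \frac{\tilde m_{\tauzero}(X)}{m(X)} - 1 \right|
\le \sup_{x \in [\nu^-(\tauzero), \nu^+(\tauzero)]} \left| \frac{\tilde m_{\tauzero}(x)}{m(x)} - 1 \right|,
\end{align*}
which holds $\Qtesttau$-almost surely because $X$ takes values in that interval.

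Finally I would invoke Corollary~\ref{cor:main}, whose conclusion states precisely that the right-hand side of the previous display converges to zero as $\tauzero \to 1$, so that the expected relative error vanishes as claimed.

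Since the argument is essentially an application of the tower of inequalities (triangle, then pointwise supremum) combined with the already-established uniform convergence, there is no genuine obstacle in the proof. The substantive content, including the handling of the marginal extrapolation biases on $X$ and $Y$ and the linear approximation in Assumption~\ref{main_assumption_text}, has all been absorbed into Corollary~\ref{cor:main}; Theorem~\ref{thm2} simply repackages that uniform guarantee into an integrated statement that is more convenient for machine-learning-style covariate-shift formulations. The only mild point worth remarking is that uniform convergence of the relative error makes the bound valid for \emph{every} test distribution $\Qtesttau$ satisfying~\eqref{Qtest}, without any further assumption such as boundedness of a likelihood ratio with respect to $\Ptr$.
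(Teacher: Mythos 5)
Your proof is correct and coincides with what the paper intends: the authors state that Theorem~\ref{thm2} "is a direct consequence of Corollary~\ref{cor:main}" and omit the argument, and your chain of inequalities (triangle inequality for the integral, then bounding the integrand by the supremum over $[\nu^-(\tau_0),\nu^+(\tau_0)]$ using the support assumption~\eqref{Qtest}, then invoking the uniform convergence of Corollary~\ref{cor:main}) is exactly that direct consequence.
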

The proof is a direct consequence of Corollary~\ref{cor:main} and is therefore omitted.

\section{Models}\label{sub:sec:examples}

The extrapolation guarantees for the \texttt{progression} approximation in Theorem~\ref{thm:main} rely on assumptions on the marginal tail behaviors of $F_X$ and $F_Y$, and on the parametric approximation of the conditional mean function on the Laplace scale in Assumption~\ref{main_assumption_text}.
In this section we show that for a wide range of situations these conditions are satisfied. We mainly focus on the classical additive noise model, since this is the most widely used assumption in regression. Our approach covers also other modeling assumptions, such as the pre-additive noise model introduced in~\cite{shen:meinshausen:2023} particularly for extrapolation.

Since in Section~\ref{subsec:examples:tails} we have discussed that first- and second- order domain of attraction conditions hold for most relevant distributions, the main difficulty typically is to verify the median approximation \eqref{convergence:median:1}. The results rely on convolution properties of random variables, which for general distributions are highly non-trivial and require case-by-case analysis \citep[e.g.,][]{EOW2019}. 
Also other examples that are not studied here will likely satisfy \eqref{convergence:median:1}, even if analytical verification can be difficult.
The results and examples of this section motivate that the \texttt{progression} method can be applied in practice in diverse range of scenarios for regression extrapolation.

\subsection{Additive noise models}\label{sec:add_noise}

The most widely used modeling assumption in non-parametric regression is that the response $Y$ is generated by adding an independent noise term to the signal $f(X)$, where $f: \mathbb R \to \mathbb R$ is the regression function that is applied to the univariate predictor $X$. This additive noise model is therefore specified by the equation
\begin{align}\label{add_noise}
	Y = f(X) + \eps,
\end{align} 
together with training distributions $\P_X$ and $\P_\eps$ of the predictors and the noise term, respectively. Without losing generality, we assume in the sequel that $\M(\eps) = 0$, which implies that the conditional median $m(x) = f(x)$ for all $x\in\mathbb R$.

The validity of Assumption~\eqref{main_assumption_text} for the additive noise model depends on the interplay between regularity of the regression function $f(x)$ for large values of $x$,
 and the distributions of predictor $X$ and noise $\eps$. Roughly speaking, 
 the noise $\eps$ can not have a much heavier tail than the signal $f(X)$.
  This is certainly a natural requirement  in regression tasks. 
  The most extreme case is that the noise term is equal to zero; this was discussed in Section~\ref{sec:model_assumption}.
   The next lemma gives an extension of this result to more realistic situations where the noise does not substantially alter the tail of the response. This will be helpful in several examples later on.
\begin{lemma}\label{lem:median:equation}
	Let $X$ and $Y$ be random variables with distribution functions $F_X$ and $F_Y$, respectively, admitting an infinite upper endpoint. Assume that for the median $m(x) = \M( Y \mid X = x)$ it holds that $ \bar F_Y \circ m(x) \sim (1+c) \bar F_X(x) $, as $x \to \infty$.  Then,
	\beao 
	|\M( Y^* \mid X^* = x^*) - x^* - \log\{1/(1+c)\}| \to 0, \quad x^* \to \infty.
	\eeao 
\end{lemma}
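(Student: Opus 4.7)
The plan is to exploit two elementary facts: medians commute with strictly monotone transformations, and in the upper tail the Laplace quantile function admits the explicit form $Q_L(u) = -\log\{2(1-u)\}$ for $u \in (1/2,1)$. Together, these two observations will turn the multiplicative tail equivalence in the hypothesis into an additive constant on the Laplace scale.

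First, since $Q_L \circ F_Y$ is strictly increasing and $F_X$ is continuous, the event $\{X^* = x^*\}$ is equivalent to $\{X = x\}$ where $x = Q_X \circ F_L(x^*)$. Pulling the monotone transformation out of the median yields
\begin{equation*}
\M(Y^* \mid X^* = x^*) = Q_L \circ F_Y(m(x)).
\end{equation*}
As $x^* \to \infty$, the infinite-upper-endpoint assumption on $F_X$ forces $x \to \infty$, so by the hypothesis $\bar F_Y(m(x)) \sim (1+c)\bar F_X(x) \to 0$, and the infinite-upper-endpoint assumption on $F_Y$ then forces $m(x) \to \infty$. In particular this also shows $1+c > 0$, so the logarithm in the claim is well-defined.

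Next, for $x^*$ sufficiently large, both $F_L(x^*) > 1/2$ and $F_Y(m(x)) > 1/2$, so the explicit upper-tail formula for $Q_L$ applies to both sides. From $\bar F_X(x) = \bar F_L(x^*) = \tfrac{1}{2}e^{-x^*}$ we read off $x^* = -\log\{2\bar F_X(x)\}$, and the hypothesis gives $\bar F_Y(m(x)) = (1+c)\bar F_X(x)(1+o(1))$ as $x^* \to \infty$. Combining,
\begin{equation*}
Q_L \circ F_Y(m(x)) = -\log\{2\bar F_Y(m(x))\} = -\log\{2\bar F_X(x)\} - \log(1+c) - \log(1+o(1)) = x^* + \log\{1/(1+c)\} + o(1),
\end{equation*}
which is precisely the claimed asymptotic after subtracting $x^* + \log\{1/(1+c)\}$ and taking absolute values.

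The main ``difficulty'' is really only bookkeeping: once the monotonicity reduction in the first step is in place and one has the explicit logarithmic form of $Q_L$ in the upper tail, the argument is a one-line computation. The only subtle point is ensuring $m(x) \to \infty$ so that we may genuinely use the upper-tail expression for $Q_L \circ F_Y$, and this is handled by combining the tail-equivalence hypothesis with the infinite-upper-endpoint assumption on $F_Y$. Since the conclusion is a pointwise statement in $x^*$, no uniform-convergence or integrability issue arises.
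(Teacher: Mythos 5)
Your proof is correct and follows essentially the same route as the paper's: pull the monotone transformation $Q_L\circ F_Y$ out of the conditional median, argue from the tail equivalence and the infinite upper endpoints that $m(x)\to\infty$ so the explicit upper-tail form $Q_L(u)=-\log\{2(1-u)\}$ applies, and then expand the logarithm to convert the multiplicative tail equivalence into the additive constant $\log\{1/(1+c)\}$. (Your sign is right; the paper's displayed derivation has a small sign typo in the middle lines, but the lemma statement agrees with your computation.)
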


For extrapolation of the regression function $f$ beyond its training domain, it is inevitable to make certain regularity assumptions. \cite{shen:meinshausen:2023}, for instance, consider only monotone functions in their pre-additive noise model framework; see Section~\ref{engression}. \cite{pfister:buhlman:2023} assume that $f$ is several times differentiable and that derivatives are not more extreme in the test than in the training domain. We would like to avoid assumptions on differentiability and, when possible, on monotonicity. Instead we often only require $f$ to be regularly varying; a property that does not have to hold only at the boundary of the domain.
\begin{definition}
	A  measurable function $f$ is regularly varying with tail index $\alpha_f \in \mathbb{R}$ if for all $x > 1$, $\lim_{t \to \infty} f(tx)/f(t) = x^{\alpha_f}.$ In this case we write $f \in RV(\alpha_f)$.
\end{definition}

A regularly varying function $f$ with index $\alpha_f > 0$ can be thought of as behaving like a polynomial at $\infty$, since we may write
\[ f(x) = L(x) x^{-\alpha_f}, \quad x\in \mathbb R,\]
where $L:\mathbb R\to\mathbb R$ is a slowly varying function, 
that is, a regularly varying function with tail index $\alpha_L=0$.
Examples for $L$ include, for instance, constant functions, $\log(1+x)$ and $\{\log(1+x)\}^p$, for $p >0,$ and $x >0$. 
According to Corollary~\ref{cor:main}, we can treat extrapolation at the lower and upper
 boundaries of the predictor space separately. In the sequel, 
 without mentioning explicitly, 
 we will therefore only consider the part $(X^+, Y^+)$ of the model; 
 results for the lower tail model $(-X^-, Y^-)$ can be found analogously. 
 Without loss of generality, we assume that $\M(X) = 0$
 so that we only have to define $f$ on the domain $[0,\infty)$.

\subsubsection{Regularly varying tails: $\gamma_X > 0$}\label{sec:regularly:v:tails}

We start with the case where $X$ satisfies the domain of attraction 
condition with shape parameter $\gamma_X > 0$. 
This class of heavy-tailed distributions corresponds to so-called regularly varying tails, that is, whose survival functions are regularly varying with positive index $\alpha_X = 1/\gamma_X > 0$ and thus decay with a polynomial rate. 
Convolutional properties of such distributions are substantially easier than 
those for lighter tails, and therefore we can allow for more general
 noise distributions and regression functions.

 \begin{proposition}\label{example:asymptotic:dependence}
	Assume the additive noise model~\eqref{add_noise} with continuous, regularly varying regression function $f \in\text{RV}(\alpha_f)$ with index $\alpha_f > 0$.
	Suppose that $X$ satisfies the domain of attraction condition with shape parameter $\gamma_X > 0$. 
	Assume the noise $\varepsilon$ does not have heavier tails than the signal
	 $f(X)$, that is, there exists $c_\varepsilon \in [0, \infty)$ such that 
	 $\P(\varepsilon > x) \sim c_\varepsilon \P(f(X)>x),$ as $x\to\infty$. 
	Then, $\bar F_Y \circ m \sim (1+c_\varepsilon)\bar F_X$, and the relative error between the approximation 
	\begin{align*} 
	\tilde m(x) =  
	\tilde Q_Y \circ F_L[\tilde x^* + \log\{1/(1+c_\varepsilon)\}], \quad Q_X(\tauzero) <  x < \nu(\tauzero),
	\end{align*}
	and the true median function $m(x)$ converges to zero in the sense of~\eqref{main_conv}, with extrapolation boundary $\nu(\tauzero)$ corresponding to the first-order domain of attraction conditions.
\end{proposition}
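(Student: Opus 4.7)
The plan is to verify the hypothesis of Lemma~\ref{lem:median:equation} with $c=c_\varepsilon$, which immediately translates into Assumption~\ref{main_assumption_text} holding with $a=1$, $\beta=0$, and $b=\log\{1/(1+c_\varepsilon)\}$, and then to invoke Theorem~\ref{thm:main} to conclude the uniform vanishing of the relative error up to the extrapolation boundary $\nu(\tauzero)$.

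I would proceed in three short steps. \emph{Step 1 (median identity).} Since $\varepsilon$ is independent of $X$ and $\M(\varepsilon)=0$, the additive structure gives $m(x)=f(x)$. \emph{Step 2 (tail of $Y$).} Because $\bar F_X$ is regularly varying with index $-1/\gamma_X<0$ and $f\in\mathrm{RV}(\alpha_f)$ with $\alpha_f>0$, composition rules for regularly varying functions yield $\bar F_{f(X)}\in\mathrm{RV}(-1/(\gamma_X\alpha_f))$. The hypothesis $\bar F_\varepsilon\sim c_\varepsilon \bar F_{f(X)}$ combined with the classical convolution closure for independent summands with matching regularly varying tail then gives
\begin{equation*}
\bar F_Y(y) \sim \bar F_{f(X)}(y) + \bar F_\varepsilon(y) \sim (1+c_\varepsilon)\,\bar F_{f(X)}(y), \quad y \to \infty.
\end{equation*}
\emph{Step 3 (evaluate at $y=m(x)=f(x)$).} Using the pointwise identity $\bar F_{f(X)}(f(x))\sim\bar F_X(x)$, one deduces $\bar F_Y\circ m(x)\sim(1+c_\varepsilon)\bar F_X(x)$, which is exactly the hypothesis of Lemma~\ref{lem:median:equation}. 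The lemma then delivers
\begin{equation*}
\M(Y^*\mid X^*=x^*) = x^* + \log\{1/(1+c_\varepsilon)\} + o(1), \quad x^*\to\infty,
\end{equation*}
so Assumption~\ref{main_assumption_text} holds with $a=1$, $\beta=0$, $b=\log\{1/(1+c_\varepsilon)\}$, and $r(x^*)=o(1)$. Step 2 additionally gives $\gamma_Y=\gamma_X\alpha_f>0$, so $F_Y$ lies in the domain of attraction, and Theorem~\ref{thm:main} applies and concludes the proof.

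The main technical hurdle is the identity $\bar F_{f(X)}(f(x))\sim\bar F_X(x)$ used in Step 3. When $f$ is eventually monotone, this is immediate via $\{f(X)>f(x)\}=\{X>x\}$ for large $x$, but in the general continuous regularly varying case the set $\{x':f(x')>f(x)\}$ may strictly contain $(x,\infty)$, and one must show that the excess preimage has $\bar F_X$-measure negligible compared to $\bar F_X(x)$. The way I would handle this is to pass to an asymptotically equivalent nondecreasing version $\tilde f$ with $f(x)/\tilde f(x)\to 1$, whose existence is guaranteed by the representation theorem for regularly varying functions, and then to invoke the uniform convergence theorem for regular variation to control the discrepancy $\bar F_X\circ f^{\leftarrow} - \bar F_X\circ \tilde f^{\leftarrow}$ on $\{f(x),\infty\}$. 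The remaining ingredients, the convolution closure of regular variation and the invocation of Lemma~\ref{lem:median:equation} and Theorem~\ref{thm:main}, are routine.
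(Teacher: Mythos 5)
Your proposal follows essentially the same route as the paper's proof of Proposition~\ref{example:asymptotic:dependence}: establish $\bar F_{f(X)}\in \mathrm{RV}(-\alpha_X/\alpha_f)$, invoke convolution closure (Feller's lemma) to get $\bar F_Y\sim(1+c_\varepsilon)\bar F_{f(X)}$, show $\bar F_Y\circ f\sim(1+c_\varepsilon)\bar F_X$, then apply Lemma~\ref{lem:median:equation} and Theorem~\ref{thm:main}. You also correctly identify the one real technical subtlety — that $f$ need not be monotone, so $\bar F_{f(X)}(f(x))\sim\bar F_X(x)$ is not automatic — and your proposed fix (replace $f$ by an asymptotically equivalent strictly monotone $\bar f$ and control the discrepancy via regular variation) is precisely what the paper's Lemma~\ref{lem:regular:variation:fx} does via \citet[Proposition~2.6(vii)]{resnick:2007} and Potter-type bounds.
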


 The key assumptions for extrapolation in the above results are the first-order 
 regularity conditions on the distribution functions of $X$ and $f$.
  In this case, the distribution of the noise $\varepsilon$ can be arbitrary 
  as long as it is not too heavy-tailed. 
We now consider examples satisfying the assumptions of Proposition~\ref{example:asymptotic:dependence}.

\begin{example}
	Suppose that $X$ follows a Pareto distribution $\P(X > x) = x^{-\alpha_X}$ for some $\alpha_X > 0$, and that $\varepsilon$ satisfies the domain of attraction condition~\eqref{eq:domain:attraction:1} with $\gamma_\varepsilon =0$.
	Let $f\in\text{RV}(\alpha_f)$ be a regularly varying function with $\alpha_f >0$ 
	Then $Y= f(X) + \varepsilon$ satisfies the assumptions 
	in Proposition~\ref{example:asymptotic:dependence}.
\end{example}

The class of noise distributions with $\gamma_\varepsilon =0$ contains many important examples, such as normal distributions, Laplace distributions and most other light-tailed distributions.

\begin{example}
	Suppose that $X$ and $\eps$ follow a Pareto distributions  
	with tail indices $\alpha_X>0$ and $\alpha_\varepsilon>0$, respectively, and let 
	$f\in\text{RV}(\alpha_f)$ be a regularly varying function with $\alpha_f >0$. 
	If $ 0< \alpha_X/\alpha_f < \alpha_\varepsilon$, then $Y= f(X) + \varepsilon$ 
	satisfies the assumptions in Proposition~\ref{example:asymptotic:dependence}.
\end{example}

Note that in the above example, $f(X)$ has also a regularly varying tail with shape
 parameter
 $\gamma_{f(X)} = \alpha_X / \alpha_f$; see Lemma~\ref{lem:regular:variation:fx} 
 in the Appendix.
  In the class of regularly varying distributions, 
  a smaller tail index indicates a substantially heavier tail.
   The condition $\alpha_X/\alpha_f < \alpha_\varepsilon$ therefore reflects 
   the intuition that the noise has a lighter tail than the signal in this example.

 \begin{remark}
	The extrapolation boundary $\nu(\tauzero)$ in~\eqref{main_conv} is determined by the first-order conditions on $X$ and $Y$. Guarantees for extrapolation on a larger interval can be obtained  by showing that $X$ and $Y$ satisfy second-order conditions.
	For instance, if $X$ and $\varepsilon$ verify the second-order conditions, arguments similar to \cite{geluk:dehaan:resnick:starica,geluk:peng:vries:casper:2000}
	would imply that this property is also inherited by the response $Y= f(X) + \varepsilon$.
	We do not investigate this further since it would add another layer of technical complexity.
 \end{remark}

 \subsubsection{Subexponential tails for $\gamma_X = 0$}\label{sec:gumbel:subexponential}

The case where the domain of attraction condition~\eqref{eq:domain:attraction:1}  is satisfied with $\gamma_X=0$ includes a range of light-tailed distributions. In this section we consider for $X$ the class of subexponential distributions, which are on the heavier end of this spectrum.
The noise is assumed to have a comparable or lighter tail. For a review on subexponential distribution and their properties we refer to \cite{embrechts:kluppelberg:mikosch:2013} and \cite{cline:1986}.

\begin{definition}\label{def:subexponential}
A distribution $F$ is subexponential if for two independent realizations $X, X^\prime$ of $F$, 
\[
\lim_{t \to \infty} \frac{\P(X + X^\prime > t)}{2\P(X > t)} = 1.
\]	
\end{definition}
By Feller's Lemma \cite[Lemma 1.3.1]{embrechts:kluppelberg:mikosch:2013}, regularly varying distributions are one example of subexponential distributions with $\gamma_X> 0$. 
 The following lemma provides a subset of subexponential distributions contained in the domain of attraction with $\gamma_X = 0$. 
 
 \begin{lemma}\label{ex:exp:chi:subexponential}
	Consider a distribution $F(x) =1 - \exp\{- \chi(x) \}$, for $x \in \mathbb{R}$, where $\chi$ is a strictly monotone regularly varying function with index $ \alpha_\chi \in (0,1)$.  Then, $F$ is a subexponential distribution. 
	Moreover, if $\chi$ verifies the von Mises condition $-{\chi^{\prime \prime} (x) }/{({\chi^\prime}(x))^2} \to 0$ as $x \to \infty$ then $F$ is satisfies the domain of attraction condition with $\gamma = 0$. 
\end{lemma}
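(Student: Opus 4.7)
The plan is to verify the two claims of the lemma separately: first the subexponentiality of $F$, then that the von Mises condition places $F$ in the Gumbel domain of attraction with $\gamma=0$.

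\textbf{Subexponentiality.} Writing $\bar F(x) = e^{-\chi(x)}$ with $\chi \in \text{RV}(\alpha_\chi)$, $\alpha_\chi\in(0,1)$, I would first establish the long-tail property $\bar F(x-y)/\bar F(x) \to 1$ for each fixed $y>0$. Since this ratio equals $\exp(\chi(x)-\chi(x-y))$, it suffices to show $\chi(x) - \chi(x-y) \to 0$. By the smooth variation theorem, I replace $\chi$ by an asymptotically equivalent smoothly varying function of the same index; then $\chi'\in \text{RV}(\alpha_\chi-1)$, and $\chi'(x)\to 0$ because $\alpha_\chi-1<0$. The mean value theorem then gives $\chi(x)-\chi(x-y) = y\,\chi'(\xi_x)\to 0$.

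To pass from long-tailedness to subexponentiality, I would apply Pitman's criterion to the hazard rate $q(x)=\chi'(x)$, which is eventually monotone decreasing to zero. By Karamata's theorem, $xq(x) \sim \alpha_\chi\,\chi(x)$, and hence the required integrability condition
\begin{equation*}
\int^{\infty} e^{xq(x)} f(x)\,dx \;\lesssim\; \int^{\infty}\chi'(x)\, e^{-(1-\alpha_\chi)\chi(x)}\,dx < \infty
\end{equation*}
holds because $1-\alpha_\chi>0$ and $\chi(x)\to\infty$. This yields subexponentiality of $F$ and recovers the classical Weibull-type result; see e.g.\ Section~1.4 of \cite{embrechts:kluppelberg:mikosch:2013}.

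\textbf{Domain of attraction.} The density of $F$ is $f(x)=\chi'(x)\,e^{-\chi(x)}$, so $\bar F(x)/f(x) = 1/\chi'(x)$, whose derivative equals $-\chi''(x)/\chi'(x)^2$ and tends to zero by hypothesis. This is precisely the classical von Mises sufficient condition for membership in the Gumbel domain of attraction, and consequently $F$ satisfies the first-order domain of attraction condition with $\gamma=0$.

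\textbf{Main obstacle.} The main technical hurdle lies in the subexponentiality argument. Although the conclusion is well known for Weibull-type tails, passing from regular variation of $\chi$ (without assumed differentiability) to the smooth setting required by Pitman's criterion, and then verifying the integrability condition, requires some care. The Gumbel-domain part follows immediately once the correct form of the von Mises condition is recognized.
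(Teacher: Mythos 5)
Your approach to the subexponentiality claim is genuinely different from the paper's: you propose Pitman's criterion based on the hazard rate $q = \chi'$, whereas the paper verifies the two hypotheses of \citet[Theorem 3]{cline:1986} (a Potter-type upper bound on $\chi(yt)/\chi(t)$ and an integrability condition on $\exp\{-(2-2^\rho)\chi\}$). The crucial advantage of the paper's route is that Cline's criterion is stated entirely in terms of $\chi$ and requires no density, which matches the hypothesis of the lemma: for the subexponentiality claim the lemma assumes only strict monotonicity and regular variation of $\chi$, not differentiability (the second-derivative condition enters only for the domain-of-attraction claim).

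The step you yourself flag as the ``main obstacle'' is in fact a genuine gap. Replacing $\chi$ by a smoothly varying $\tilde\chi$ with $\tilde\chi\sim\chi$ does not produce a distribution tail-equivalent to $F$. Subexponentiality is stable under tail equivalence $\bar F(x)/\bar G(x)\to c\in(0,\infty)$, but here
\[
\frac{\bar F(x)}{e^{-\tilde\chi(x)}} \;=\; \exp\{\tilde\chi(x)-\chi(x)\},
\]
and the smooth variation theorem controls only the ratio $\tilde\chi/\chi$, not the difference $\tilde\chi-\chi$. Since $\chi(x)\to\infty$, that difference can drift or oscillate without bound, so subexponentiality of $e^{-\tilde\chi}$ does not transfer to $F$. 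The same objection applies to your long-tail step: $\tilde\chi(x)-\tilde\chi(x-y)\to 0$ does not yield $\chi(x)-\chi(x-y)\to 0$. To close the argument as written you would have to assume $\chi$ is differentiable with eventually monotone derivative, which is stronger than the lemma's hypothesis; otherwise you need a criterion such as Cline's that operates directly on $\chi$. The Pitman/Karamata calculation itself (using $x\chi'(x)\sim\alpha_\chi\chi(x)$ and $\alpha_\chi<1$) is correct in the smooth case, and the domain-of-attraction part of your proof is fine and essentially identical to the paper's, which also invokes the classical von Mises condition via \citet[Example 3.3.24]{embrechts:kluppelberg:mikosch:2013}.
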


If the predictor $X$ follows a subexponential distribution as in the above lemma, then the next proposition gives a general result when the extrapolation principle is satisfied.

\begin{proposition}\label{example:asymptotic:dependence2}
	Let $X$ have a subexponential distribution as in Lemma~\ref{ex:exp:chi:subexponential} with index $\alpha_\chi  \in (0,1)$, and  
	let $f \in\text{RV}(\alpha_f)$ be strictly monotone with tail index $\alpha_f  > 0$.
	Assume there exists $c_\varepsilon \in [0, \infty)$ such that $\P(\varepsilon > x) \sim c_\varepsilon \P(f(X)>x),$ as $x\to\infty$, and suppose that 
	\[\alpha_\chi / \alpha_f < 1.\]
	Assume also that for $\chi$ and $\chi \circ f^{-1}$ satisfy the von Mises condition.
	 Then, $\bar F_Y \circ m \sim (1+c_\varepsilon)\bar F_X$, and the relative error between the approximation 
	 \begin{align*} 
	 \tilde m(x) =  
	 \tilde Q_Y \circ F_L[\tilde x^* + \log\{1/(1+c_\varepsilon)\}], \quad Q_X(\tauzero) <  x < \nu(\tauzero),
	 \end{align*}
	 and the true median function $m(x)$ converges to zero in the sense of~\eqref{main_conv}.
\end{proposition}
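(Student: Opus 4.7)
The plan is to show that the proposition reduces, via Lemma~\ref{lem:median:equation}, to the verification of (i) the first-order domain-of-attraction condition for $Y$ with $\gamma_Y = 0$, and (ii) the tail equivalence $\bar F_Y \circ m \sim (1+c_\varepsilon)\bar F_X$ at $\infty$. Once these two facts are in hand, Lemma~\ref{lem:median:equation} supplies Assumption~\ref{main_assumption_text} with $a=1$, $\beta=0$ and $b = \log\{1/(1+c_\varepsilon)\}$, and the conclusion follows directly from Theorem~\ref{thm:main}.

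The first step is to analyze the law of the signal $f(X)$. Because $f$ is strictly monotone (and regularly varying with positive index, so eventually increasing), one has for large $x$
\begin{equation*}
\bar F_{f(X)}(x) = \P(X > f^{-1}(x)) = \exp\bigl\{-(\chi \circ f^{-1})(x)\bigr\}.
\end{equation*}
Since $f^{-1} \in RV(1/\alpha_f)$ and $\chi \in RV(\alpha_\chi)$, the composition $\chi \circ f^{-1}$ lies in $RV(\alpha_\chi/\alpha_f)$, and by assumption the index $\alpha_\chi/\alpha_f$ belongs to $(0,1)$. Combined with the assumed von Mises condition on $\chi \circ f^{-1}$, Lemma~\ref{ex:exp:chi:subexponential} then yields that $f(X)$ is subexponential and in the Gumbel domain of attraction, i.e.\ satisfies~\eqref{eq:domain:attraction:1} with $\gamma_{f(X)} = 0$.

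The second step is a subexponential convolution argument. With $V := f(X)$ subexponential and the tail comparability $\P(\varepsilon > x) \sim c_\varepsilon\,\bar F_V(x)$, the classical result of Embrechts--Goldie--Veraverbeke (see \cite{embrechts:kluppelberg:mikosch:2013}, \cite{cline:1986}) gives
\begin{equation*}
\bar F_Y(x) = \P(V + \varepsilon > x) \sim (1 + c_\varepsilon)\,\bar F_V(x), \quad x \to \infty,
\end{equation*}
so $Y$ inherits both the Gumbel-domain membership ($\gamma_Y = 0$) and the von-Mises-type regularity from $f(X)$. Since $\M(\varepsilon) = 0$ and $f$ is strictly monotone, $m = f$ and therefore
\begin{equation*}
\bar F_Y \circ m(x) = \bar F_Y(f(x)) \sim (1+c_\varepsilon)\,\bar F_V(f(x)) = (1+c_\varepsilon)\,\bar F_X(x),
\end{equation*}
which is precisely the hypothesis of Lemma~\ref{lem:median:equation}. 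That lemma produces $\M(Y^* \mid X^* = x^*) = x^* + \log\{1/(1+c_\varepsilon)\} + o(1)$, i.e.\ Assumption~\ref{main_assumption_text} with $(a,\beta,b) = (1,0,\log\{1/(1+c_\varepsilon)\})$. Feeding the first-order conditions on $X$ and $Y$, together with this median expansion, into Theorem~\ref{thm:main} yields the uniform relative error bound~\eqref{main_conv}, and the closed form of $\tilde m$ stated in the proposition is simply the instantiation of~\eqref{extr_pinciple} with these parameters.

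The principal obstacle is the subexponentiality step: one must correctly compose the regular variation of $\chi$ with that of $f^{-1}$, check that the index remains in the subexponential window $(0,1)$, and then invoke the convolution closure to transport the subexponential/Gumbel structure from $f(X)$ to $Y = f(X) + \varepsilon$ despite $\varepsilon$ only being asymptotically tail-comparable. Everything else—the monotonicity gymnastics to rewrite $\bar F_Y(f(x))$, the application of Lemma~\ref{lem:median:equation}, and the invocation of Theorem~\ref{thm:main}—is essentially bookkeeping once this technical core is in place.
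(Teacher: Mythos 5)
Your proposal is correct and follows essentially the same route as the paper: transfer the subexponential/Gumbel structure from $X$ to $f(X)$ via the composition $\chi\circ f^{-1}\in RV(\alpha_\chi/\alpha_f)$ and Lemma~\ref{ex:exp:chi:subexponential}, apply the subexponential convolution closure (Cline's theorem) to obtain $\bar F_Y\sim(1+c_\varepsilon)\bar F_{f(X)}$ and hence $\bar F_Y\circ m\sim(1+c_\varepsilon)\bar F_X$, feed this into Lemma~\ref{lem:median:equation}, and invoke Theorem~\ref{thm:main}. The paper isolates the convolution-and-median step into a separate lemma (Lemma~\ref{lem:model2:subexpo}), but the argument and the cited tools are identical.
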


A large class of distributions $F_X$ satisfying the assumptions of Proposition~\ref{example:asymptotic:dependence2} are Weibull distributions. 
Recall that $X$ follows a Weibull distribution with shape and scale parameters $\tau,\lambda > 0$, respectively, if 
\[ F_X(x) = 1 - \exp\{- \lambda x^{\tau}\}, \quad x\geq 0,\]
and it is zero for $x<0$.
 We can now apply Proposition~\ref{example:asymptotic:dependence2} in the following examples. 

\begin{example}\label{ex:exponential}
Suppose that $X$ and $\varepsilon$ have Weibull distributions with parameters $\tau_X$ and $\tau_\varepsilon$, respectively. Let $f(x) =  x^{\alpha_f}$ for $x \in\mathbb R$, and $ 0 < \tau_X/\alpha_f < \min\{ \tau_\varepsilon, 1\}$. 
Then, $Y= f(X) + \varepsilon$ satisfies the assumptions of Proposition~\ref{example:asymptotic:dependence2}.
\end{example} 

Note that normal distributions do not fall in the Weibull class. We treat this case as a separate example.

\begin{example}\label{ex:Gaussian}
Let $X$ follow a normal distribution and assume  $\varepsilon$ follows a normal distribution or a Weibull distribution with parameters $\tau_\varepsilon$. 
 Let $f(x) =  x^{\alpha_f}$ for $x \in \mathbb R$,  and assume $ 0 < 2/\alpha_f< \min\{ \tau_\varepsilon, 1\}$  Then, $Y= f(X) + \varepsilon$ satisfies the assumptions of Proposition~\ref{example:asymptotic:dependence2}.
\end{example}

\subsubsection{Lighter tails for $\gamma_X=0$}\label{subsec:light:tails:gumbel}

In this section we now turn to the case where both $f(X)$ and $\varepsilon$ have lighter tails than subexponential. 
In this case, the contributions of the moderate values of both $f(X)$ and $\varepsilon$ become more important.  
For this reason, we require stronger assumptions on $X$, $\varepsilon$ and $f$ to verify the assumptions of Theorem~\ref{thm:main}.
Building on the previous work in \cite{bal1993}, and the overview in \cite{asmussen:hashorva:laub:taimre:2017}, we restrict our analysis to the following class. 
\begin{definition}
The distribution $F$ has Weibull-like tails if it admits a density $g$ such that
\beam\label{eq:den:assumptions:2}
g(x) \sim d \, x^{\eta + \tau - 1} \exp\{ - c x^{\tau}\}, \quad x \to \infty,  
\eeam	
for $\tau > 1$, $\eta \in \mathbb{R}$ and $c,d>0$.
\end{definition}

The next Proposition considers the case where $f(X)$ and $\varepsilon$ have equivalent tails.

\begin{proposition}\label{prop:weibull:light:tails}
	Assume the additive noise model~\eqref{add_noise} with $f(x) = x^{\alpha_f}$, for $x \in \mathbb R$.
	Let $X$ and $\varepsilon$ have Weibull-like tails as in~\eqref{eq:den:assumptions:2} with parameters indexed by the corresponding subscript. If $\tau_X/\alpha_f = \tau_\eps > 1$,
	the relative error between the approximation 
	\begin{align*} 
	\tilde m(x) =  
	\tilde Q_Y \circ F_L(a\tilde x^*), \quad Q_X(\tauzero) <  x < \nu(\tauzero),
	\end{align*}
	and the true median function $m(x)$ converges to zero in the sense of~\eqref{main_conv}, where $a = \theta^{\tau_\eps} +  c_{\varepsilon} c_X^{-1}  (1-\theta)^{\tau_\eps} \in (0,1)$ and  
	\beao 
	\theta = \{1 + (c_X/c_\eps)^{1/(\tau_\eps - 1)}\}^{-1}.
	\eeao
\end{proposition}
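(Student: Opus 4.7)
The plan is to reduce Proposition~\ref{prop:weibull:light:tails} to Theorem~\ref{thm:main} by verifying (i) that both $X$ and $Y$ satisfy the first-order domain of attraction condition with $\gamma_X = \gamma_Y = 0$ and the mild assumption~\eqref{ass_aU}, and (ii) Assumption~\ref{main_assumption_text} with slope $a \in (0,1)$, $\beta = 0$, $b = 0$, and remainder $r(x^*) = O(\log x^*)$, which is admissible since $\gamma_Y = 0$.

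\medskip
\noindent\textbf{Step 1: Tail of $f(X)$ and of $Y$ via saddlepoint.} Since $\M(\varepsilon) = 0$, the model~\eqref{add_noise} gives $m(x) = f(x) = x^{\alpha_f}$. The change of variables $W = X^{\alpha_f}$ shows that $W = f(X)$ has a Weibull-like density with tail exponent $\tau_X/\alpha_f = \tau_\varepsilon$ and scale $c_X$, so $Y = W + \varepsilon$ is a convolution of two independent Weibull-like variables with the \emph{same} exponent $\tau_\varepsilon>1$ but different scales. The classical saddlepoint/convolution asymptotics of \cite{bal1993} and \cite{asmussen:hashorva:laub:taimre:2017} then yield
\beao
\bar F_Y(y) \sim \tilde d\, y^{\tilde\eta}\exp\{-\tilde c\, y^{\tau_\varepsilon}\}, \qquad y\to\infty,
\eeao
where the combined rate $\tilde c = c_X\theta^{\tau_\varepsilon} + c_\varepsilon(1-\theta)^{\tau_\varepsilon}$ is the minimum of $g(\theta) := c_X\theta^{\tau_\varepsilon}+c_\varepsilon(1-\theta)^{\tau_\varepsilon}$ over $[0,1]$. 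Setting $g'(\theta)=0$ gives the explicit saddle point $\theta = \{1+(c_X/c_\varepsilon)^{1/(\tau_\varepsilon-1)}\}^{-1}$ stated in the proposition.

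\medskip
\noindent\textbf{Step 2: Laplace-scale asymptotics.} Using the Weibull-like density of $X$ and integration by parts,
\beao
 -\log\{2\bar F_X(x)\} = c_X x^{\tau_X} - \eta_X\log x + O(1), \qquad x\to\infty,
\eeao
so $x^* = Q_L\circ F_X(x) = c_X x^{\tau_X} + O(\log x)$. The analogous identity for $Y$ reads $Q_L\circ F_Y(y) = \tilde c\, y^{\tau_\varepsilon} + O(\log y)$. Substituting $y = m(x) = x^{\alpha_f}$ and using $\alpha_f\tau_\varepsilon = \tau_X$,
\beao
 m^*(x^*) = Q_L\circ F_Y(x^{\alpha_f}) = \tilde c\, x^{\tau_X} + O(\log x) = \frac{\tilde c}{c_X}\, x^* + O(\log x^*),
\eeao
which is the linear approximation with $a = \tilde c/c_X = \theta^{\tau_\varepsilon} + c_\varepsilon c_X^{-1}(1-\theta)^{\tau_\varepsilon}$. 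The bound $a\in(0,1)$ follows because $g(1)=c_X$, $g(0)=c_\varepsilon$, and $g'(1)>0$, $g'(0)<0$, so the minimum $\tilde c$ is attained in the interior and strictly below both boundary values; in particular $\tilde c < c_X$.

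\medskip
\noindent\textbf{Step 3: Verify the hypotheses of Theorem~\ref{thm:main} and conclude.} The Weibull-like tails of both $X$ and $Y$ satisfy the first-order domain of attraction condition~\eqref{eq:domain:attraction:1} with $\gamma_X = \gamma_Y = 0$ (a direct consequence of their Weibull-like form, cf.\ Example~\ref{example:weibull}); the mild assumption~\eqref{ass_aU} for $F_Y$ holds since $\sigma_Y(u)/u = O(u^{-\tau_\varepsilon})$ while $-\log\bar F_Y(u) = O(u^{\tau_\varepsilon})$. Step~2 then yields Assumption~\ref{main_assumption_text} with $a\in(0,1)$, $\beta=0$, $b=0$, and $r(x^*) = O(\log x^*)$, which is permissible by the relaxation granted in Theorem~\ref{thm:main} when $\gamma_Y = 0$. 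Applying that theorem to the approximation $\tilde m(x) = \tilde Q_Y\circ F_L(a\tilde x^*)$ gives the uniform vanishing of the relative error~\eqref{main_conv}.

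\medskip
\noindent\textbf{Main obstacle.} The hard part is Step~1: extracting sharp enough convolution asymptotics for two Weibull-like densities with equal exponent so that the subdominant polynomial factor $y^{\tilde\eta}$ can be controlled to give $r(x^*) = O(\log x^*)$ rather than something larger. This requires invoking the saddlepoint analysis of \cite{bal1993,asmussen:hashorva:laub:taimre:2017}, in particular the uniform density estimate around the saddle $W = \theta y$, rather than merely a tail equivalence. Once that is in hand, Steps~2 and~3 are essentially a bookkeeping exercise in logarithmic asymptotics.
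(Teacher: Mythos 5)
Your proof is correct and follows essentially the same route as the paper's: the convolution asymptotics for $Y$ via the Weibull-like density theory of Balkema et al.\ and Asmussen et al., the Laplace-scale linearization $m^*(x^*) = a x^* + O(\log x^*)$, and the reduction to Theorem~\ref{thm:main} after checking the domain-of-attraction condition with $\gamma_Y=0$ and assumption~\eqref{ass_aU}. Your Step~2 is slightly more streamlined than the paper's version (which inverts $F_X$ explicitly via a Lambert-function expansion to bound $Q_X\circ F_L$), and your saddlepoint interpretation of $\tilde c = \min_\theta\{c_X\theta^{\tau_\eps}+c_\eps(1-\theta)^{\tau_\eps}\}$ is a useful way to explain the explicit formula for $a$ and $\theta$ that the paper simply reads off from the cited convolution theorem.
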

Interestingly, in this case the median on the Laplace scale does not converge to its linear approximation. Instead it suffices in Theorem~\ref{thm:main} that    
\begin{align}\label{eq:median:result}
	|\M(Y^* | X^* = x^*) - a x^*| = O\{ \log(x^*)\}, \quad x^* \to\infty,
\end{align} 
in addition with condition~\eqref{ass_aU}, which can be verified for this model.

The next examples are rather specific, but they illustrate what kind of behaviors are possible. Convolution in these light-tailed cases are difficult and therefore analytical forms are rare.

\begin{example}
Let $X$ and $\varepsilon$ follow Weibull distributions with parameter $\tau_X = 3$ and $\tau_\varepsilon = 3/2$, respectively. 
For $f(x) = x^2$, for $x \in\mathbb R$, $Y= f(X) + \varepsilon$ satisfies the assumptions of Proposition~\ref{prop:weibull:light:tails}. 
\end{example}

\begin{example}\label{examples:gaussian:copula}
	Let $X$ and $\varepsilon$ have centered normal distributions with variances $\sigma_X^2 > 0$ and $\sigma_\varepsilon^2$, respectively.
		Let $f(x) = \rho x$, for $\rho > 0$ and $x \in \mathbb{R}$. Then, $Y= f(X) + \varepsilon$ satisfies the assumptions of Proposition~\ref{prop:weibull:light:tails} with
	\[a = \frac{\rho^2\sigma_X^2}{\sigma_\varepsilon^2  +\rho^2\sigma_X^2} \in (0,1).\] 
\end{example}

For light-tails, if $f(X)$ and $\varepsilon$ do not have asymptotically equivalent tails, the next example shows that the first-order linear approximation term in~\eqref{convergence:median:1} is trivial, that is, $a=0$.  

\begin{example}\label{example:beta:not:zero}\label{lem:weibull:light:tails}
Assume the additive noise model~\eqref{add_noise} with $f(x) = x^{\alpha_f}$ for $x \in \mathbb R$. Let $X$ and $\varepsilon$ follow symmetric distributions on $\mathbb R$ with $F_X(x) = 1-\exp(-x^{\tau_X})/2$  and $F_\varepsilon(x) = 1- \exp(- x^{\tau_\eps})/2$, for $x > 0$ and $\tau_f, \tau_\eps > 1$. 
 Define $\tau_f := \tau_X/\alpha_f$ and assume that $\tau_X/\tau_f \not = 1$. Then, 
\begin{align}\label{eq:median:result:2}
		|\M(Y^* | X^* = x^*) - (x^*)^{\min\{\tau_\eps/\tau_f,1\} }| = O\{\log(x^*)\}, \quad x^* \to\infty,
	\end{align} 
	and the relative error between the approximation 
	\begin{align*} 
	\tilde m(x) =  
	\tilde Q_Y \circ F_L[(\tilde x^*)^{\min\{\tau_\eps/\tau_f,1\}}], \quad Q_X(\tauzero) <  x < \nu(\tauzero),
	\end{align*}
	and the true median function $m(x)$ converges to zero in the sense of~\eqref{main_conv}.

\end{example}

\subsection{Pre-additive noise models and \texttt{engression}}\label{engression}

The difficulty of extrapolation beyond the training data in additive noise models $Y = f(X) + \eps$ as in~\eqref{add_noise} is that the conditional distribution of $Y\mid X=x$ contains only information on the regression function~$f$ evaluated at point~$x$.
An alternative modeling assumption, that is in some sense more suitable for extrapolation is the pre-additive noise model introduced in \citep{shen:meinshausen:2023}, given by
\begin{align}\label{model_engression}
	Y = g(X + \eta),
\end{align} 
where $X$ is a univariate predictor and $\eta$ is an independent noise term. The function $g:\mathbb R\to\mathbb R$ is assumed to be strictly monotone, and therefore the inverse~$g^{-1}$ exists and we observe 
\begin{align}\label{engression_structure}
	\P(Y > y \mid X = x) &=  \bar F_\eta\{g^{-1}(y) - x\}, \qquad 	Q_{Y\mid X = x}(\alpha) = g\{x + Q_\eta(\alpha)\}.
\end{align} 
In other words, the conditional distribution and quantile function of $Y \mid X=x$ contain information on the function $g$ evaluated not only at point $x$, but at many other points. In particular, if the noise $\eta$ is unbounded, then knowledge of the quantile function for one $x\in \mathbb R$ identifies the function $g$ on its whole domain. \cite{shen:meinshausen:2023} exploit this fact by a distributional regression method called \texttt{engression}, which allows for extrapolation under the model assumption~\eqref{model_engression}.

Our \texttt{progression} method relies on a more generic extrapolation principle, but it is nevertheless interesting to see whether Assumption~\eqref{main_assumption_text} is satisfied under this pre-additive noise model; clearly, a method like \texttt{engression} that exploits the particular structure~\eqref{engression_structure} will in this case be more effective.
Note that the conditional median for this model is $m(x) = g\{x + \M(\eta)\}$
We therefore have for pre-additive noise models the relation 
\[\bar F_Y \circ m(x) = \bar F_{X+\eta-m_\eta}(x), \quad x \in\mathbb R, \]
where $m_\eta = \M(\eta)$. This observation enables the direct application of Lemma~\ref{lem:median:equation} to obtain the following result.
\begin{proposition}
	Let $X$ be random variable with distribution functions $F_X$ and infinite upper endpoint.
	Assume the pre-additive noise model~\eqref{model_engression} with strictly monotone, unbounded function $g:\mathbb R\to \mathbb R$. If $\bar F_{X+\eta - m_\eta} \sim (1+c_\eta) \bar F_X$, then Assumption~\ref{main_assumption_text} is satisfied with
	\beao 
	|\M( Y^* \mid X^* = x^*) - x^* - \log\{1/(1+c_\eta)\}| \to 0, \quad x^* \to \infty.
	\eeao 
\end{proposition}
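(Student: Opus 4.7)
The plan is to reduce the statement to a direct application of Lemma~\ref{lem:median:equation}, using the special structure of the pre-additive noise model $Y = g(X+\eta)$. The only non-trivial observation is the identification of the conditional survival function of $Y$ evaluated at the median $m(x)$ with a shifted survival function of $X+\eta$.

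First, since $g$ is strictly monotone (and, by the unboundedness hypothesis, continuous on the range in question), monotone functions commute with the median, so
\[
  m(x) = \M(Y\mid X = x) = \M\bigl(g(x+\eta)\bigr) = g\bigl(x + \M(\eta)\bigr) = g(x + m_\eta),
\]
where $m_\eta = \M(\eta)$. Next I would compute, again using strict monotonicity of $g$,
\[
  \bar F_Y\circ m(x) \;=\; \P\bigl(g(X+\eta) > g(x+m_\eta)\bigr) \;=\; \P\bigl(X+\eta > x+m_\eta\bigr) \;=\; \bar F_{X+\eta - m_\eta}(x).
\]
(If $g$ were strictly decreasing the inequalities would flip and one would study the lower tail; but then the hypothesis $\bar F_{X+\eta-m_\eta}\sim(1+c_\eta)\bar F_X$ would automatically force us into the analogue of Remark~\ref{rem:lower_tail}, so without loss of generality I would assume $g$ strictly increasing.)

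Combining these two displays with the standing assumption $\bar F_{X+\eta-m_\eta}\sim (1+c_\eta)\bar F_X$ as $x\to\infty$ yields exactly the hypothesis of Lemma~\ref{lem:median:equation} with $c = c_\eta$. Applying that lemma gives
\[
  \bigl|\M(Y^*\mid X^* = x^*) - x^* - \log\{1/(1+c_\eta)\}\bigr| \to 0, \qquad x^*\to\infty,
\]
which is the desired conclusion, and it matches the form of Assumption~\ref{main_assumption_text} with $a=1$, $\beta=0$, and $b = \log\{1/(1+c_\eta)\}$ (recall that the case $|a|=1$ enforces $\beta=0$).

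The only conceptual obstacle is the verification that $Y$ has an infinite upper endpoint, which is needed to invoke Lemma~\ref{lem:median:equation}; this follows from $X$ having an infinite upper endpoint together with the unboundedness and monotonicity of $g$, since $\bar F_Y(g(t)) = \bar F_{X+\eta}(t) \to 0$ only as $t\to\infty$, and $g(t)\to\infty$ as $t\to\infty$. Everything else is bookkeeping, and no extra regularity on the noise $\eta$ beyond the tail equivalence assumption is required.
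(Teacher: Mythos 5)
Your proof is correct and follows essentially the same route as the paper: identify $m(x) = g(x+m_\eta)$ using invariance of the median under monotone maps, observe that $\bar F_Y\circ m(x) = \bar F_{X+\eta-m_\eta}(x)$, and apply Lemma~\ref{lem:median:equation} with $c=c_\eta$. The extra step of verifying that $Y$ has an infinite upper endpoint is a sensible inclusion that the paper leaves implicit.
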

It is now easy to check a whole range of examples for distributions of $X$ and $\eta$ where the \texttt{progression} method is applicable and the guarantees of Theorem~\ref{thm:main} hold. For instance, we can follow Propositions~\ref{example:asymptotic:dependence} and~\ref{example:asymptotic:dependence2} in the heavy- and light-tailed cases, respectively. We do not discuss details or concrete examples here, but refer to Section~\ref{sec:add_noise}.

\section{Statistical implementation and extensions} \label{sec:statistics}

The approximation~\eqref{extr_pinciple} underlying the \texttt{progression} method involves several unknown parameters that have to be estimated from the data, both concerning the marginal tails and the dependence between predictor and response. In this section, we first briefly recall the semi-parametric statistical implementation of the GPD approximations in Section~\ref{sec:first_order}. 
Since this is a classical problem in extreme value statistics, 
we refer to the pertinent literature for more details and other approaches \citep[e.g.,][]{coles:2001}.

Our \texttt{progression} can be combined with many regression methods for univariate predictors to improve extrapolation properties beyond the training data. 
We concentrate here on random forests since it turns to be well suited 
for a smooth transition from a non-parametric fit inside the sample range to a parametric extrapolation~\eqref{extr_pinciple} at the boundary regions of the training data.
We then discuss additive model \texttt{progression}, an extension of our extrapolation principle to multivariate predictor vectors $\mathbf X$ when 
the regression function has an additive structure.

\subsection{Univariate tail estimation}\label{sec:stat_univariate}

As in Section~\ref{sec:first_order}, let $Y$ be
a generic univariate random variable with distribution
 function $F$, which can either be one of the predictors or the response.
The extrapolation principle in Section~\ref{sec:extrapolation:principle} underlying the \texttt{progression} method involves approximations of univariate distributional tail probabilities and quantiles.   
Following the standard semi-parametric approach of extreme value statistics, we use the empirical distribution function of the bulk of the data and GPD approximations for the lower and upper tails. 
In particular, for independent data $Y_1,\dots, Y_n$ from $F$, we define the lower and upper thresholds as the empirical quantiles $l = \widehat Q(1-\tauzero)$ and $u = \widehat Q(\tauzero)$ at level $\tau_0 = 1-k/n$, for some $k<n$. Note that these thresholds can be expressed as the $l = Y_{k:n}$ and $u = Y_{(n-k):n}$, where $Y_{i:n}$ is the $i$th order statistic of the sample. 
 The integer $k$ is the effective sample size of exceedances used for estimation of the GPD, and for predictive tasks it can be seen as a tuning parameter.
The semi-parametric estimate of the distribution function at level $\tauzero$ is given by 
\begin{eqnarray}\label{eq:F}
	\widehat{F}(x) = \begin{cases}
		\dfrac{1}{n}\displaystyle\sum\limits_{i=1}^n 1(Y_{i} \leq x)  ,  \quad &\text{ if }    x \in (l,u] , \\
		1 - \dfrac{k}{n} \left(1 + \widehat{\gamma}_u \dfrac{x - u}{{\widehat{\sigma}_u}}\right)^{-1/\widehat{\gamma}_u}  \quad &\text{ if }  x > u,  \\
		\dfrac{k}{n} \left(1 + \widehat{\gamma}_l \dfrac{l - x }{{\widehat{\sigma}_l}}\right)^{-1/\widehat{\gamma}_l}  \quad &\text{ if }  x \leq l,  
	\end{cases}
\end{eqnarray}
where $(\widehat \sigma_l, \widehat \gamma_l)$ and $(\widehat \sigma_u, \widehat \gamma_u)$ are estimates of the parameters of the GPD distribution in the lower and upper tail, respectively.
Statistical inference for GPD parameters is one of the most studied problems in extreme value theory, and numerous well-established method exist; see \cite{bel2023} for a review. For our purpose, a maximum likelihood estimator with the restriction that $\widehat \gamma_l, \widehat \gamma_u \geq 0$ is most suitable. We refer \cite{coles:2001} for details on the implementation and to \cite{haan:ferreira:2006} for  consistency and asymptotic normality of this estimator. Practical consideration concerning the choice of the intermediate threshold or, equivalently, the number of order statistics $k$, can be found in~\cite{cae2016}. We use the implementation of the R package \texttt{mev}.

We can apply this semi-parametric marginal transformation to obtain samples on the Laplace scale by setting 
\begin{align} \label{eq:laplace:trans:emp}
	Y^*_i = Q_L \circ \widehat F(Y_i), \qquad i= 1,\dots, n,
\end{align}
where $\widehat F$ is the semi-parametric estimate in~\eqref{eq:F} of $F$.
We can apply this transformation to the predictor and response variables in~\eqref{extr_pinciple}, or to the variables conditioned to be below or above the median of $X$ in Corollary~\ref{cor:main}.

\subsection{Random forest \texttt{progression}}\label{rf_progression}

A first, straightforward statistical implementation of the extrapolation principle requires to choose the limits $l_X$ and $u_X$ in the predictor space 
(e.g., the same thresholds as in the semi-parametric marginal normalization for $F_X$ described above), 
and then fit a traditional smoother 
inside the bulk $(l_X,u_X]$ and an extrapolation model based on~\eqref{extr_pinciple} outside of this interval.
In particular, for extrapolation above $u_X$, we would fit a median regression on the transformed scale
\begin{align}
	\label{naive}
	(\widehat a, \widehat \beta, \widehat b) = \underset{a\in[-1,1],\beta\in[0,1),b\in\mathbb R}{\argmin}\frac1n \sum_{i=1}^n \einsfun\{X_i > u_X\} |  Y_i^* - a X_i^* - (X_i^*)^\beta b|.
\end{align}
A difficulty in statistical estimation of this model is the parameter $\beta\in[0,1)$ appearing in the second-order, sub-linear term $(x^*)^\beta$ of the approximation in Assumption~\ref{main_assumption_text}. Since we only use the largest observations above some threshold to fit this parametric approximation, there is large uncertainty related to an estimate $\widehat \beta$. Moreover, in almost all examples studied in Section~\ref{sec:add_noise}, the second-order term vanishes. Even if that was not the case, the leading term in the approximation is the linear term $ax^*$, $a\in[-1,1]$, which dominates the extrapolation behavior.
We therefore assume in the sequel that $\beta = 0$ and that we fit simple linear models on the Laplace scale. 
\begin{remark}
	For modeling of extreme value dependence in copulas, \cite{heffernan:tawn:2004} fit a distributional regression of $Y^*$ on $X^*$ given that $X>u$. In their case, the parameter $\beta\in[0,1)$ is important as it specifies the variance of the conditional distributions.
	In our setting of regression extrapolation, we are interested in the easier problem of estimating the conditional median rather than the full distribution.
\end{remark}

The approach in~\eqref{naive} has two disadvantages. First, it requires to pre-specify a hard threshold where, on the transformed scale, the prediction function transitions from a possible non-parametric smoother 
 to a parametric model. The threshold would be a tuning parameter.
Second, it gives equal weights to all observations above the threshold, regardless of how far in the tail they are.
We propose a different statistical implementation that leverages the localizing weights of a regression random forest \citep{breiman:2001,meinshausen:2006, athey:tibshirani:wager:2019}. This results in a data-driven way of choosing the threshold above which a linear model is fitted on the transformed scale. Moreover, each training observation receives its own adaptive weight to quantify the importance for extrapolation. This tree-based method provides simultaneously a good fit in the bulk and the tail of the predictor distribution, without the need of tuning the threshold from which extrapolation starts.

More precisely, we solve the localized $L_1$-optimization problem 
\begin{align}
	\label{llf}
	(\widehat a(x^*), \widehat c(x^*)) = \argmin_{a\in[-1,1],c \in\mathbb R} \sum_{i=1}^n w_i(x^*) \left| Y_i^* - c  - (X_i^* - x^*) a \right|,
\end{align}
where $w_i(x^*)$ is the localizing weight of a standard regression random forest \citep[see, e.g.,][]{meinshausen:2006} for details. More specific weights coming from tailor-made tree splits for our problem give similar results, and we therefore use the existing random forest implementation in the R package \texttt{grf}.  
The optimization problem in~\eqref{llf} should be understand as follows: a linear median regression is fitted locally at $x^*$, where the localization is achieved through the weights $w_i(x^*)$, which give more importance to training observations close to $x^*$. The coefficient $\widehat a(x^*)$ is the estimated slope of this regression line and corrects for the local linear trend in $X^*_i - x^*$. The parameter $\widehat c(x^*)$ is the value of the estimated linear function at $x^*$ and therefore an estimated of the conditional median function $\M(Y^*\mid X^* = x^*)$ on the transformed scale; see \citet[][Section 6.1]{has2009} for background on kernel smoother and local linear regression. 
A related idea was proposed in the context of local linear forests in \cite{friedberg:tibshirani:wager:2021},
but with the purpose of improving the fitting of smooth signals in classical regression random forests. There are two main differences to the local linear forest: first, the data is first transformed to the Laplace scale to make sure make use of our extrapolation principle; and second, we fit the localized median instead of the mean to guarantee estimation of the conditional median after transforming back to the original scale.

Importantly, for any $x, \bar x > X_{n:n}$ above the largest predictor observation, we have that $w_i(x^*) = w_i(\bar x^*)$ for all $i=1,\dots,n$. A close look at~\eqref{llf} then shows that above $X_{n:n}$, the extrapolation is linear on the transformed scale as suggested by the extrapolation principle; indeed, we can check that $a(\bar x^*) = a(x^*)$ and $c(\bar x^*) = c(x^*) + (\bar x^* - x^*)a(x^*)$. We call the regression method in~\eqref{llf} the random forest \texttt{progression} method, since it combines classical random forests with our extrapolation principle. It decides dynamically and in a data-driven way when the transformed data starts to be linear, and thus, where extrapolation begins.

\subsection{Additive model \texttt{progression}}\label{add_progression}

The \texttt{progression} approximation of the extrapolation principle for univariate predictors as discussed in Section~\ref{sec:extrapolation_principle} can be extended to $p$-dimensional predictor vectors $\mathbf X = (X_1,\dots, X_p)$ in several ways. We consider here the framework of additive models, where the response is
\begin{align}
	\label{add_model} 
	Y = \alpha + \sum_{j=1}^p f_j(X_j) + \varepsilon,
\end{align}
with regression functions $f_j: \mathbb R \to \mathbb R$ for $j=1,\dots, p$.
The idea is to apply extrapolation by \texttt{progression} for each function separately and then combine them to the final estimate.

Suppose that we have $n$ samples $(\g X_1, Y_1), \dots, (\g X_n, Y_n)$ of  response $Y$ and predictors $\g X = (X_1, \dots, X_p)\in\mathbb R^p$.
A first, naive approach would be to use the classical backfitting algorithm \citep[e.g.,][]{has1987}, an iterative method to fit additive models, with a suitable class of smoothing functions to obtain estimates $\widehat f_j$ for  every regression function $f_j$, $j=1,\dots, p$. For the $j$th regression function one could then compute the residuals 
\[R_{ij} = Y_i - \widehat \alpha - \sum_{k\neq j} \widehat f_j(X_{ik}), \quad i=1,\dots,n,\]
where $\widehat \alpha$ is the sample mean of the response.
If the smoothing estimates $\widehat f_k$ were good approximations of their true counterparts, then the residuals would be approximate samples of the univariate regression model
\begin{align}\label{uni_additive}
	R_j = f_j(X_j) + \varepsilon.
\end{align}
In this case, a \texttt{progression} method for a univariate predictor could be applied to the sample $(X_{ij},R_{ij})$ to obtain an improved estimate of $f_j$ with extrapolation guarantees as in Theorem~\ref{thm:main}; see, for instance, Section~\ref{rf_progression} for such a method. However, since classical smoothing functions typically do not provide accurate estimates at the boundary of the training sample, the residuals $R_{ij}$ for the smallest and largest predictor samples $X_{ij}$ will be subject to high uncertainties. Since extrapolation is particularly sensitive to those observations, in Algorithm~\ref{algo:backfittng} we propose an adaptation of the backfitting that applies \texttt{progression} in every step of the iterative procedure.   
In particular, each time the $j$th variable is selected in the iteration, \texttt{progression} is applied to the residuals as new response, that is, they are transformed to the Laplace scale to regress $R^*_{ij}$ on $X^*_{ij}$, $i=1,\dots, n$.

\begin{algorithm}
	\caption{ \textsc{\scriptsize BACKFITTING ALGORITHM FOR PROGRESSION} }
	{\bf Input:} Independent samples $(\bfX_1, Y_1), \dots, (\bfX_n, Y_n)$ from an additive model~\eqref{add_model}.\\
	{\bf Output:} Estimates of all regression functions $\widehat{f}_j$, $j=1,2,\dots, p$.\\
	\quad Intialize $\widehat f_j(x) = 0$ for all $x\in\mathbb R$, $j=1,\dots,p$ and $\widehat \alpha = \frac1n \sum_{i=1}^n Y_i$.\\
	\quad Cycle through $j=1,2,\dots,p,1,2, \dots, p,1,2,\dots$ \\
	\quad\quad\quad Compute residuals \[R_{ij} = Y_i - \widehat \alpha - \sum_{k\neq j} \widehat f_j(X_{ik}), \quad i=1,\dots,n.\]\\
	\quad\quad\quad Fit \texttt{progression} smoother $S$ to data set $(R_{1j}, X_{1,j}), \dots, (R_{nj}, X_{n,j})$ and set 
	\[ \widehat f_j(X_{ij}) = S(R_{ij}).\]
	\quad Until convergence of the functions $\widehat f_j$.
	 \label{algo:backfittng}
\end{algorithm}

We discuss briefly the intuition why the backfitting algorithm is able to extrapolate the response in the presence of multi-dimensional predictor vectors.
In order to fit a univariate \texttt{progression} to the $j$th residual model~\eqref{uni_additive} that has extrapolation guarantees from Theorem~\ref{thm:main}, the approximate residuals $R_{ij}$ computed in the inner loop of the backfitting algorithm should be sufficiently accurate, especially for predictor values $X_{ij}$ at the boundary of the training domain of the $j$th predictor. 
In the additive model~\eqref{add_model}, the tails of the response distribution $Y$ are determined by the contributions to large observations of each of the regression terms $f_j(X_j)$, and the noise variable $\eps$. Under the usual assumption that the noise is not the dominating term, the upper tail of $Y$ will be dominated by the random variable $f_{j_0}(X_{j_0})$ with the heaviest upper tail. 
The first regression functions that is well extrapolated is therefore $f_{j_0}$, since the assumption of Theorem~\ref{thm:main} are satisfied for the pair $(X_{j_0}, R_{j_0})$. Once the influence of $f_{j_0}(X_{j_0})$ on the tail of $Y$ is subtracted in the other residuals, the next heavier regression term is extrapolated correctly, and so on.

The backfitting algorithm can be used with any smoother that has extrapolation guarantees. In the sequel we use the random forest \texttt{progression} from Section~\ref{rf_progression}, because of the advantages discussed in the previous section and since it requires little tuning.

\section{Experiments}\label{sec:experiments}

We perform several simulation experiments to evaluate the extrapolation properties of our \texttt{progression} methods, where throughout this section we use random forest \texttt{progression} from Section~\ref{rf_progression}.

We start with a univariate predictor $X$ and sample from additive noise models~\eqref{add_noise} for different regression functions $f: \mathbb R \to \mathbb R$ and different training distributions $\Ptr$ for the predictor and noise. For the form of the regression function $f$, we follow some of the examples considered in \texttt{engression} \citep{shen:meinshausen:2023}. Namely, we consider a two-sided square-root function, a quadratic function and a cubic function, in order to illustrate different extrapolation behaviors.
\citep{shen:meinshausen:2023} simulate the predictors as a uniform distribution on a compact interval, resulting in hard lower and upper endpoints. For extrapolation problems in applications such as environmental or climate science, it is more relatistic to allow for predictors with unbounded lower and upper endpoints. We therefore sample $X$ from a $t$-distribution with $3$ degrees of freedom. To be consistent with our theory, the distribution of the noise $\varepsilon$ can not have heavier tails than the signal $f(X)$. We therefore choose a normal distributions in our simulations, but many other examples are possible.

We generate training data sets with $n=1000$ samples from $(X,Y)$ under the distribution $\Ptr$ and fit a random forest, a local linear forest, \texttt{engression} and our random forest \texttt{progression} described in Section~\ref{rf_progression}. 
For \texttt{engression} we use the default implementation and run it for 1000 epochs. The intermediate threshold probability for the semi-parametric estimation in~\eqref{eq:F} of \texttt{progression} is chosen as $\tauzero = 0.9$ or, equivalently, $k=100$; different values for $\tauzero$ in the range $[0.8,0.95]$ give similar results.
As test points, we deterministically fix a sequence of equidistant points with values in $[Q_X(0.01/n),Q_X(1-0.01/n)]$, where $Q_X$ is the quantile function of $X$.
Note that these boundaries are indeed out-of-sample, since only one in 10000 samples of the training distribution $\Ptr$ would lie above or below this interval.

Figure~\ref{fig:sim1} show the results for 50 repetitions of the fitting. We first observe that a classical random forest does not extrapalote well since it predicts constants outside of the training data. The local linear forest extrapolates linearly, which improves the fit on all functions significantly; better prediction in sparse regions was already highlighted in \cite{friedberg:tibshirani:wager:2021}. However, in the quadratic and cubic functions we can see the limitations of linear extrapolation. Both \texttt{engression} and \texttt{progression} adaptively estimate the shape of the required extrapolation. On these three functions, they perform similarly. While \texttt{engression} seems to underestimate the slope in the cubic function, \texttt{progression} overestimates in the quadratic function.

\begin{figure}[tb]
	\begin{center}
	\includegraphics[width=1\linewidth]{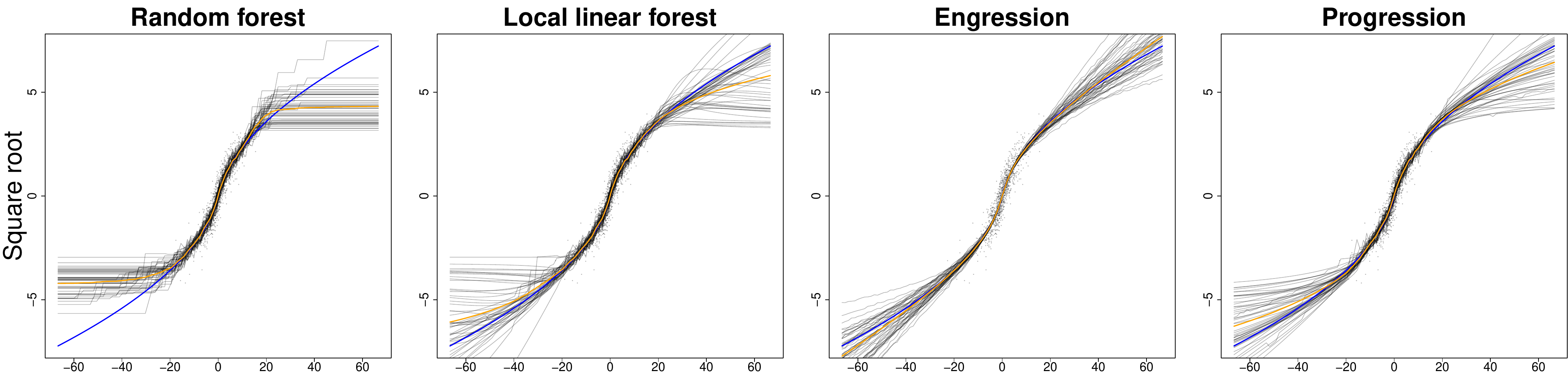}
	\includegraphics[width=1\linewidth]{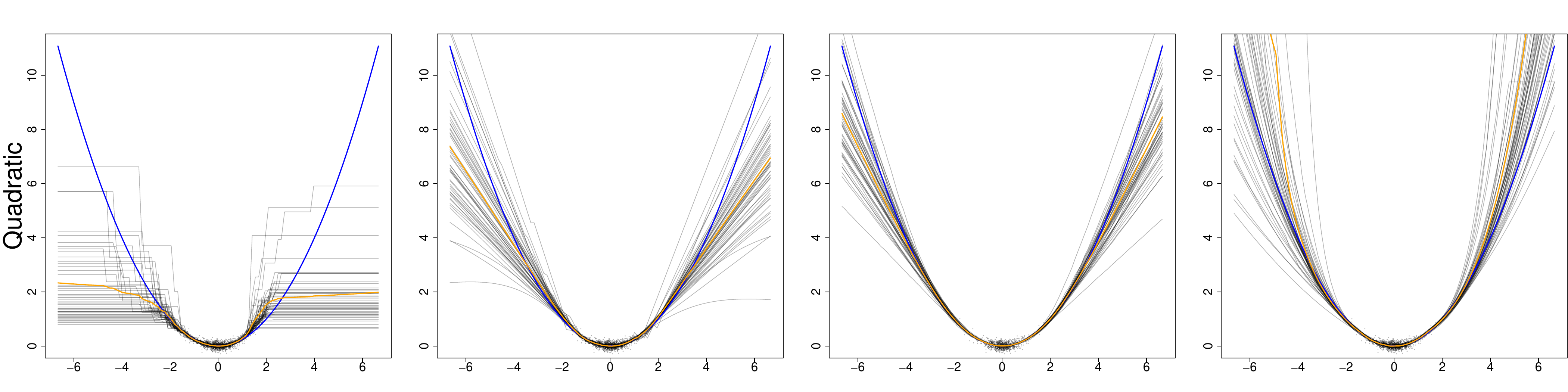}
	\includegraphics[width=1\linewidth]{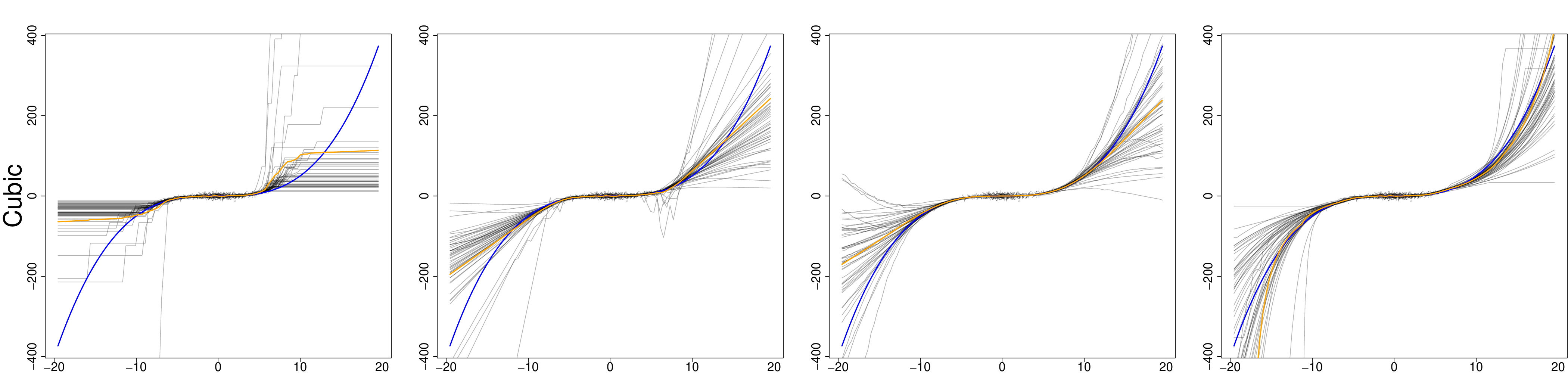}
	\end{center}
	\caption{50 fits (gray lines) of the four methods on the three univariate regression functions, together with the true regression function $f$ (blue line), the mean of the 50 gray fits (orange line), and the training data (gray points) from one repetition.
	 }	\label{fig:sim1}
	\end{figure}

\begin{figure}[h!]
	\begin{center}
	\includegraphics[width=.45\linewidth]{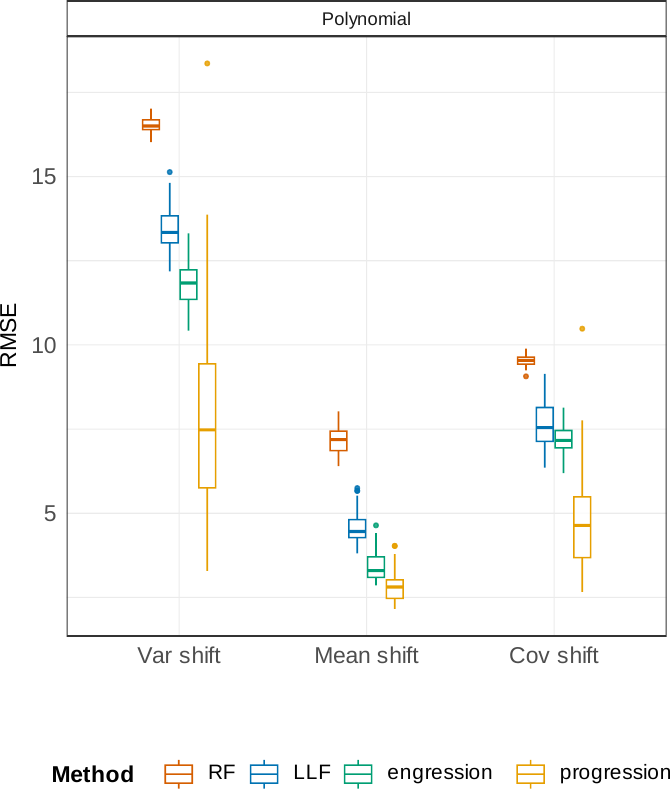}
	\includegraphics[width=.45\linewidth]{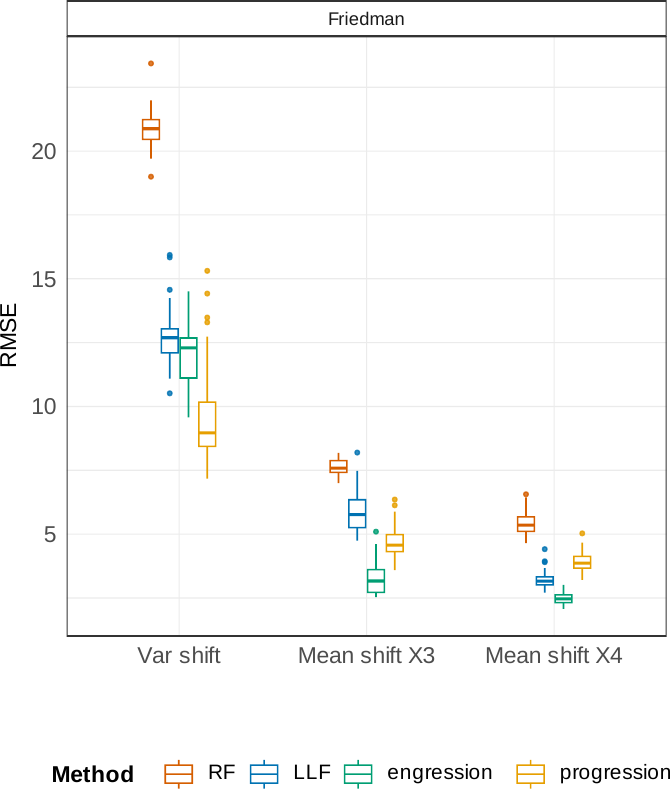}
	\end{center}
	\caption{Boxplots of root mean square errors (RMSE) of the four different methods under different out-of-distribution test samples from $\Pte$ for the fractional polynomial~\eqref{poly} (left) and the Friedman function~\eqref{friedman} (right).}
	\label{fig_rmse}
\end{figure}

We now consider the multivariate setting, where $\mathbf X$ is $p$-dimensional predictor vector. We compare our additive model \texttt{progression} as described in Section~\ref{add_progression} with a classical regression forest, a local linear forest and \texttt{engression}. 
The first model we simulate from is an additive model in $p=2$ dimensions with fractional polynomials as marginal functions, that is,
\begin{align}\label{poly}
	Y = \text{sign}(X_1) 4 \sqrt{|X_1|} + X_2^3 / 20 + \varepsilon,
\end{align}
where under the training distribution $\Ptr$, we generate $n=1000$ samples of $(X,Y)$ with $\varepsilon \sim N(0, 2^2)$, and where $\mathbf X$ is distributed according to a bivariate $t$-distribution with 4 degrees of freedom and independent components. 
In order to evaluate the extrapolation performance of the four methods, we consider different covariate shifts for the test distribution $\Pte$ that go significantly beyond the data range, namely
\begin{itemize}
	\item[(i)] mean shift: we shift the mean of the second component of $X$ by a constant $c$, that is, $\Pte( \mathbf X\in \cdot ) = \Ptr(\mathbf X + (0,c)\in \cdot)$;
	\item[(ii)] variance shift: we increase the variance by some factor $s^2$, that is, $\Pte( \mathbf X\in \cdot) = \Ptr(s\mathbf X \in \cdot)$;
	\item[(iii)] covariance shift: we change the covariance structure of the predictors, that is, under $\Pte$, $\mathbf X$ is a the same $t$-distribution but with correlation $\rho = 80\%$ in the parameter matrix. 
\end{itemize}

The left-hand side of Figure~\ref{fig_rmse} shows boxplots of the root mean square error (RMSE) of the four different methods evaluated the 200 samples of $(X,Y)$ under the test distribution $\Pte$ with the three domain shift described above.  
We see that under all scenarios, the classical random forest performs worst. This highlights that constant extrapolation in not suitable for general regression functions. Local linear forests extrapalote linearly, a fact that improves significantly its out-of-distribution performance; The two extrapolation methods \texttt{engression} and \texttt{progression} perform even better on this example. In particular, \texttt{progression} outperforms the other methods in all covariate shift settings. 

The second function with $p=5$ predictors was introduced in \cite{fri1991} and has the form
\begin{align}\label{friedman}
	Y = 10\sin(\pi X_1 X_2) + 20 (X_3 - 0.5)^2 + 10X_4 + 5X_5 + \varepsilon,
\end{align}
where $\varepsilon \sim N(0, 2^2)$. This model is popular for evaluating non-parameteric regression methods, since it tests how well interactions are handled, how it picks up the quadratic signal and how it captures the strong linear signals; see for instance \cite{chi2010}. In the local linear forest
paper \cite{friedberg:tibshirani:wager:2021}, the predictor vector $\mathbf X$ was sampled from a uniform distribution on the 5-dimensional hypercube. To make the problem interesting for extrapolation, we sample $\mathbf X$ instead from a zero-mean, multivariate $t$-distribution with 4 degrees of freedom and independent components. We consider similar covariate shifts for the test distributions $\Pte$ as above, including a variance shift, and mean shifts in $X_3$ and $X_4$.
Note that the model~\eqref{friedman} is not additive and our additive \texttt{progression} is thus misspecified. The right-hand panel of Figure~\ref{fig_rmse} shows the results for extrapolation performance on this function. Again, \texttt{engression} and \texttt{progression} outperform random forest and local linear forest. Interestingly, when the mean shift is in $X_4$, a very strong linear predictor, then local linear forest has a fairly low RMSE. This is expected, since the linear extrapolation of this model is in this case the perfect model assumption. In this setting, \texttt{engression} performs best on the two mean shift test ditributions.

\bibliographystyle{Chicago}
\bibliography{ref}

\begin{thebibliography}{}

\bibitem[\protect\citeauthoryear{Alves, Gomes, de~Haan, and Neves}{Alves
  et~al.}{2007}]{fraga:gomes:dehaan:neves:2007}
Alves, M. I.~F., M.~I. Gomes, L.~de~Haan, and C.~Neves (2007).
\newblock A note on second order conditions in extreme value theory: linking
  general and heavy tail conditions.
\newblock {\em REVSTAT-Statistical Journal\/}~{\em 5\/}(3), 285--304.

\bibitem[\protect\citeauthoryear{Asmussen, Hashorva, Laub, and Taimre}{Asmussen
  et~al.}{2017}]{asmussen:hashorva:laub:taimre:2017}
Asmussen, S., E.~Hashorva, P.~J. Laub, and T.~Taimre (2017).
\newblock Tail asymptotics of light-tailed weibull-like sums.
\newblock {\em Probability and Mathematical Statistics\/}~{\em 37}, 235--256.

\bibitem[\protect\citeauthoryear{Athey, Tibshirani, and Wager}{Athey
  et~al.}{2019}]{athey:tibshirani:wager:2019}
Athey, S., J.~Tibshirani, and S.~Wager (2019).
\newblock Generalized random forests.
\newblock {\em Ann. Statist.\/}~{\em 47}, 1148--1178.

\bibitem[\protect\citeauthoryear{Balkema and de~Haan}{Balkema and
  de~Haan}{1974}]{balkema:dehaan:1974}
Balkema, A.~A. and L.~de~Haan (1974).
\newblock Residual life time at great age.
\newblock {\em The Annals of Probability\/}~{\em 2}, 792--804.

\bibitem[\protect\citeauthoryear{Balkema, Kl{\"u}ppelberg, and Resnick}{Balkema
  et~al.}{1993}]{bal1993}
Balkema, A.~A., C.~Kl{\"u}ppelberg, and S.~I. Resnick (1993).
\newblock Densities with {G}aussian tails.
\newblock {\em Proc. Lond. Math. Soc.\/}~{\em 66}, 568--588.

\bibitem[\protect\citeauthoryear{Beirlant, Goegebeur, Teugels, Segers, de~Waal,
  and C.}{Beirlant
  et~al.}{2004}]{beirlant:goegebeur:teugels:segers:dewaal:ferro:2004}
Beirlant, J., Y.~Goegebeur, J.~Teugels, J.~Segers, D.~de~Waal, and F.~C.
  (2004).
\newblock {\em Statistics of Extremes: Theory and Applications}.
\newblock West Sussex: John Wiley \& Sons.

\bibitem[\protect\citeauthoryear{Belzile, Dutang, Northrop, and Opitz}{Belzile
  et~al.}{2023}]{bel2023}
Belzile, L.~R., C.~Dutang, P.~J. Northrop, and T.~Opitz (2023).
\newblock A modeler's guide to extreme value software.
\newblock {\em Extremes\/}~{\em 26\/}(4), 595--638.

\bibitem[\protect\citeauthoryear{Boulaguiem, Zscheischler, Vignotto, van~der
  Wiel, and Engelke}{Boulaguiem et~al.}{2022}]{bou2022}
Boulaguiem, Y., J.~Zscheischler, E.~Vignotto, K.~van~der Wiel, and S.~Engelke
  (2022).
\newblock Modeling and simulating spatial extremes by combining extreme value
  theory with generative adversarial networks.
\newblock {\em Environmental Data Science\/}~{\em 1}, e5.

\bibitem[\protect\citeauthoryear{Box and Cox}{Box and Cox}{1964}]{box1964}
Box, G. E.~P. and D.~R. Cox (1964).
\newblock An analysis of transformations.
\newblock {\em Journal of the Royal Statistical Society. Series B
  (Methodological)\/}~{\em 26\/}(2), 211--252.

\bibitem[\protect\citeauthoryear{Breiman}{Breiman}{2001}]{breiman:2001}
Breiman, L. (2001).
\newblock Random forests.
\newblock {\em Machine Learning\/}~{\em 45}, 5--32.

\bibitem[\protect\citeauthoryear{B{\"u}hlmann}{B{\"u}hlmann}{2020}]{buhlmann:2020}
B{\"u}hlmann, P. (2020).
\newblock Invariance, causality and robustness.
\newblock {\em Statistical Science\/}~{\em 35}, 404--426.

\bibitem[\protect\citeauthoryear{Caeiro and Gomes}{Caeiro and
  Gomes}{2016}]{cae2016}
Caeiro, F. and M.~I. Gomes (2016).
\newblock Threshold selection in extreme value analysis.
\newblock In {\em Extreme value modeling and risk analysis}, pp.\  69--86. CRC
  Press, Boca Raton, FL.

\bibitem[\protect\citeauthoryear{Chernozhukov}{Chernozhukov}{2005}]{che2005}
Chernozhukov, V. (2005).
\newblock {Extremal quantile regression}.
\newblock {\em The Annals of Statistics\/}~{\em 33\/}(2), 806 -- 839.

\bibitem[\protect\citeauthoryear{Chipman, George, and McCulloch}{Chipman
  et~al.}{2010}]{chi2010}
Chipman, H.~A., E.~I. George, and R.~E. McCulloch (2010).
\newblock B{ART}: {B}ayesian additive regression trees.
\newblock {\em Ann. Appl. Stat.\/}~{\em 4\/}(1), 266--298.

\bibitem[\protect\citeauthoryear{Christiansen, Pfister, Jakobsen, Gnecco, and
  Peters}{Christiansen
  et~al.}{2022}]{christiansen:pfister:jakobsen:gnecco:peters:2021}
Christiansen, R., N.~Pfister, M.~E. Jakobsen, N.~Gnecco, and J.~Peters (2022).
\newblock A causal framework for distribution generalization.
\newblock {\em IEEE Transactions on Pattern Analysis and Machine
  Intelligence\/}~{\em 44}, 6614--6630.

\bibitem[\protect\citeauthoryear{Cline}{Cline}{1986}]{cline:1986}
Cline, D.~B. (1986).
\newblock Convolution tails, product tails and domain of attraction.
\newblock {\em Probability Theory and Related Fields\/}~{\em 72}, 529--557.

\bibitem[\protect\citeauthoryear{Coles}{Coles}{2001}]{coles:2001}
Coles, S. (2001).
\newblock {\em An Introduction to Statistical Modeling of Extreme Values}.
\newblock London: Springer Series in Statistics.

\bibitem[\protect\citeauthoryear{Cortes, Mansour, and Mehryar}{Cortes
  et~al.}{2010}]{cortes:mansour:mohri:2010}
Cortes, C., Y.~Mansour, and M.~Mehryar (2010).
\newblock Learning bounds for importance weighting.
\newblock In {\em Advances in Neural Information Processing Systems},
  Volume~23, pp.\  442--450.

\bibitem[\protect\citeauthoryear{de~Haan and Ferreira}{de~Haan and
  Ferreira}{2006}]{dehaan:ferreira:2006}
de~Haan, L. and A.~Ferreira (2006).
\newblock {\em Extreme Value Theory: and introduction}.
\newblock New York: Springer.

\bibitem[\protect\citeauthoryear{Devroye, Gy\"orfi, and Lugosi}{Devroye
  et~al.}{1996}]{dev1996}
Devroye, L., L.~Gy\"orfi, and G.~Lugosi (1996).
\newblock {\em A probabilistic theory of pattern recognition}, Volume~31 of
  {\em Applications of Mathematics (New York)}.
\newblock Springer-Verlag, New York.

\bibitem[\protect\citeauthoryear{Dong and Ma}{Dong and Ma}{2023}]{dong:ma:2023}
Dong, K. and T.~Ma (2023).
\newblock First steps toward understanding the extrapolation of nonlinear
  models to unseen domains.
\newblock In {\em The Eleventh International Conference on Learning
  Representations}.

\bibitem[\protect\citeauthoryear{Embrechts, Kluppelberg, and Mikosch}{Embrechts
  et~al.}{2013}]{embrechts:kluppelberg:mikosch:2013}
Embrechts, P., C.~Kluppelberg, and T.~Mikosch (2013).
\newblock {\em Modelling extremal events: for insurance and finance},
  Volume~13.
\newblock Zurich: Springer.

\bibitem[\protect\citeauthoryear{Engelke, Opitz, and Wadsworth}{Engelke
  et~al.}{2019}]{EOW2019}
Engelke, S., T.~Opitz, and J.~L. Wadsworth (2019).
\newblock Extremal dependence of random scale constructions.
\newblock {\em Extremes\/}~{\em 22\/}(4), 623--666.

\bibitem[\protect\citeauthoryear{Freschlin, Fahlberg, Heinzelman,
  et~al.}{Freschlin et~al.}{2024}]{fre2024}
Freschlin, C.~R., S.~A. Fahlberg, P.~Heinzelman, et~al. (2024).
\newblock Neural network extrapolation to distant regions of the protein
  fitness landscape.
\newblock {\em Nature Communications\/}~{\em 15}, 6405.

\bibitem[\protect\citeauthoryear{Friedberg, Tibshirani, Athey, and
  Wager}{Friedberg et~al.}{2021}]{friedberg:tibshirani:wager:2021}
Friedberg, R., J.~Tibshirani, S.~Athey, and S.~Wager (2021).
\newblock Local linear forest.
\newblock {\em Journal of computational and graphical statistics\/}~{\em
  30\/}(2), 503--517.

\bibitem[\protect\citeauthoryear{Friedman}{Friedman}{1991}]{fri1991}
Friedman, J.~H. (1991).
\newblock Multivariate adaptive regression splines.
\newblock {\em Ann. Statist.\/}~{\em 19\/}(1), 1--141.

\bibitem[\protect\citeauthoryear{Galton}{Galton}{1886}]{gal1886}
Galton, F. (1886).
\newblock Regression towards mediocrity in hereditary stature.
\newblock {\em The Journal of the Anthropological Institute of Great Britain
  and Ireland\/}~{\em 15}, 246--263.

\bibitem[\protect\citeauthoryear{Ganin, Ustinova, Ajakan, Germain, Larochelle,
  Laviolette, March, and Lempitsky}{Ganin et~al.}{2016}]{gan2016}
Ganin, Y., E.~Ustinova, H.~Ajakan, P.~Germain, H.~Larochelle, F.~Laviolette,
  M.~March, and V.~Lempitsky (2016).
\newblock Domain-adversarial training of neural networks.
\newblock {\em Journal of Machine Learning Research\/}~{\em 17\/}(59), 1--35.

\bibitem[\protect\citeauthoryear{Geluk, de~Haan, Resnick, and
  St\u{a}ric\u{a}}{Geluk et~al.}{1997}]{geluk:dehaan:resnick:starica}
Geluk, J., L.~de~Haan, S.~Resnick, and C.~St\u{a}ric\u{a} (1997).
\newblock Second-order regular variation, convolution and the central limit
  theorem.
\newblock {\em Stochastic Process. Appl.\/}~{\em 69\/}(2), 139--159.

\bibitem[\protect\citeauthoryear{Geluk, Peng, and de~Vries}{Geluk
  et~al.}{2000}]{geluk:peng:vries:casper:2000}
Geluk, J.~L., L.~Peng, and C.~G. de~Vries (2000).
\newblock Convolutions of heavy-tailed random variables and applications to
  portfolio diversification and {${\rm MA}(1)$} time series.
\newblock {\em Adv. in Appl. Probab.\/}~{\em 32\/}(4), 1011--1026.

\bibitem[\protect\citeauthoryear{Gnecco, Peters, Engelke, and Pfister}{Gnecco
  et~al.}{2023}]{gnecco:peters:engelke:pfister:2023}
Gnecco, N., J.~Peters, S.~Engelke, and N.~Pfister (2023).
\newblock Boosted control function.
\newblock {\em arXiv 2310.05805\/}.

\bibitem[\protect\citeauthoryear{Gnecco, Terefe, and Engelke}{Gnecco
  et~al.}{2024}]{gne2024}
Gnecco, N., E.~M. Terefe, and S.~Engelke (2024).
\newblock Extremal random forests.
\newblock {\em Journal of the American Statistical Association\/}.
\newblock To appear.

\bibitem[\protect\citeauthoryear{Haan and Ferreira}{Haan and
  Ferreira}{2006}]{haan:ferreira:2006}
Haan, d.~L. and A.~Ferreira (2006).
\newblock {\em Extreme Value Theory: an introduction}.
\newblock New York: Springer.

\bibitem[\protect\citeauthoryear{Hastie and Tibshirani}{Hastie and
  Tibshirani}{1987}]{has1987}
Hastie, T. and R.~Tibshirani (1987).
\newblock Generalized additive models: Some applications.
\newblock {\em Journal of the American Statistical Association\/}~{\em
  82\/}(398), 371--386.

\bibitem[\protect\citeauthoryear{Hastie, Tibshirani, and Friedman}{Hastie
  et~al.}{2009}]{has2009}
Hastie, T., R.~Tibshirani, and J.~Friedman (2009).
\newblock {\em The elements of statistical learning\/} (Second ed.).
\newblock Springer Series in Statistics. Springer, New York.

\bibitem[\protect\citeauthoryear{Heffernan and Tawn}{Heffernan and
  Tawn}{2004}]{heffernan:tawn:2004}
Heffernan, J.~E. and J.~A. Tawn (2004).
\newblock A conditional approach for multivariate extreme values (with
  discussion).
\newblock {\em Journal of the Royal Statistical Society: Series B (Statistical
  Methodology)\/}~{\em 66}, 497--546.

\bibitem[\protect\citeauthoryear{Heinrich-Mertsching, Wahl, Ordoñez, Stien,
  Elvsborg, Haug, and Thorarinsdottir}{Heinrich-Mertsching
  et~al.}{2023}]{hei2023}
Heinrich-Mertsching, C., J.~C. Wahl, A.~Ordoñez, M.~Stien, J.~Elvsborg,
  O.~Haug, and T.~L. Thorarinsdottir (2023).
\newblock {Assessing present and future risk of water damage using building
  attributes, meteorology, and topography}.
\newblock {\em Journal of the Royal Statistical Society Series C: Applied
  Statistics\/}~{\em 72\/}(4), 809--828.

\bibitem[\protect\citeauthoryear{Huet, Clémençon, and Sabourin}{Huet
  et~al.}{2024}]{huet2024regressionextremeregions}
Huet, N., S.~Clémençon, and A.~Sabourin (2024).
\newblock On regression in extreme regions.
\newblock https://arxiv.org/abs/2303.03084.

\bibitem[\protect\citeauthoryear{Jalalzai, Cl\'emen{\c c}on, and
  Sabourin}{Jalalzai et~al.}{2018}]{jalalzai:clemencon:sabourin:2018}
Jalalzai, H., S.~Cl\'emen{\c c}on, and A.~Sabourin (2018).
\newblock On binary classification in extreme regions.
\newblock {\em Advances in Neural Information Processing Systems 31\/}~{\em
  31}, 3092--3100.

\bibitem[\protect\citeauthoryear{Koh}{Koh}{2023}]{Kohfireboosting}
Koh, J. (2023).
\newblock Gradient boosting with extreme-value theory for wildfire prediction.
\newblock {\em Extremes\/}~{\em 26}, 273--299.

\bibitem[\protect\citeauthoryear{Kuhn, Esfahani, Nguyen, and
  Shafieezadeh-Abadeh}{Kuhn et~al.}{2019}]{kuh2019}
Kuhn, D., P.~M. Esfahani, V.~A. Nguyen, and S.~Shafieezadeh-Abadeh (2019).
\newblock {\em Wasserstein Distributionally Robust Optimization: Theory and
  Applications in Machine Learning}, Chapter~6, pp.\  130--166.

\bibitem[\protect\citeauthoryear{Lam, Sanchez-Gonzalez, Willson, Wirnsberger,
  Fortunato, Alet, Ravuri, Ewalds, Eaton-Rosen, Hu, Merose, Hoyer, Holland,
  Vinyals, Stott, Pritzel, Mohamed, and Battaglia}{Lam
  et~al.}{2023}]{graphcast}
Lam, R., A.~Sanchez-Gonzalez, M.~Willson, P.~Wirnsberger, M.~Fortunato,
  F.~Alet, S.~Ravuri, T.~Ewalds, Z.~Eaton-Rosen, W.~Hu, A.~Merose, S.~Hoyer,
  G.~Holland, O.~Vinyals, J.~Stott, A.~Pritzel, S.~Mohamed, and P.~Battaglia
  (2023).
\newblock Learning skillful medium-range global weather forecasting.
\newblock {\em Science\/}~{\em 382\/}(6677), 1416--1421.

\bibitem[\protect\citeauthoryear{Ma, Pathak, and Wainwright}{Ma
  et~al.}{2023}]{ma:pathak:wainwright:2023}
Ma, C., R.~Pathak, and M.~J. Wainwright (2023).
\newblock Optimally tackling covariate shift in rkhs-based nonparametric
  regression.
\newblock {\em Ann. Statist.\/}~{\em 51}, 738--761.

\bibitem[\protect\citeauthoryear{Meinshausen}{Meinshausen}{2007}]{meinshausen:2006}
Meinshausen, N. (2007).
\newblock Quantile regression forests.
\newblock {\em JMLR\/}~{\em 7}, 983--999.

\bibitem[\protect\citeauthoryear{Mikosch}{Mikosch}{2014}]{mikosch2014modeling}
Mikosch, T. (2014).
\newblock Modeling heavy-tailed time series.

\bibitem[\protect\citeauthoryear{Namkoong and Duchi}{Namkoong and
  Duchi}{2016}]{NIPS2016_4588e674}
Namkoong, H. and J.~C. Duchi (2016).
\newblock Stochastic gradient methods for distributionally robust optimization
  with f-divergences.
\newblock In D.~Lee, M.~Sugiyama, U.~Luxburg, I.~Guyon, and R.~Garnett (Eds.),
  {\em Advances in Neural Information Processing Systems}, Volume~29. Curran
  Associates, Inc.

\bibitem[\protect\citeauthoryear{Pasche and Engelke}{Pasche and
  Engelke}{2024}]{Pasche2025}
Pasche, O.~C. and S.~Engelke (2024).
\newblock Neural networks for extreme quantile regression with an application
  to forecasting of flood risk.
\newblock {\em Ann. Appl. Stat.\/}.
\newblock To appear.

\bibitem[\protect\citeauthoryear{Pasche, Wider, Zhang, Zscheischler, and
  Engelke}{Pasche et~al.}{2024}]{pasche2024}
Pasche, O.~C., J.~Wider, Z.~Zhang, J.~Zscheischler, and S.~Engelke (2024).
\newblock Validating deep-learning weather forecast models on recent
  high-impact extreme events.
\newblock https://arxiv.org/abs/2404.17652.

\bibitem[\protect\citeauthoryear{Pfister and B{\"u}hlmann}{Pfister and
  B{\"u}hlmann}{2024}]{pfister:buhlman:2023}
Pfister, N. and P.~B{\"u}hlmann (2024).
\newblock Extrapolation-aware nonparametric statistical inference.
\newblock {\em arXiv:2402.09758\/}.

\bibitem[\protect\citeauthoryear{Pickands}{Pickands}{1975}]{pickands:1975}
Pickands, J. (1975).
\newblock Statistical inference using extreme order statistics.
\newblock {\em The Annals of Statistics\/}~{\em 3}, 119--131.

\bibitem[\protect\citeauthoryear{Pimont, Fargeon, Opitz, Ruffault, Barbero,
  Martin-StPaul, Rigolot, RiviÉre, and Dupuy}{Pimont et~al.}{2021}]{pim2021}
Pimont, F., H.~Fargeon, T.~Opitz, J.~Ruffault, R.~Barbero, N.~Martin-StPaul,
  E.~Rigolot, M.~RiviÉre, and J.-L. Dupuy (2021).
\newblock Prediction of regional wildfire activity in the probabilistic
  bayesian framework of firelihood.
\newblock {\em Ecological Applications\/}~{\em 31\/}(5), e02316.

\bibitem[\protect\citeauthoryear{Quinonero-Candela, Sugiyama, Schwaighofer, and
  Lawrence}{Quinonero-Candela
  et~al.}{2022}]{quinonero:sugiyama:schwaighofer:lawrence:2022}
Quinonero-Candela, J., M.~Sugiyama, A.~Schwaighofer, and N.~D. Lawrence (2022).
\newblock {\em Dataset shift in machine learning}.
\newblock London: MIT Press.

\bibitem[\protect\citeauthoryear{Resnick}{Resnick}{2007}]{resnick:2007}
Resnick, S.~I. (2007).
\newblock {\em Heavy-tail Phenomena: Probabilistic and Statistical Modeling}.
\newblock New York: Springer.

\bibitem[\protect\citeauthoryear{R{\"o}thenh{\"a}usler, Meinshausen, and
  P.~B{\"u}hlmann}{R{\"o}thenh{\"a}usler
  et~al.}{2021}]{rothensausler:meinshausen:buhlmann:peters:2021}
R{\"o}thenh{\"a}usler, D., N.~Meinshausen, and J.~P. P.~B{\"u}hlmann (2021).
\newblock Anchor regression: Heterogeneous data meet causality.
\newblock {\em Journal of Royal Statistical Society. Series B.\/}~{\em 83},
  215--246.

\bibitem[\protect\citeauthoryear{Shen and Meinshausen}{Shen and
  Meinshausen}{2023}]{shen:meinshausen:2023}
Shen, X. and N.~Meinshausen (2023).
\newblock Engression: Extrapolation for nonlinear regression?
\newblock {\em arXiv:2307.00835\/}.

\bibitem[\protect\citeauthoryear{Shimodaira}{Shimodaira}{2000}]{shimodaira:2000}
Shimodaira, H. (2000).
\newblock Improving predictive inference under covariate shift by weighting the
  log-likelihood function.
\newblock {\em Journal of Statistical Planning and Inference\/}~{\em 90},
  227--244.

\bibitem[\protect\citeauthoryear{Sugiyama, Krauledat, and
  M{{\"u}}ller}{Sugiyama et~al.}{2007}]{sugiyama:krauledat:muller:2007}
Sugiyama, M., M.~Krauledat, and K.-R. M{{\"u}}ller (2007).
\newblock Covariate shift adaptation by importance weighted cross validation.
\newblock {\em Journal of Machine Learning Research\/}~{\em 8\/}(35),
  985--1005.

\bibitem[\protect\citeauthoryear{Velthoen, Dombry, Cai, and Engelke}{Velthoen
  et~al.}{2023}]{gbex}
Velthoen, J., C.~Dombry, J.-J. Cai, and S.~Engelke (2023).
\newblock {Gradient boosting for extreme quantile regression}.
\newblock {\em Extremes\/}~{\em 26}, 639--667.

\bibitem[\protect\citeauthoryear{Vignotto and Engelke}{Vignotto and
  Engelke}{2020}]{vig2020}
Vignotto, E. and S.~Engelke (2020).
\newblock Extreme value theory for anomaly detection {---} the gpd classifier.
\newblock {\em Extremes\/}~{\em 23}, 501--520.

\bibitem[\protect\citeauthoryear{Xu, Zhang, Li, Du, Kawarabayashi, and
  Jegelka}{Xu et~al.}{2021}]{xu2021how}
Xu, K., M.~Zhang, J.~Li, S.~S. Du, K.-I. Kawarabayashi, and S.~Jegelka (2021).
\newblock How neural networks extrapolate: From feedforward to graph neural
  networks.
\newblock In {\em International Conference on Learning Representations}.

\bibitem[\protect\citeauthoryear{Youngman}{Youngman}{2019}]{ExGAM2}
Youngman, B.~D. (2019).
\newblock {Generalized Additive Models for Exceedances of High Thresholds With
  an Application to Return Level Estimation for U.S. Wind Gusts}.
\newblock {\em {J. Am. Stat. Assoc.}\/}~{\em 114\/}(528), 1865--1879.

\end{thebibliography}

\appendix

\section{Preliminaries on extended regular variation}\label{sec:appendix}

\subsection{Extended regular variation}\label{sec:app:1}

The following auxiliary functions will be useful for proving our results.  
\beam\label{eq:notation}
U(t) = Q(1 - 1/t), \qquad a(t) = \sigma\{U(t)\}, \qquad  A(t)  = \Sigma\{U(t)\},
\eeam 
for $t\in\mathbb{R}$. 
The proof of Theorem~\ref{thm:main} relies on the theory of regularly varying and extended regularly varying functions, which studies the approximation of the monotone function $U(x)$ as $x \to \infty$. We review below relevant definitions and important results that we use to prove Theorem~\ref{thm:main}.

By \cite[Theorem 1.1.6]{dehaan:ferreira:2006}, the relation in~\eqref{eq:domain:attraction:1}
is equivalent to the first-order condition stated in terms of the tail quantile function 
\begin{equation}\label{eq:first:order:U}
	\lim_{t \to \infty} \frac{U(tx) - U(t)}{a(t)} = \frac{x^\gamma - 1}{\gamma}, \quad x >0,
\end{equation}
and when $\gamma = 0$ the right-hand side should be read as the limit as $ \gamma \downarrow 0$, 
yielding $\log x$. 
In this sense, the above extrapolation is a first-order approximation of the tail of $Y$. 
We can thus the derive the following properties relating the functions $U$ and $a$. 
By \citet[][Lemma 1.2.9]{dehaan:ferreira:2006}, we have that if $\gamma > 0$,
then $U(t) \to \infty$ and $U(t)/a(t) \to 1/\gamma$, as $t \to \infty$, and if $\gamma = 0$, then $a(t)/U(t) \to 0$, as $t \to \infty$.

A consequence of \eqref{eq:first:order:U} that follows from \cite[Proposition B.2.17]{dehaan:ferreira:2006} together with \cite[Theorem B.2.18]{dehaan:ferreira:2006} is given in the next proposition.
\begin{proposition}\label{lema:potters:bounds}
	Assume $Y$ satisfies the first-order domain of attraction condition \eqref{eq:first:order:U} for $\gamma \geq 0$.   Then, there exists a positive function $a_0$ such that $a_0(t)/a(t) \to 1$, as $t \to \infty$, and for all $ \eta, \delta > 0$ there exists $t_0 = t_0(\eta,\delta) > 0$, such that for $t , tx \geq  t_0 $,
	\beam\label{eq:potter}
	\left| \frac{U(tx) - U(t) }{a_0(t)} - \frac{x^\gamma - 1}{\gamma} \right| \leq \delta \max\{ x^{\gamma +\eta}, x^{\gamma - \eta}\}.   
	\eeam  
\end{proposition}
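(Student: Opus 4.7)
The plan is to deduce this Potter-type inequality from standard results in the theory of extended regular variation, specifically Proposition B.2.17 and Theorem B.2.18 of \citet{dehaan:ferreira:2006}, as indicated in the text immediately preceding the proposition.

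First, I would reformulate \eqref{eq:first:order:U} as the statement that $U \in \mathrm{ERV}_\gamma$ with auxiliary function $a$. Theorem B.2.18 then furnishes a Karamata-type integral representation of $U$, which for $\gamma \geq 0$ takes the form
\[
U(t) = c_0(t) + a_0(t)\int_1^t s^{-1} \exp\Bigl(\int_s^t \phi(u)\,\frac{du}{u}\Bigr)\, ds,
\]
with $c_0(t)\to c_0$, $\phi(s)\to\gamma$ as $s\to\infty$, and a measurable $a_0$ obtained by a suitable integral smoothing of $a$. This construction produces the distinguished auxiliary function $a_0$ satisfying $a_0(t)/a(t)\to 1$ and, in addition, makes $a_0$ itself regularly varying with index $\gamma$, absorbing the slowly varying oscillation of $a$.

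Second, with this $a_0$ in hand, I would apply Proposition B.2.17 of \citet{dehaan:ferreira:2006}. Rescaling the representation gives
\[
\frac{U(tx)-U(t)}{a_0(t)} = \int_1^x \frac{a_0(ts)}{a_0(t)}\,\exp\Bigl(\int_1^s \{\phi(tu)-\gamma\}\,\frac{du}{u}\Bigr)\, ds + o(1),
\]
locally uniformly in $x$. Since $\phi(tu)\to\gamma$ pointwise, the integrand converges to $s^{\gamma-1}$, producing the limit $(x^\gamma-1)/\gamma$. The Potter-type envelope then arises by choosing $t_0$ large enough that $|\phi(tu)-\gamma|<\eta$ and $|a_0(ts)/a_0(t) - s^\gamma|<\delta\max\{s^{\gamma+\eta},s^{\gamma-\eta}\}$ whenever $t,tu,ts\geq t_0$, and integrating.

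The main obstacle is the global uniformity for $x$ outside a compact set, since the statement only requires $t,tx\geq t_0$ and permits $x$ arbitrarily large or small. Locally uniform convergence in \eqref{eq:first:order:U} is immediate, but to extend it one must split into $x\geq 1$ and $0<x\leq 1$, and exploit the regular variation of $a_0$ together with the monotonicity of $U$ to control the contribution of the tail of the integral. The case $\gamma=0$ is the most delicate, because then $U$ itself is not regularly varying and one cannot use classical Potter bounds for $U$ directly; there I would rely on $a_0$ being slowly varying and apply Karamata's theorem to the integral representation, which together yield the required $\max\{x^{\eta},x^{-\eta}\}$ envelope and close the argument.
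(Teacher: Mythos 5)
The paper offers no proof of this proposition beyond the sentence preceding it, which attributes the statement directly to Proposition~B.2.17 and Theorem~B.2.18 of \citet{dehaan:ferreira:2006}; your proposal invokes exactly those two results and then sketches the standard textbook argument behind them (integral representation of an $\mathrm{ERV}_\gamma$ function, Potter bounds on the regularly varying auxiliary function $a_0$, and the split into $x\geq 1$ and $x\leq 1$ to get uniformity for $t,tx\geq t_0$). This is essentially the same approach as the paper, elaborated in more detail, and the plan is sound.
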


To derive sharper bounds on the tail quantile function it is customary to assume second-order conditions as \eqref{eq:second:order:intro}.
In fact, \citet[Theorem B.3.19]{dehaan:ferreira:2006} implies that~\eqref{eq:second:order:intro} is equivalent to
\begin{eqnarray}\label{eq:second:order:condition}
	\lim_{t \to \infty} \{A(t)\}^{-1} \left(\frac{U(tx) - U(t)}{ a(t) } -  \frac{x^\gamma - 1} {\gamma}\right)
	= k_{\gamma,\rho}(x)  , \quad x>0,
\end{eqnarray} 
where $k_{\gamma,\rho}$ is given by
\begin{align}\label{eq:krho}
	k(x) = k_{\gamma,\rho}(x) &=
	\begin{cases}
		\frac{1}{\rho}
		\left( 
		\frac{ x^{\gamma+\rho} - 1}{\gamma + \rho} 
		- \frac{x^\gamma-1}{\gamma} \right), & \rho \not = 0 ,\\
		\frac{1}{\gamma}
		\left( 
		x^{\gamma}\log x 
		- \frac{x^\gamma-1}{\gamma}\right), & \rho = 0, \gamma \not = 0 ,\\
		\frac{1}{2} (\log x)^2, & \rho = \gamma = 0,
	\end{cases}
\end{align}
and $\lim_{t \to \infty} A(tx)/A(t) = x^{\rho}$.
The limiting function in~\eqref{eq:second:order:intro} 
can then be written as 
\begin{align}\label{eq:hrho}
	h_{\gamma, \rho}(x) = - \{\bar H_\gamma(x)\}^{\gamma - 1} k_{\gamma, \rho}\{\bar H_\gamma(x)\}, \quad {x >0 }.
\end{align}
Moreover, assuming only that the previous limit exists, we can derive some immediate consequences.
Notably, that the function $A$ must be regularly varying with index $\rho \leq 0$, whose sign is constant eventually.

To verify first- and second-order conditions, it is often useful to rely on the so-called von Mises' condition. 
Assume $U$ admits first and second order derivatives $U^\prime$ and $U^{\prime \prime}$, respectively, and assume the second-order von Mises condition
\beam \label{eq:von:mises}
\lim_{t \to \infty} \frac{t U^{\prime \prime}(t)}{U^\prime(t)} = \gamma - 1.
\eeam 
Applying Corollary 1.1.10 in \citep{dehaan:ferreira:2006}, we can conclude that  the function $U$ satisfies \eqref{eq:first:order:U} with $a(t) = t U^\prime(t)$, $\gamma$ as in \eqref{eq:von:mises}, and by Theorem 2.3.12 in \cite{dehaan:ferreira:2006} it is possible to choose the scale function as 
$A(t) = 1-\gamma + tU^{\prime \prime }(t)/U^\prime(t)$. 

\begin{remark}
We use of the second-order von Mises assumption in \eqref{eq:von:mises} to compute the values of  $\sigma(u)$ and $\Sigma(u)$ in the examples of Section~\ref{subsec:examples:tails} upon knowledge of the function $U$ in~\eqref{eq:notation}.
\end{remark}

We provide below \cite[Theorem 2.3.6]{dehaan:ferreira:2006} establishing the main uniform inequalities for second-order regularly varying function.

\begin{theorem}\label{eq:thm:second:order:bounds}
	Let $U$ be a measurable function. Suppose there exist a positive function $a$, some  either positive or negative function $A$ with $\lim_{t \to \infty} A(t) = 0$, and parameters $\gamma \geq  0, \rho \leq 0$ such that 
	\begin{align*}
		\lim_{t \to \infty} 
		\{A(t)\}^{-1} 
		\left( 
		\frac{U(tx) - U(t)}{a(t)} - \frac{x^\gamma - 1}{\gamma}
		\right) = k_{\gamma,\rho}(x),
	\end{align*}
	for all $x > 0$ and $k_{\gamma,\rho}(x)$ as in~\eqref{eq:krho}.
	Then, for all $\epsilon, \eta >0$, there exists $t_0 = t_0(\epsilon,\eta)$ such that for all $t, tx \geq t_0$,
	\begin{align}\label{bound_2nd_unif}
		\left|\{A_0(t)\}^{-1} 
		\left( 
		\frac{U(tx) - U(t)}{a_0(t)} - \frac{x^\gamma - 1}{\gamma}
		\right) - \Psi_{\gamma,\rho}(x)\right| 
		&\leq \epsilon \max(x^{\gamma + \rho + {\eta}}, x^{\gamma+\rho-{\eta}}),
	\end{align}
	where $A_0$ and $a_0$ satisfy $a_0(t)/a(t) \to 1$, $A_0(t)/A(t) \to c \not = 0$, as $t \to \infty$, and 
	\begin{align*} \Psi_{\gamma,\rho}(x) &=
		\begin{cases} 
			\frac{ x^{\gamma+\rho} - 1}{\gamma + \rho}, & \rho < 0 ,\\
			\frac{1}{\gamma}
			x^{\gamma}\log x, & \rho = 0, \gamma > 	0 , \\
			\frac{1}{2} (\log x)^2, & \rho = \gamma = 0.
		\end{cases}
	\end{align*}
\end{theorem}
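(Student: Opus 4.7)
The plan is to recognize this as a second-order uniform Potter/Drees-type inequality for extended regularly varying functions, and to follow the classical route: upgrade the pointwise second-order convergence to a uniform bound on $(0,\infty)$ with a polynomial envelope of index $\gamma+\rho\pm\eta$. The starting point is that the hypothesis is already a statement of second-order extended regular variation, and we only need to control the remainder uniformly in $x$ at rate $A_0(t)$.

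First I would derive auxiliary regular variation of $a$ and $A$. Applying the assumed limit at $x$ and $xy$ and subtracting yields, after elementary manipulation,
\begin{align*}
\lim_{t\to\infty} \frac{a(tx)}{a(t)} = x^{\gamma}, \qquad \lim_{t\to\infty}\frac{A(tx)}{A(t)} = x^{\rho},
\end{align*}
so $a\in\text{RV}(\gamma)$ and $A\in\text{RV}(\rho)$. This allows Potter-type bounds on both $a$ and $A$ individually: for any $\eta>0$, one has $a(tx)/a(t)\le (1+\eta)\max(x^{\gamma+\eta},x^{\gamma-\eta})$ for $t,tx$ sufficiently large, and similarly for $A$ with index $\rho$.

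Second, I would establish the uniform bound on any fixed compact interval $[1/M,M]\subset(0,\infty)$. Since the pointwise limit $k_{\gamma,\rho}$ is continuous and the pre-limit expression is monotone (or can be sandwiched by monotone functions since $U$ is a tail-quantile-type function), a Dini-type argument upgrades pointwise to uniform convergence on compact sets. Thus for every $\epsilon>0$ and every $M>1$ there is $t_0$ with the second-order remainder bounded by $\epsilon$ for all $x\in[1/M,M]$ and $t\ge t_0$.

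Third — the main technical obstacle — is the extension from compact uniformity to the polynomial envelope on all of $(0,\infty)$. The idea is the classical chaining: write $x=\prod_{k=1}^{N}x_k$ with $x_k\in[1/M,M]$ and iterate the one-step identity
\begin{align*}
\frac{U(tx_1\cdots x_N)-U(t)}{a(t)} = \sum_{k=1}^N \left(\prod_{j<k}x_j\right)^{\!\gamma}\frac{U(t\,x_1\!\cdots x_k)-U(t\,x_1\!\cdots x_{k-1})}{a(t\,x_1\!\cdots x_{k-1})}\cdot\frac{a(t\,x_1\!\cdots x_{k-1})}{a(t)},
\end{align*}
which, combined with Potter bounds on $a$ and on $A$, produces majorants of the form $\max(x^{\gamma+\rho+\eta},x^{\gamma+\rho-\eta})$ after multiplying out $N$ factors and letting $N$ be large enough to absorb arbitrary $x$. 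The ``modified'' auxiliary functions $a_0$ and $A_0$ enter precisely here: they are constructed explicitly — e.g., $a_0(t):=\gamma t/\int_t^\infty a(s)/s^2\,ds$ (and analogously for $A_0$) — so that the $\eta$ in the exponent is not inflated by the iteration, and the asymptotic relations $a_0/a\to 1$ and $A_0/A\to c$ hold by the regular variation established in Step 1.

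Finally, the three cases of $\Psi_{\gamma,\rho}$ must be treated separately, since $k_{\gamma,\rho}$ degenerates as $\rho\to 0$ and as $\gamma\to 0$: for $\rho<0$ the limit is $(x^{\gamma+\rho}-1)/(\gamma+\rho)$ by direct substitution; for $\rho=0,\gamma>0$ one applies l'Hôpital in $\rho$ inside $k_{\gamma,\rho}$, producing the $x^\gamma\log x$ term; and for $\gamma=\rho=0$ one iterates this limit, yielding $\tfrac12(\log x)^2$. The main obstacle is the construction of $a_0,A_0$ and the verification that the chaining argument in Step 3 produces exactly the exponent $\gamma+\rho\pm\eta$ rather than a worse envelope; everything else is bookkeeping with Potter bounds.
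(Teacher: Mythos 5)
You should first note that the paper does not prove this statement at all: it is quoted verbatim as \cite[Theorem 2.3.6]{dehaan:ferreira:2006} (the Drees-type uniform inequalities for second-order extended regular variation), so there is no in-paper proof to compare against. Your sketch follows the classical route to that textbook theorem, and the overall strategy --- regular variation of $a$ and $A$, local uniform convergence, then a telescoping/Potter extension with modified auxiliary functions $a_0, A_0$ --- is indeed the right one.

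That said, if you actually tried to execute the sketch, three points would fail or need substantial repair. First, the displayed telescoping identity is wrong as written: the factor $\bigl(\prod_{j<k}x_j\bigr)^{\gamma}$ and the ratio $a(t x_1\cdots x_{k-1})/a(t)$ cannot both appear, since their product double-counts; the exact identity carries only the $a$-ratio, and the power $x^\gamma$ emerges only after approximating that ratio, which is precisely where the error terms you must control come from. Second, the Dini argument for uniformity on compacts does not apply: the theorem assumes only that $U$ is measurable, so you cannot invoke monotonicity of the pre-limit expression; the correct tool is the uniform convergence theorem for second-order extended regularly varying functions, which itself rests on a representation theorem (de Haan--Ferreira, Appendix B.3). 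Third, and most seriously, the construction of $a_0$ and $A_0$ is the crux and your proposed formula $a_0(t)=\gamma t/\int_t^\infty a(s)s^{-2}\,ds$ is only a first-order smoothing. To obtain the envelope with exponent exactly $\gamma+\rho\pm\eta$ you need $a_0$ to satisfy its own second-order relation, i.e.\ $a_0(tx)/a_0(t)-x^\gamma$ must be of order $A_0(t)x^{\gamma+\rho}$ uniformly; otherwise the chaining step contributes an error of order $x^\gamma$ times the first-order remainder of $a$, which dominates the claimed bound when $\rho<0$. This simultaneous control of $U$, $a_0$ and $A_0$ is exactly what the integral representations in Drees's proof are built for, and it cannot be replaced by generic Potter bounds on $a$ and $A$ separately.
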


From \eqref{bound_2nd_unif} in Theorem 3
we obtain the following quantile tail approximations.   
First, if $\gamma > 0$, $\rho  = 0$, then, for all $\eta>0$, $t > t_0(\eta)$, $x\geq 1$,
\begin{align}\label{eq:ftx:expansion}
	U(tx) = U(t) + a(t)\frac{x^\gamma -1}{\gamma} \left\{
	1+ O(A(t)x^\eta )\right\}, \quad t \to \infty.
\end{align}
Instead if $\gamma = 0$, and $\rho = 0$, then
\begin{align}\label{eq:ftx:expansion:rho0}
	U(tx) = U(t) + a(t)\log(x)\left\{1+ O(A(t)x^\eta )\right\}, \quad t \to \infty.
\end{align}
Moreover, if $\rho < 0$ then we can replace the $O(A(t)x^\eta)$ term in \eqref{eq:ftx:expansion} and \eqref{eq:ftx:expansion:rho0} simply by $O(A(t))$, such that $A(t) \to 0$ and $t \to \infty$.

\section{Proof of Lemma~\ref{lem:uniform:tail:approximation} }\label{sec:proof:uniform:tail}

\begin{proof}[Lemma~\ref{lem:uniform:tail:approximation}]
Assume that $F$ satisfies the first-order condition \eqref{eq:first:order:U} for $\gamma \geq  0$.
 Consider a sequence $\nu(u)$ such that \eqref{eq:extra:condition} holds.
 Let $ x\in (u,\nu(u))$, and define $\tau = F(x)$ and $\tauzero = F(u)$. We first show that \eqref{eq:f:tilde:approx} holds. Let
	\beam \label{eq:cond:y}
	y = \frac{1-\tauzero}{ 1-\tau}  = \frac{1-F(u)}{ 1-F(x) }   > 1,
	\eeam 
	and $\tzero = 1 / (1-\tauzero)$. We can readily check that $u = U(\tzero)$ and  $x = U(t_0 y)$.
	Consider the first-order auxiliary function $a(t)$ in \eqref{eq:first:order:U}. 
	With the notation from \eqref{eq:notation}, we have for $\gamma > 0$ that
	\begin{align*}
		\frac{1 - \tilde F(x) }{1-F(x)}  &=
		\frac{(1-\tauzero)}{(1-\tau)} \left(  1+  \gamma \frac{x - u }{ \sigma(u) } \right)_+^{-1/\gamma}\\
		&= y \left(  1+  \gamma \frac{ U(\tzero y) - U(\tzero)} { a(\tzero) } \right)_+^{-1/\gamma}\\
		&= \left(  1 + \frac{\gamma}{y^\gamma} \left[  \frac{ U(\tzero y) - U(\tzero)} { a(\tzero) } - \frac{y^\gamma - 1}{\gamma} \right] \right)_+^{-1/\gamma}.
 	\end{align*}
	For $\gamma = 0$, this can be understood as the limit as $\gamma \to 0$, i.e.,
	\begin{align*}
		\frac{1 - \tilde F(x) }{1-F(x)}  &= \exp \left\{- \left[  \frac{ U(\tzero y) - U(\tzero)} { a(\tzero) } - \log y \right]\right\}.
 	\end{align*}
 	Recall that by \eqref{eq:extra:condition}  together with \eqref{eq:cond:y} we have for some large enough constant $C>0$ that  
	\begin{align}\label{y_bound} 
	1 \leq \lim_{u \to \infty} \sup_{u < x < \nu(u)} y  =  \lim_{u \to \infty} \frac{1 - F(u)}{1-F(\nu(u)) } \leq C <  \infty.
	\end{align}
	Applying Proposition~\ref{lema:potters:bounds}, for any $\eta>0$, and for any $\delta >0$, there exists $t = t(\eta,\delta) > 0$ such that for all $y \geq 1$, $t_0 \geq t$, 
 	\begin{align*}
 		(1 + \gamma \delta C^{\eta} )^{-1/\gamma} 	\leq \left(  1 + \frac{\gamma}{y^\gamma} \left[  \frac{ U(\tzero y) - U(\tzero)} { a(\tzero) } - \frac{y^\gamma - 1}{\gamma} \right] \right)_+^{-1/\gamma}
 		&\leq 
 		(1 - \gamma \delta C^{\eta} )^{-1/\gamma}.
 	\end{align*} 
 		Furthermore, this implies
 	\begin{align*}
 		\lim_{u \to \infty} \sup_{ u< x < \nu(u)} \left| \frac{1 - \tilde F(x) }{1-F(x)}  - 1 \right| \leq \max\{ 1- (1 + \gamma \delta C^{\eta} )^{-1/\gamma}, (1 - \gamma \delta C^{\eta} )^{-1/\gamma} - 1\} .
 	\end{align*}		
 		Again, the previous relation should be understood as the limit as $\gamma \to 0$. Choosing $\delta >0$ small enough yields \eqref{eq:f:tilde:approx}.

		We now assume that $Y$ satisfies the second-order condition \eqref{eq:second:order:intro} with index $\gamma \geq 0$, auxiliary function $\Sigma$ and second-order parameter $\rho \leq 0$. Recall this is equivalent to condition  \eqref{eq:second:order:condition} where the function $A$ is defined in \eqref{eq:notation}. 
Consider a sequence $\nu(u)$ such that \eqref{eq:extra:condition:rho1} holds for some $\eta>0$ and let $x\in(u,\nu(u))$. Using~\eqref{bound_2nd_unif} in Theorem~\ref{eq:thm:second:order:bounds}, we now get the sharper bound for any $\tzero > t(\eta)$ large enough and for all $y \geq 1$ and some constant $c>0$ 
 	\begin{align*}
 		(1 + c |A(\tzero)| y^{\rho+\eta})^{-1/\gamma} 	\leq \left(  1 + \frac{\gamma}{y^\gamma} \left[  \frac{ U(\tzero y) - U(\tzero)} { a(\tzero) } - \frac{y^\gamma - 1}{\gamma} \right] \right)_+^{-1/\gamma}
 		&\leq 
 		(1 -  c |A(\tzero)| y^{\rho+\eta} )^{-1/\gamma}.
 	\end{align*} 
	From assumption \eqref{eq:extra:condition:rho1} we observe that
	\begin{align}
		\notag \sup_{ u< x < \nu(u)}  |A(t_0)|y^{\rho + \eta} &= \sup_{ u< x < \nu(u)} |\Sigma(u)|\left( \frac{1-F(u)}{1-F(x )} \right)^{\rho + \eta} \\
		\label{Sigma_bound} &=  |\Sigma(u)|\left( \frac{1-F(u)}{1-F(\nu(u))} \right)^{\rho + \eta} \to 0, \quad u\to\infty,
	\end{align}
	where we used that $u = U(t_0)$ and thus $A(t_0) = \Sigma\{U(t_0)\} = \Sigma(u)$; see \eqref{eq:notation} and the definition of $y$  in \eqref{eq:cond:y}. Consequently, combining everything yields
	\begin{align*}
		\sup_{ u< x < \nu(u)} \left| \frac{1 - \tilde F(x) }{1-F(x)}  - 1 \right| &\leq \sup_{ u< x < \nu(u)} \max\{ 1- (1 +  c |A(\tzero)| y^{\rho+\eta} )^{-1/\gamma}, (1 -  c |A(\tzero)| y^{\rho+\eta} )^{-1/\gamma} - 1\}\\
		& \to 0, \quad u\to\infty.
	\end{align*}	
	If $\gamma = 0$ all steps remain true where the expressions are interpreted as the limits as $\gamma \to 0$.

 Concerning the quantile approximation, consider a sequence $\theta(\tau_0)$ such that \eqref{eq:extra:condition} holds. Let $\tau \in ( \tau_0,\theta(\tau_0))$ and $y$ as in \eqref{eq:cond:y}. 
	By the definition of $\tilde Q(\tau)$ in~\eqref{eq:quantile:tail:approximation} we have 
 \begin{align*}
 \frac{ Q(\tau) - \tilde Q(\tau)} {Q(\tau) } & = \frac{ Q(\tau) - \tilde Q(\tau)} {  \sigma\{Q(\tau_0)\}}  \frac{\sigma\{Q(\tau_0)\}}{ Q(\tau)}\\
	& = \left[ \frac{ Q(\tau) - Q(\tauzero) - \sigma\{Q(\tau_0)\}\frac{y^\gamma - 1}{\gamma}} {\sigma\{Q(\tau_0)\}} \right] \frac{\sigma\{Q(\tau_0)\}}{ Q(\tau)} \\
	& = \left[\frac{ U(\tzero y) - U(\tzero)} { a(\tzero) } - \frac{y^\gamma - 1}{\gamma}\right] \frac{a(\tzero)}{U(t_0 y)} , 
\end{align*}
where we used that $U(t_0) = U(1/(1-\tau_0)) = Q(\tau_0)$, and $a(t_0) = \sigma\{U(t_0)\} = \sigma\{Q(\tau_0)\}$, which follows from \eqref{eq:notation}. Moreover, the case $\gamma = 0$, is understood as the limit as $\gamma \to 0$.
The fact that the function $U$ is regularly varying with index $\gamma \geq 0$ implies that for all $\kappa > 0$, there exists $t = t(\kappa )$, such that for all $t_0 > t$ and $y \geq 1$, such that
\beam \label{eq:sigma:q}
  \left| \frac{a(t_0)}{U(t_0 y ) }  \right| =  \left| \frac{a(t_0)}{U(t_0 ) }  \right| \left| \frac{U(t_0)}{U(t_0 y) }  \right|  \leq (\gamma +\kappa ) (1+\kappa )  y^{-\gamma + \kappa },
\eeam
where we used that $a(t) / U(t) \to \gamma$ as $t\to\infty$ \citep[Lemma 1.2.9]{dehaan:ferreira:2006} and applied the Potter bounds in \cite[Proposition B.1.9]{dehaan:ferreira:2006}. 
Consequently, by Proposition~\ref{lema:potters:bounds}, we find that for any $\eta > 0$, $\kappa > 0$ a $\tzero$ large enough that 
\begin{align}\label{eq:bound:q:sigma}
	\sup_{\tau_0 < \tau < \theta(\tau_0) }\left| \frac{ Q(\tau) - \tilde Q(\tau)} { Q(\tau)}   \right| = \sup_{\tau_0 < \tau < \theta(\tau_0) } \left|\frac{ U(\tzero y) - U(\tzero)} { a(\tzero) } - \frac{y^\gamma - 1}{\gamma}\right| \left| \frac{a(\tzero)}{ U(t_0y)} \right| \leq (\gamma + \kappa  ) C^{\gamma + \eta},
\end{align}
where we used the bounds for $y$ in~\eqref{y_bound}.
Letting $\tauzero$ go to one and then $\kappa$ go to zero yields~\eqref{eq:U:tilde:approx}.
We now assume again that $Y$ satisfies the second-order condition \eqref{eq:second:order:intro} with index $\gamma \geq 0$, auxiliary function $\Sigma$ and second-order parameter $\rho \leq 0$. 
Consider a sequence $\theta(\tauzero)$ such that \eqref{eq:extra:condition:rho1} holds for some $\eta>0$ and let $\tau \in (\tau_0, \theta(\tau_0) )$. Using Theorem~\ref{eq:thm:second:order:bounds} and the derivations in~\eqref{eq:sigma:q} and~\eqref{eq:bound:q:sigma}, 
we obtain that for any $\kappa  > 0$ such that $\kappa < \eta$ there exists some constants $c>0$, such that
\begin{align*}
	\sup_{\tau_0 < \tau < \theta(\tau_0) }\left| \frac{ Q(\tau) - \tilde Q(\tau)} { Q(\tau)}   \right| &\leq  c |A(\tzero)| \sup_{\tau_0 < \tau < \theta(\tau_0) } y^{\gamma+\rho+\kappa }
	\left|  \frac{a(t_0)}{U(t_0 y ) } \right|. 
\end{align*}
Finally we conclude there exists $\tilde c > 0$ such that
\begin{align*}
	\sup_{\tau_0 < \tau < \theta(\tau_0) }\left| \frac{ Q(\tau) - \tilde Q(\tau)} { Q(\tau)}   \right| &\leq  \tilde c |A(\tzero)| \sup_{\tau_0 < \tau < \theta(\tau_0) } y^{\rho+\eta}\\
	& = \tilde c |\Sigma\{Q(\tau_0)\}| \left\{ \frac{1-\tau_0}{1-\theta(\tau_0)} \right\}^{\rho + \eta} \to 0, \quad \tauzero \to 1,
\end{align*}
where for the last line we have used the the observation in~\eqref{Sigma_bound}, such that $Q(\tau_0) = U(1/(1-\tau_0))= U(t_0)$,
and assumption~\eqref{eq:extra:condition:rho1}.
\end{proof}

\section{Proof Lemma~\ref{lem:median:equation}}
\begin{proof}
	Recall $Y^* = Q_L \circ F_Y(Y)$, $X = Q_X \circ F_L(X^*)$, and $x^* = Q_L \circ F_X(x)$. Then, since the median function is invariant to monotone transformations we have, for all $x^* \in \mathbb{R}$,
	\begin{align*}
		\M[Y^* | X^* = x^*]  &= Q_L \circ F_Y( \M[Y | X = Q_X\circ F_L(x^*)] ) \\
		&= Q_L \circ F_Y \circ m(Q_X\circ F_L(x^*)).
	\end{align*} 
	Moreover, note that $\bar F_Y\circ m(x) \to \infty$, as $x\to \infty$. Indeed, this follows because we assumed that $F_X$ and $F_Y$ have infinite upper endpoints and also $ \bar F_Y\circ m(x) \sim (1+c) \bar F_X(x)$, as $x \to \infty$. 
	In addition recall,
	$Q_L(y) =  -\log(2(1-y))$, for $y > 1/2$. 
	The two previous relations imply there exists $x^*_0 \in \mathbb{R}$ such that for all $x^* > x^*_0$  we have 
	\beao 
	\M[Y^* | X^* = x^*] = -\log\{2\bar F_Y \circ m(Q_X \circ F_L(x^*))\}, \quad x^* > x^*_0.
	\eeao 
	Finally, using the fact that $\bar F_Y\circ m(x) \to \infty$, as $x\to \infty$, we have
	\begin{align*}
		&\M[Y^* | X^* = x^*]\\
		&= -\log\{2\bar F_Y \circ m(Q_X \circ F_L(x^*))\}\\
		&= -\log\{2\bar F_X (Q_X \circ F_L(x^*))\} - \log(1/(1+c))
		- \log\left\{   \frac{ \bar F_Y \circ m(Q_X \circ F_L(x^*))} {(1+c)\bar F_X  (Q_X \circ F_L(x^*))}  \right\} \\
		&= x^* - \log(1/(1+c)) + o(1), \quad x \to \infty.
	\end{align*}
	Here the last relation follows again since $Q_X \circ F_L(x^*) \to \infty$, as $x \to \infty$. This last equality yields the desired result. 
\end{proof}

\begin{remark}
	It is natural to wonder what would happen in the case of the conditional quantile at level $\tau$ of $Y$ given $X$: $q_\tau(x) = Q_\tau(Y|X=x)$, for $\tau \in (0,1)$, not necessarily equal to $1/2$. 
	In this case, we can see that the proof of Lemma~\ref{lem:median:equation} can be replicated to show that if $\bar F_X\circ q_\tau(x) \sim (1+c_\tau)\bar F_X(x)$, then 
	\beao 
	|Q_\tau(Y^* \mid X^* = x^*) - x^* - \log(1/(1+c_\tau))| \to 0, \quad x^* \to \infty.
	\eeao 
\end{remark}
\section{Proof Theorem~\ref{thm:main}}\label{sec:proof:thm:main}

We assume that $a \in [0,1]$ in Assumption~\ref{main_assumption_text} throughout the proof; the case $a\in[-1,0)$ is similar.
Before starting, we recall the notation from~\eqref{eq:notation}
\begin{equation*}
U_X(t) = Q_X(1 - 1/t), \qquad a_X(t) = \sigma\{U(t)\}, \qquad  A_X(t)  = \Sigma_X\{U_X(t)\},
\end{equation*}
and similarly for $Y$.
Let $\tauzero \in (0,1)$ and denote $t_0 =  1/(1-\tauzero)$, 
and $ u_X = U_X(t_0)$ and $u_Y = U_Y(t_0)$.
Let also $a_X = a_X(t_0)$ and $a_Y = a_Y(t_0)$, where here we drop the dependence on $t_0$ to simplify notation. 
While the results and assumption in Theorem~\ref{thm:main} are all stated in terms of $\tauzero$, for the proof it will be easier to state everything in terms of $t_0$ and $u_X$. In particular, we emphasize that 
$u_X = Q_X(\tauzero)$ and $\nu(u_X)= \nu\{Q_X(\tauzero)\} := \nu(\tauzero)$.

We recall the formulas that make the three approximation layers (i), (ii), (iii), described in Section~\ref{sec:extrapolation:principle} more explicit.
The population approximation of the tail of the distribution function $F_X$ through the GPD from~\eqref{eq:tail:approximation:tilde:1} is 
\begin{align*}
	\tilde F_X(x) &= 1 - t_0^{-1}\left\{ 1 - H_{\gamma_X}\left(\frac{x - u_X}{a_X}\right) \right\}, \\
	& \stackrel{\gamma_X > 0}{=} 1 - t_0^{-1} \left(\frac{x-u_X}{u_X}\right)^{-1/\gamma_X} , \qquad x > u_X, \\
	& \stackrel{\gamma_X = 0}{=} 1 - t_0^{-1} \exp \left(-\frac{x - u_X}{a_X}\right), \qquad x > u_X .
\end{align*}
 Here, $H_{\gamma_X}$ is the distribution function of a GPD with shape parameter $\gamma_X \geq 0$ defined in \eqref{eq:domain:attraction:1}. 
Note that in the case $\gamma_X > 0$ we can choose the function $a_X(t) = \gamma_X u_X(t)$ \citep[][Theorem B.2.2]{dehaan:ferreira:2006}, which leads to the simplification above.
For $x \leq u_X$, we simply set $\tilde F_X(x) = F_X(x)$, meaning no tail approximation is used.

For a predictor value $x \geq u_X$, its transformation to Laplace margins $x^*$ and its corresponding approximation $\tilde x^*$ are given, respectively, by
\begin{align}\label{eq:trans:scale}
	x^* &:= Q_L \circ F_X(x) = -\log 2 + \log\{1/(1-F_X(x)) \},  \\
	\tilde x^* &:= Q_L \circ \tilde F_X(x) = -\log 2 + \log t_0 - \log \left\{ 1 - H_{\gamma_X}\left(\frac{x - u_X}{a_X}\right) \right\}, \nonumber \\
	&  \stackrel{\gamma_X > 0}{=} -\log 2 + \log\left\{ t_0 \left(\frac{x-u_X}{u_X}\right)^{1/\gamma_X}\right\}, \nonumber \\
	& \stackrel{\gamma_X = 0}{=} -\log 2 + \log t_0 + \frac{x - u_X}{a_X}, \nonumber
\end{align}
where $Q_L$ is the quantile function of the Laplace distribution and $\tilde F_X$ is the one previously computed.
Recall from \eqref{eq:notation} that  $U_Y(t) = Q_Y(1-1/t)$, thus  the GPD quantile approximation of the tail of $Y$ can be derived from \eqref{eq:quantile:tail:approximation} and can then be written as 
\begin{align}\label{eq:approx:U}
	\tilde U_Y(t) &= u_Y + \frac{a_Y}{\gamma_Y} \left\{ \left(\frac{t}{t_0}\right)^{\gamma_Y} - 1\right\} \\
	&\stackrel{\gamma_Y > 0}{=} u_Y \left(\frac{t}{t_0}\right)^{\gamma_Y} \nonumber \\
	& \stackrel{\gamma_Y = 0}{=} u_Y + a_Y \log\left(\frac{t}{t_0}\right) \nonumber ,
\end{align}
for $t > t_0$, and  again the simplification in the case $\gamma_Y > 0$ follows as $a_Y(t) = \gamma_Y a_Y(t)$. For  $ t \leq t_0$ we set $\tilde U_X(t) = U_X(t)$.
Finally, we denote the true median function on Laplace scale and its approximation function by
\begin{align*}
	m^*(x^*) &= \text{median}( Y^*  \mid X^* = x^*), \qquad x^* \in \mathbb R,\\
	\tilde m^*(x^*) &= a x^* + (x^*)^\beta b, \qquad x^* \in\mathbb R,
\end{align*}
respectively,
where $x^* = Q_L \circ F_X(x)$, $X^* = Q_L\circ F_X(X)$ and $Y^* = Q_L\circ F_Y(Y)$.

We now show some auxiliary results that will be used in the proof.
The first observation is that the approximate median behaves asymptotically as its true counterpart
	\beam\label{eq:m:to:infinity} 
	\left| \frac{m^*(x^*)}{ \tilde m^*(x^*)}  - 1 \right|   =  \left|  \frac{m^*(x^*) - \tilde m^*(x^*)}{\tilde m^*(x^*)} \right|  = \left| \frac{r(x^*)}{\tilde m^*(x^*)} \right| = o(1),
	\eeam 
where the residual function $r$ is defined in~\eqref{convergence:median:1}, and our assumption is that $r(x^*) = O\{\log(x^*)\}$ in the worst case, and $\tilde m^*(x^*)$ behaves as $a x^*$ if $a\in(0,1]$ and as $(x^*)^\beta b$ if $ a = 0, b, \beta > 0$.

The approximation of the transformed predictor converges uniformly to its target quantity, that is,
\begin{align}\label{xstar}
	\lim_{u_X \to \infty } \sup_{ u_X  <  x < \nu(u_X) } |x^* - \tilde x^*|  &= \lim_{u_X \to \infty} \sup_{  u_X < x < \nu(u_X) } \log \left|  \frac{1 - \tilde F_X(x)}{1-F_X(x)}\right| = 0.
\end{align}
This follows directly from Lemma~\ref{lem:uniform:tail:approximation}.
Here and in the sequel we always assume that the extrapolation boundary $\nu(u_X)$ is chosen as in~\eqref{eq:extra:condition} or~\eqref{eq:extra:condition:rho1}, depending on whether first- or second-order assumptions are satisfied, respectively.
 This also implies that 
\begin{align}\label{mstar}
	\lim_{u_X \to \infty } \sup_{ u_X  <  x < \nu(u_X) } |\tilde m^*(x^*) - \tilde m^*(\tilde x^*)|  &=  \lim_{u_X \to \infty } \sup_{ u_X  <  x < \nu(u_X) } | a(x^* - \tilde x^*) + b\{(x^*)^\beta - (\tilde x^*)^\beta\}| = 0,
\end{align}
since the function $x^\beta$, $\beta <1$, is Lipschitz for $x$ bounded away from 0. For future reference, we note that this, together with~\eqref{eq:m:to:infinity} , implies that in all cases we have
\begin{align}\label{m_infty}
	\lim_{x^* \to\infty} m^*(x^*) = \lim_{x^* \to\infty} \tilde m^*( x^*) = \lim_{x^* \to\infty} \tilde m^*( \tilde x^*) = \infty.
\end{align}
Moreover, it follows that 
\begin{align}\label{mstar_ratio}
	\lim_{u_X \to \infty } \sup_{ u_X  <  x < \nu(u_X) } \left|\frac{\tilde m^*(x^*)}{\tilde m^*(\tilde x^*)} - 1\right|  = 0.
\end{align}
A direct consequence is that for $a > 0$ we have
\begin{align}\label{mtilde}
	&\lim_{u_X \to \infty} \sup_{ u_X  <  x < \nu(u_X) }  \left|\frac{\tilde m^*(\tilde x^*)}{a x^*} - 1 \right| \nonumber \\
	& =  \lim_{u_X \to \infty} \sup_{ u_X  <  x < \nu(u_X) }   \left|\frac{(a x^* + (x^*)^\beta b) \tilde m^*(\tilde x^*)/ \tilde m^*( x^*)  - a x^*}{a x^*} \right| = 0,
\end{align}
and if $a = 0$ and $b,\beta > 0$ then
\begin{align}\label{mtilde:2}
	&\lim_{u_X \to \infty} \sup_{ u_X  <  x < \nu(u_X) }  \left|\frac{\tilde m^*(\tilde x^*)}{(x^*)^\beta b  } - 1 \right| = 0.
\end{align}

With the notation introduced at the beginning of this section, the regression function can be written as
\begin{align*} 
	m(x) &= \text{median}( Y \mid X = x) \\ 
	&= Q_Y^{-1} \circ F_L( m^*(x^*) )\\
	& = U_Y(2\exp\{m^*(x^*)\})\\
	& = U_Y(\tzero z),
\end{align*}
where  
\beam \label{eq:z:def}
z= 1/\big\{t_0 \big(1-F_L( m^*( x^*))\big)\big\} = 2\exp\{m^*(x^*)\}/t_0.
\eeam 
 On the other hand, the approximation $\tilde m(x)$ can be written as
\begin{align}\label{eq:ftilde:rewritting}
	\tilde m(x) &= \tilde Q_Y \circ F_L(\tilde m^*(\tilde x^*)) = \tilde U_Y(t_0 \tilde z),
\end{align}
where 
\beam \label{eq:z:extrapolation}
\tilde z = 1/\big\{t_0 \big(1-F_L( \tilde m^*( \tilde x^*))\big)\big\} = 2 \exp\{ \tilde m^*(\tilde x^*)\}/t_0.
\eeam
We note that $\tilde z = \tilde z(x)$ depends on the input predictor $x\in\mathbb R$ and would like to establish a uniform upper bound over the extrapolation interval $x\in (u_X,\nu(u_X))$.
 Since either $a=1$ and $\beta = 0$, or $a<1$, we have 
\begin{align}\label{bound:z} 
	\tilde z(x)t_0 /2 & = \exp\{ \tilde m^*( x^*)\}\exp\{ \tilde m^*(\tilde x^*) - \tilde m^*( x^*)\}\\
	& = \exp\{ a x^* + (x^*)^\beta b\}\exp\{ o(1)\} \\
	&\leq C \exp\{ x^*\}\exp\{ o(1)\} \\
	& = C \exp\{ Q_L\circ F_X(x) \}\exp\{ o(1)\} 	
\end{align}
for some constant $C>0$ and using the definition of $x^*$; the $o(1)$ is uniform on $(u_X, \nu(u_X))$ by~\eqref{mstar}. Consequently, by the fact that $\bar F_X(u_X) = 1/t_0$ and the form of the Laplace distribution function $F_L$, we obtain a uniform upper bound
\begin{align}\label{bound:z_unif} 
	\sup_{u_X < x < \nu(u_X)} \tilde z(x) = \sup_{u_X < x < \nu(u_X)} \frac{\bar F_X(u_X)}{\bar F_X(x)}\exp\{o(1)\} = O\left(\frac{\bar F_X(u_X)}{\bar F_X(\nu(u_X))} \right).	
\end{align}
This bound can be controlled by the assumptions~\eqref{eq:extra:condition} and~\eqref{eq:extra:condition:rho1} for $F_X$ on the extrapolation limit $\nu(u_X)$ under first- and second-order conditions, respectively.
Moreover, relying on the definitions of $z$ and $\tilde z $ in \eqref{eq:z:def} and \eqref{eq:z:extrapolation}, respectively,  we have
\begin{align}\label{eq:diff:medians}
	|\tilde z/z| = |\exp\{m^*(x^*) - \tilde m^*(\tilde x^*)\}| =
	|\exp\{r(x^*) + \tilde m^*(x^*) - \tilde m^*(\tilde x^*)\}| =
	O(\exp\{r(x^*)\}),
\end{align}
where the last relation is uniform on $(u_X,\nu(u_X))$ and holds because of~\eqref{mstar}.

We now study the relative error term, which can be written as 
\begin{align}\label{eq:rewritting:relative:error}
	\frac{m(x) - \tilde m(x)}{ m(x)}& = \frac{U_Y(\tzero z) - \tilde U_Y(\tzero \tilde z)}{  U_Y(\tzero   z)}.
\end{align}
The rest of the proof distinguishes two cases: $\gamma_Y =0$ and $\gamma_Y >0$.

\subsection*{Case $\gamma_Y = 0$}

\subsubsection{Intermediate result}

We start the proof by showing the following relation holds 
\begin{align}\label{eq:no:approx:U}
\lim_{u_X \to \infty }\sup_{ u_X  <  x < \nu(u_X) }  \left| \frac{U_Y(t_0 z) - U_Y(t_0 \tilde z)}{U_Y(t_0 z )}\right| = 0,
\end{align}
both under first- and second-order assumptions. In both cases, the proof is the same, with the only difference that for~\eqref{eq:diff:medians} to hold uniformly on $(u_X,\nu(u_X))$, we need to use the respective  extrapolation limits $\nu(u_X)$ in~\eqref{eq:extra:condition} and~\eqref{eq:extra:condition:rho1} for $F_X$, respectively. In particular, assumption~\eqref{eq:extra:condition:rho1:thm} is not needed for this part.

We consider the cases $a > 0$ and $a=0$ separately. 
\subsubsection*{Case $a > 0$} 

Let us first assume that $r(x^*) = O\{\log(x^*)\}$ and assumption~\eqref{ass_aU} holds.

Two scenarios are possible, either $\tilde z/ z > 1$, or $ z/  \tilde z \geq 1 $. 
We start with $\tilde z/z > 1$ and apply Proposition~\ref{lema:potters:bounds}. 
Set $t = t_0  z$ and note that $t\to \infty$ as $u_X \to \infty$ 
by the definition of $z$ in \eqref{eq:z:def} and the fact that $m^*(x^*) \to \infty$ 
as $x^* \to \infty$ from~\eqref{m_infty}.
Let $a_Y$ and $a_{0Y}$ be the auxiliary functions in there. Since $a_Y(t)/a_{0Y}(t) \to 1$, as $t \to \infty$, we can assume without losing generality that $a_Y(t) = a_{0Y}(t)$ for all $t\in\mathbb R$; this will not make a difference in any of the steps of the proof.
Then, we have that for any $\epsilon, \eta > 0$ there exists $t_1 = t_1(\epsilon,\eta)$ such that for $t\geq t_1$
\begin{align*}
	\left| \frac{U_Y(t) - U_Y(t \tilde z/z )}{ U_Y(t)} \right| &=   
	\left| \frac{U_Y(t) - U_Y(t \tilde z/z )}{ a_{Y}(t)} 
	\frac{a_{Y} (t)}{U_Y(t)}
	\right|  \\
	&= 	\left| \left( \frac{U_Y(t) - U_Y(t \tilde z/z )}{ a_{0Y}(t)} - \log(\tilde z /z )
	\right) \frac{a_{Y} (t)}{U_Y(t)}  + \log(\tilde z /z ) \frac{a_{Y} (t)}{U_Y(t)}
	\right| \\
	&\leq \epsilon \max\{(\tilde z / z)^\eta, (\tilde z / z)^{-\eta}\} \frac{a_{Y} (t)}{U_Y(t)} + \log(\tilde z/ z)\frac{a_{Y} (t)}{U_Y(t)}\\
	&\leq   C (\tilde z / z)^\eta \frac{a_{Y} (t)}{U_Y(t)} , 
\end{align*}
for some constant $C>0$.
From~\eqref{eq:diff:medians}, and if assumption that $r(x^*) = O\{\log(x^*)\}$ holds together with~\eqref{ass_aU}, it follows that  $|\tilde z/z| = O(x^*)$ uniformly on $(u_X, \nu(u_X))$.
Therefore, for $u_X$ sufficiently large, there exists $c > 0$ with
\begin{align}\label{eq:argument}
	\left|  \frac{U_Y(\tzero z) - U_Y(\tzero \tilde z)}{ U_Y(\tzero z)} \right| 
	& \leq   c\, (x^*)^\eta  \frac{a_Y(t_0 z)}{U_Y(\tzero z)}. \nonumber \\
	& =  c\, \frac{(x^*)^\eta }{ (m^*(x^*))^{\eta}}  \times (m^*(x^*))^{\eta}\frac{a_Y(2\exp\{m^*(x^*)\})}{U_Y(2\exp\{m^*(x^*)\})}.
\end{align}
If $a > 0$, then~\eqref{eq:m:to:infinity} implies that $m^*(x^*)/x^* \to a$, as $x^* \to \infty$, and thus $(x^*)^\eta / (m^*(x^*))^{\eta}  \to a^{-\eta}$, as $x^* \to \infty$. 
Moreover, since $\eta$ can be taken arbitrarily, we choose it to be $\eta = \delta$ with $\delta > 0$ as in assumption \eqref{ass_aU}. We then have
\begin{align*}
	\lim_{t \to \infty } \{ \log(t)\}^{\eta} \frac{a_Y(t)}{U_Y(t)} = \lim_{u \to \infty} \left\{-\log \bar F_Y(u)  \right\}^\eta \frac{\sigma_Y(u)}{ u } = 0,
\end{align*}
where the last equality follows by rewriting the functions $a_Y, U_Y$ with the notation from \eqref{eq:notation}, and the fact that $U_Y(t)
\to \infty$, as $t \to \infty$, since $Y$ has infinite upper end point. 
Since for $u_X\to\infty$, for any $x\in(u_X, \nu(u_X))$ we have that $m^*(x^*) \to\infty$, we conclude that also~\eqref{eq:argument} tends to zero uniformly on this interval. Therefore,
\begin{align}\label{eq:pe}
	\lim_{u_X \to \infty} \sup_{  \substack{ u_{X} < x  < \nu(u_X) \\
			\tilde z/z > 1
	} }   \left|  \frac{U_Y(\tzero z) - U_Y(\tzero \tilde z)}{ U_Y(\tzero z)} \right|  
	&= \lim_{u_X \to \infty}  ( \log(u_{X}) )^\eta \frac{a_Y( u_{X} )}{ U_Y( u_{X} )	} = 0.
\end{align}

Instead, if $z/\tilde z \geq 1 $, then letting $t = t_0  \tilde z$ in Proposition~\ref{lema:potters:bounds} and following similar steps as before we obtain
\beao 
\lim_{u_X \to \infty} \sup_{\substack{ u_{X} < x  < \nu(u_X) \\
		\tilde z/ z \leq  1
} } \left|  \frac{U_Y(\tzero z) - U_Y(\tzero \tilde z)}{ U_Y(\tzero  \tilde z)} \right| 
= \lim_{u_X \to \infty} (\log(u_X))^\eta \frac{a_Y(u_X)}{U_Y(u_X)} = 0,
\eeao 
where here we rely on the fact that $\tilde m^*(\tilde x^*)/x^* \to a$, as $x^* \to \infty$ and thus $\tilde m^*( \tilde x^*) \to \infty$, as $x^* \to \infty$, by~\eqref{mtilde}, to obtain the previous bound. 
This last relation also implies 
\begin{align}\label{eq:correct:renorm}
	\lim_{u_X \to \infty} \sup_{  \substack{ u_{X} < x  < \nu(u_X) \\
			\tilde z/z \leq 1
	} }    \left| \frac{U_Y(\tzero  z)}{ U_Y(\tzero \tilde z)}  - 1\right| 
	&= 0 ,
\end{align}
which then yields 
\begin{align}\label{eq:leq}
	\lim_{u_X \to \infty}\sup_{\substack{ u_{X} < x  < \nu(u_X) \\
			\tilde z/ z \leq  1
	} } \left|  \frac{U_Y(\tzero z) - U_Y(\tzero \tilde z)}{ U_Y(\tzero   z)} \right| = \lim_{u_X \to \infty} \sup_{  \substack{ u_{X} < x  < \nu(u_X) \\
	\tilde z/z \leq 1
} }    \left| \frac{U_Y(\tzero  \tilde z)}{ U_Y(\tzero z)}  - 1\right| =  0. \end{align} 
Finally, the~\eqref{eq:pe} and \eqref{eq:leq} together  imply \eqref{eq:no:approx:U}, which yields the desired result. 

If $r(x^*) = o(1)$, we do not need assumption~\eqref{ass_aU} to conclude~\eqref{eq:pe} and~\eqref{eq:leq}.

\subsubsection*{Case $a  = 0$}  
We now turn to the case $a=0$.  Let us first assume that $r(x^*) = O\{\log(x^*)\}$ and assumption~\eqref{ass_aU} holds.
The proof follows similar steps as the proof for $a >0 $.
From~\eqref{eq:argument} we have the bound 
\begin{align}\label{eq:argument:2}
	\left|  \frac{U_Y(\tzero z) - U_Y(\tzero \tilde z)}{ U_Y(\tzero z)} \right| 
	& \leq   c\, (x^*)^\eta   \frac{a_Y(t_0 z)}{U_Y(\tzero z)}. \nonumber \\
	& =  c\, \frac{(x^*)^\eta}{ (m^*(x^*))^{\eta/\beta}}  \times (m^*(x^*))^{\eta/\beta}\frac{a_Y(2\exp\{m^*(x^*)\})}{U_Y(2\exp\{m^*(x^*)\})}, 
\end{align}
For $a = 0$, by~\eqref{eq:m:to:infinity}, we have $m^*(x^*)/(x^*)^\beta \to b > 0 $, as $x^* \to \infty$, and this implies $(x^*)^{\eta} / (m^*(x^*))^{\eta/\beta}  \to b^{-\eta/\beta} < \infty$, as $x^* \to \infty$. 
Since $\eta$ can be taken arbitrarily, we may choose it as $\delta = \eta/\beta $ where $\delta > 0$ is as in \eqref{ass_aU}. Then, 
\begin{align*}
	\lim_{t \to \infty } \{ \log(t)\}^{\eta/\beta} \frac{a_Y(t)}{U_Y(t)} = \lim_{u \to \infty} \left\{ -\log  \bar F_Y(u) \right\}^{\delta} \frac{\sigma_Y(u)}{ u } = 0. 
\end{align*}
Finally, following similar steps as before imply 
\begin{align*}
	\lim_{u_X \to \infty} \sup_{  \substack{ u_{X} < x  < \nu(u_X) \\
			\tilde z/z > 1
	} }   \left|  \frac{U_Y(\tzero z) - U_Y(\tzero \tilde z)}{ U_Y(\tzero z)} \right|  
	&= 0 .
\end{align*}
The case $\tilde z/z \leq 1$ is analogous to the case $a > 0$; we omit the details. Putting everything together, under the assumptions of the Theorem~\ref{thm:main}, and notably the condition~\eqref{ass_aU}, we conclude that \eqref{eq:no:approx:U} holds.

Again, if $r(x^*) = o(1)$, then $|\tilde z/z| = O(1)$, and we do not need assumption~\eqref{ass_aU} to conclude~\eqref{eq:no:approx:U}.

\subsubsection{Bounds on the relative error in~\eqref{eq:rewritting:relative:error}}\label{sec:bounds_rel}
In the second step of the proof we come back to finding bounds for the relative error term in \eqref{eq:rewritting:relative:error}. 
Note that depending on how large is $x \geq u_X$, the approximated median $\tilde m^*(\tilde x^*)$ transformed back to the original scale, is above or below the threshold value $u_Y$.
This depends precisely on whether $\tilde z \leq 1$, or $\tilde z > 1$, respectively, with the notation from~\eqref{eq:z:extrapolation}.
We first consider the case $\tilde z \leq 1$.
In this case, we do not need to use the GPD extrapolation of $F_Y$ and we can express the approximation in \eqref{eq:ftilde:rewritting} as 
\begin{align*}
	\tilde m(x) &=  \tilde U_Y(\tzero \tilde z) = U_Y(\tzero \tilde z).
\end{align*}
We plug in this value in \eqref{eq:rewritting:relative:error} which yields
\begin{align*}
	\frac{m(x) - \tilde m(x)}{m(x)}& = \frac{U_Y(\tzero z) - U_Y(\tzero \tilde z)}{  U_Y(\tzero   z)}.
\end{align*}
Therefore, the relative error vanishes uniformly when $\tilde z \leq 1$ by~\eqref{eq:no:approx:U}.

It remains to consider the case where $\tilde z > 1$. 
We express the approximation as in \eqref{eq:ftilde:rewritting} by 
\begin{align*}
	\tilde m(x) &= \tilde U_Y(t_0 \tilde z) = U_Y(\tzero)  + a_Y \log \tilde z ,
\end{align*}
where the last equality follows by \eqref{eq:approx:U}.
We consider the relative error
\begin{align}
	\notag \frac{m(x) - \tilde m(x)}{\tilde m(x)}& = \frac{U_Y(\tzero z) - U_Y(\tzero)  - a_Y \log \tilde z }{ U_Y(\tzero)  + a_Y \log \tilde z }\\
		 \notag & = 	
	  \frac{U_Y(t_0 \tilde z)}{U_Y(\tzero)  + a_Y \log \tilde z }\times \left[ \frac{U_Y(\tzero z) - U_Y(\tzero \tilde z)  }{U_Y(t_0 \tilde z) } + 1\right]  - 1 \\
	\label{decomposition} & = (I) \times [ (II) +1] -1.
\end{align}
Since by \eqref{eq:no:approx:U}, $U_Y(t_0 z)/ U_Y(t_0 \tilde z)$ goes to one uniformly, the term $(II)$ goes to zero uniformly. Therefore, 
it suffices to show that term $(I)$ goes to 1 uniformly.

For all $\eta > 0$, for $t_0$ large enough, Proposition~\ref{lema:potters:bounds} implies the bound
\begin{align*}
	&\left|\frac{U_Y(t_0 \tilde z)}{ U_Y(t_0) + a_Y \log \tilde z}  - 1  \right| \leq \frac{a_Y (\tilde z)^\eta}{U_Y(t_0) + a_Y \log \tilde z} \leq  a_Y (\tilde z)^\eta/U_Y(t_0) 
\end{align*}
since $a_Y = a_Y(t_0)$ and $\log\tilde z$ are both positive. With the uniform bound on $\tilde z$ in~\eqref{bound:z_unif} we obtain
\begin{align*}
	\sup_{ u_X  <  x < \nu(u_X) } \left|\frac{U_Y(t_0 \tilde z)}{ U_Y(t_0) + a_Y \log \tilde z}  - 1  \right|  \leq \frac{a_Y}{U_Y(t_0)}O\left(\frac{\bar F_X(u_X)}{\bar F_X(\nu(u_X))} \right)^\eta \to 0, \quad t_0 \to\infty,	
\end{align*}
since $a_Y(t_0)/U_Y(t_0) \to 0$ as $t_0\to\infty$ (see Remark~\ref{rem:sigma_U}) 
and the $O(\cdot)$ term is bounded because of assumption~\eqref{eq:extra:condition} for $F_X$.
This establishes~\eqref{main_conv} and concludes the proof in the case $\gamma_Y = 0$ under first-order conditions.

Now assume that $X$ and $Y$ also satisfy the second-order condition \eqref{eq:second:order:intro} 
with second-order parameters $\rho_X,\rho_Y \leq 0$, respectively.
Let $A_Y$ be the second-order auxiliary function for $F_Y$ as defined in~\eqref{eq:notation}.
We detail the proof only for the case $\rho_Y = 0$; the case $\rho_Y < 0$ is similar. 
We have the same decomposition of the relative error as in~\eqref{decomposition}, and term $(II)$ converges uniformly to one under the assumption that $\nu(u_X)$ satisfies~\eqref{eq:extra:condition:rho1}. 
For term $(I)$ we can use the stronger result from Theorem~\ref{eq:thm:second:order:bounds} 
to obtain for $\eta > 0$ as in assumption~\eqref{eq:extra:condition:rho1:thm}, 
for $t_0$ large enough the bound
\begin{align*}
	&\left|\frac{U_Y(t_0 \tilde z)}{ U_Y(t_0) + a_Y \log \tilde z}  - 1  \right| \leq 
	\frac{ A_Y(t_0)  (\tilde z)^\eta a_Y\log \tilde z}{U_Y(t_0) + a_Y \log \tilde z} \leq  A_Y(t_0)(\tilde z)^\eta, 
\end{align*}
since $U_Y(t_0)$ is eventually positive. With the uniform bound on $\tilde z$ in~\eqref{bound:z_unif} we obtain
\begin{align}\label{eq:second:orderI}
	\sup_{ u_X  <  x < \nu(u_X) } \left|\frac{U_Y(t_0 \tilde z)}{ U_Y(t_0) + a_Y \log \tilde z}  - 1  \right|  \leq O\left(A_Y(t_0)\left\{\frac{\bar F_X(u_X)}{\bar F_X(\nu(u_X))}\right\}^\eta \right) \to 0, \quad t_0 \to\infty,	
\end{align}
because of assumption~\eqref{eq:extra:condition}, and therefore term $(I)$ in~\eqref{decomposition} converges uniformly to one.
This establishes~\eqref{main_conv} for $\gamma_Y = 0$ under the larger extrapolation 
boundaries for second-order conditions.

\subsection*{Case $\gamma_Y > 0$} 

From~\eqref{eq:diff:medians} and the assumption that $\log(x^*) = o(1)$ in the case $\gamma_Y > 0$ it follows that 
\beam \label{eq:z:gamma0} 
	\lim_{u_X \to \infty} \sup_{ u < x < \nu(u_X)} | z/ \tilde z | = 1,
\eeam 
both under first- and second-order conditions on $F_X$ with the respective assumptions on $\nu(u_X)$.
Since $U_Y$ is regularly varying with tail-index $\gamma_Y>0$, Potter's bound \cite[Proposition B.1.9]{dehaan:ferreira:2006} implies that for $\epsilon > 0$, there exists $ t_1 = t_1(\epsilon)$ such that, for all $t, tx > t_1$,
\begin{align}\label{eq:pott:bounds}
	(1-\epsilon) \max\{ x^{\gamma_Y + \epsilon}, x^{\gamma_Y - \epsilon}\} \leq  \frac{U_Y(tx)}{ U_Y(t)} \leq (1+\epsilon) \max\{ x^{\gamma_Y + \epsilon}, x^{\gamma_Y - \epsilon}\}.
\end{align}
Setting $t = t_0 z$ and $x = \tilde z/z$, and noting that $t_0 z \to \infty$ for all $x \in (u_X, \nu(u_X))$, as $u_X\to\infty$, because of~\eqref{eq:z:def} and~\eqref{m_infty}, we obtain   
\begin{align}\label{eq:bound:approx:med:gamma:positive}
	\lim_{u_X \to \infty}  \sup_{u < x < \nu(u_X)} \left| \frac{U_Y(t_0 z )}{ U_Y(t_0 \tilde z)} - 1 \right| = 0,
\end{align}
where we used~\eqref{eq:z:gamma0}.

Similarly to $\gamma_Y = 0$ in Section~\ref{sec:bounds_rel}, we first consider the case $\tilde z \leq 1$, in which case we have 
\begin{align*}
	\tilde m(x) = U_Y(\tzero \tilde z).
\end{align*}
The relative error then has the form 
\begin{align*}
	\frac{m(x) - \tilde m(x)}{ m(x)} = \frac{U_Y(\tzero z) - U_Y(\tzero \tilde z)}{ U_Y(\tzero z)},
\end{align*}
and we see by \eqref{eq:bound:approx:med:gamma:positive} that
\begin{align}\label{m_z_small}
	\lim_{u_X\to\infty} \sup_{ \substack{ u_X < x < \nu(u_X) \\ \tilde z \leq 1 } } \left| \frac{m(x) - \tilde m(x)}{ m(x)} \right| \leq  	\lim_{u_X\to\infty} \sup_{ \substack{ u_X < x < \nu(u_X) } } \left|  \frac{U_Y(\tzero z) - U_Y(\tzero \tilde z)}{ U_Y(\tzero z)} \right| = 0. 
\end{align}
We now consider the case where $\tilde z > 1$. 
From~\eqref{eq:approx:U} and~\eqref{eq:ftilde:rewritting} we express the approximation as 
\begin{align*}
	\notag\tilde m(x) &= \tilde U_Z(t_0 \tilde z )= U_Y(\tzero) \tilde z^{\gamma_Y},
\end{align*}
and thus we can rewrite the relative error as 
\begin{align}
	\notag \frac{m(x) - \tilde m(x)}{\tilde m(x)}& = \frac{U_Y(\tzero z)}{ U_Y(\tzero)\tilde z^{\gamma_Y}} - 1\\
	\notag & = \frac{U_Y(t_0 z) }{U_Y(t_0 \tilde z)} \times \frac{U_Y(\tzero \tilde z)}{ U_Y(\tzero)\tilde z^{\gamma_Y}}  - 1\\
	\label{error_decomp}& = (I) \times (II) -1,
\end{align}
Since $(I)$ converges to one uniformly as $u_X \to \infty$ by~\eqref{eq:bound:approx:med:gamma:positive}, we focus on term $(II)$.

Since $\tilde z > 1$, we obtain from Proposition~\ref{lema:potters:bounds} and the fact that $a_Y(t_0) = \gamma U_Y(t_0)$ in the case $\gamma_Y>0$, that for any $\epsilon > 0$, there exists $t_1$ such that for $t_0 \geq t_1$
\begin{align*} 
	 \left| \frac{U_Y(t_0 \tilde z)}{U_Y(t_0) \tilde z^{\gamma_Y}} - 1 \right|  \leq \epsilon \max\{(\tilde z)^{\epsilon}, (\tilde z)^{-\epsilon}\} 
\end{align*} 
Using~\eqref{bound:z_unif} and assumption~\eqref{eq:extra:condition} on the sequence $\nu(u_X)$, we have
\begin{align*} 
	\lim_{u_X \to \infty}\sup_{u_X < x < \nu(u_X) } 	\left| \frac{U_Y(t_0 \tilde z)}{U_Y(t_0) \tilde z^{\gamma_Y}} - 1 \right|  = \epsilon  \, O\left( \frac{\bar F_X(u_X)}{\bar F_X(\nu(u_X))}\right) \leq \epsilon M,
\end{align*} 
for some constant $M$. Since $\epsilon$ is arbitrary, we conclude that term $(II)$ goes to one uniformly and therefore
\beam\label{eq:res} 
\lim_{u_X \to \infty}	\sup_{ \substack{ u_X < x < \nu(u_X) \\ \tilde z > 1 } }  \left|\frac{m(x) - \tilde m(x)}{\tilde m(x)}\right| = 0,
\eeam 
and this together with~\eqref{m_z_small} shows that \eqref{main_conv} holds.

Now consider the case where $X$ and $Y$ satisfy the second-order condition \eqref{eq:second:order:intro} (or equivalently \eqref{eq:second:order:condition}) with a second order parameter $\rho_X,\rho_Y \leq 0$, respectively. 
We first note that if $\tilde z\leq 1$, then we can use the same arguments as under first-order conditions to obtain~\eqref{m_z_small}; note that for~\eqref{eq:z:gamma0} we require that $\nu(u_X)$ satisfies~\eqref{eq:extra:condition:rho1} for $F_X$. 
Suppose now that $\tilde z > 1$ and note that we have the same decomposition as in~\eqref{error_decomp}. The term $(I)$ does to one uniformly because of~\eqref{eq:bound:approx:med:gamma:positive}. We consider separately the cases $\rho_Y< 0$ and $\rho_Y = 0$. When $\rho_Y = 0$, then applying \eqref{eq:ftx:expansion} we see that for $\eta > 0$ arbitrarily close to zero, and since $\tilde z > 1$, 
\begin{align*}
\lim_{u_X \to \infty}	\sup_{\substack{ u_X < x < \nu(u_X) \\ \tilde z > 1 } } \left| \frac{U_Y(\tzero \tilde z)}{ U_Y(\tzero)\tilde z^{\gamma_Y}}  - 1\right| &= \lim_{u_X \to \infty} 	\sup_{\substack{ u_X < x < \nu(u_X) \\ \tilde z > 1 }  } O(A_Y(t_0) \tilde z^\eta ) \\
&= \lim_{u_X \to \infty}  O\left\{A_Y(t_0)\left( \frac{\bar F_X(u_X)}{\bar F_X(\nu(u_X))}\right)^\eta \right\}\\
& = 0.   
\end{align*}
Here we used the uniform bound~\eqref{bound:z_unif} and assumption~\eqref{eq:extra:condition:rho1:thm}. This means that term $(II)$ in~\eqref{error_decomp} converges uniformly to one. In the case $\rho_Y <  0$ we can use  \eqref{eq:ftx:expansion} for the case $\rho_Y < 0$ to obtain a similar result. Putting things together, this means that~\eqref{main_conv} holds also under the second-order condition framework.

\section{Proofs of Section~\ref{sub:sec:examples}}

\subsection{Proofs Section~\ref{sec:regularly:v:tails}}\label{proof:sec:asymptotic:dep:mod:1}

The proof of Proposition~\ref{example:asymptotic:dependence} follows from the following lemma.

\begin{lemma}\label{lem:regular:variation:fx}
	Suppose that $X$ satisfies the domain of attraction condition with shape parameter $\gamma_X > 0$, that is, $\bar F_X \in \text{RV}(-\alpha_X)$, where $\alpha_X  = 1 / \gamma_X$. Let $f$ be a continuous, regularly varying function with index $\alpha_f > 0$, that is, $f \in \text{RV}(\alpha_f)$.
	Let the noise term $\varepsilon$ not be more heavy-tailed than the signal $f(X)$, that is, there exists $c_\epsilon \geq 0$ such that  $\P(\varepsilon > x) \sim c_\epsilon \P(f(X)>x),$ as $x\to\infty$. Then the following holds:
	\begin{itemize}
		\item[(i)] $f(X)$ satisfies the domain of attraction condition with shape parameter $\gamma_{f(X)}$ satisfying $\alpha_{f(X)} = 1/\gamma_{f(X)} = \alpha_X/\alpha_f $;
		\item[(ii)] The distribution of $Y$ satisfies $\bar F_Y \sim (1+c_\epsilon)\bar F_{f(X)}$.
		\item[(iii)] The median on the transformed scale satisfies
		\begin{align}\label{eq:median:approx}
			| \M( Y^* \mid X^* = x^*) - x^* - \log(1/(1+c_\epsilon) | \to 0, \quad x^*\to \infty.
		\end{align}
	\end{itemize}
\end{lemma}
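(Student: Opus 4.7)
I would prove the three claims in order, since (ii) builds on (i) and (iii) on (ii).

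For (i), the plan is to invoke the closure of the class of regularly varying distributions under composition. Since $f$ is continuous and regularly varying with positive index $\alpha_f > 0$, Karamata's representation theorem allows me to replace $f$ at infinity by an asymptotically equivalent strictly increasing function admitting a generalized inverse $f^{\leftarrow} \in \text{RV}(1/\alpha_f)$. I would then write $\bar F_{f(X)}(x) = \bar F_X(f^{\leftarrow}(x))(1 + o(1))$ as $x \to \infty$, and conclude via the composition rule for regularly varying functions (e.g., Proposition B.1.10 in \citet{dehaan:ferreira:2006}) that $\bar F_{f(X)} \in \text{RV}(-\alpha_X/\alpha_f)$, yielding the claimed shape parameter.

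For (ii), I would appeal to classical convolution asymptotics for heavy tails. Part (i) shows that $f(X)$ has a regularly varying tail and is therefore subexponential (Feller's lemma, see \citet{embrechts:kluppelberg:mikosch:2013}). Since $\bar F_\varepsilon(x) \sim c_\varepsilon \bar F_{f(X)}(x)$, the noise is tail-dominated by a subexponential distribution, and the standard sum-of-subexponentials result gives $\bar F_Y(x) = \bar F_{f(X)+\varepsilon}(x) \sim \bar F_{f(X)}(x) + \bar F_\varepsilon(x) \sim (1 + c_\varepsilon)\bar F_{f(X)}(x)$. Both cases $c_\varepsilon > 0$ and $c_\varepsilon = 0$ are covered by this argument; in the latter, the noise contribution is simply negligible in the tail.

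For (iii), the plan is to check the hypothesis of Lemma~\ref{lem:median:equation}. Since $\M(\varepsilon) = 0$, the conditional median simplifies to $m(x) = f(x)$, so I would expand
\begin{align*}
\bar F_Y\circ m(x) = \bar F_Y(f(x)) \sim (1+c_\varepsilon)\bar F_{f(X)}(f(x)) \sim (1+c_\varepsilon)\bar F_X(x),\qquad x \to \infty,
\end{align*}
where the first $\sim$ uses (ii) and the second uses the asymptotic monotonicity of $f$ (the same device as in (i)) to write $\P(f(X) > f(x)) = \P(X > x)(1+o(1))$. Lemma~\ref{lem:median:equation} then delivers~\eqref{eq:median:approx} directly.

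The main obstacle I expect is making the asymptotic monotonicity arguments in (i) and (iii) precise when $f$ is only assumed continuous and regularly varying, rather than eventually monotone: replacing $f$ and $f^{\leftarrow}$ by asymptotically equivalent monotone counterparts and controlling the resulting $o(1)$ errors uniformly in the tail requires care, though it is standard machinery. Once this step is handled, (ii) is a direct appeal to subexponential convolution and (iii) reduces to plugging into the already-proved Lemma~\ref{lem:median:equation}.
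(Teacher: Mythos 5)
Your proof proposal follows essentially the same route as the paper's own argument: (i) replace the continuous regularly varying $f$ by an asymptotically equivalent strictly monotone counterpart (the paper cites Resnick's Proposition 2.6(vii), you invoke Karamata's representation), pass to the inverse in $\text{RV}(1/\alpha_f)$, and obtain $\bar F_{f(X)} \sim \bar F_X\circ \bar f^{-1} \in \text{RV}(-\alpha_X/\alpha_f)$; (ii) apply the convolution lemma for subexponential/regularly varying distributions (the paper uses Feller's convolution lemma from Mikosch; you phrase it as the sum-of-subexponentials result, which is the same tool in this setting); (iii) write $\bar F_Y\circ m = \bar F_Y\circ f \sim (1+c_\varepsilon)\bar F_X$ and plug into Lemma~\ref{lem:median:equation}. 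You correctly identify the one genuinely delicate point — making the asymptotic-monotonicity replacement of $f$ precise and propagating the resulting $o(1)$ errors through $\P(f(X) > x)$ and $\bar F_Y(f(x))$ — which is exactly what the paper handles with the $(1\pm\delta)$ sandwich bounds in \eqref{eq:approx:fbar} and \eqref{eq:fX}. No substantive gap; this matches the paper's proof.
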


\begin{proof}[Proof Proposition~\ref{example:asymptotic:dependence}]
	By the assumptions of Proposition~\ref{example:asymptotic:dependence}, also the assumptions of Lemma~\ref{lem:regular:variation:fx} are satisfied. Therefore, $Y$ is in the domain of attraction with shape parameter $\gamma_Y >0 $ by (ii).
	By equation~\eqref{eq:median:approx} in (iii), we conclude that all assumptions of Theorem~\ref{thm:main} are verified for this model. 
\end{proof}

\begin{proof}[Proof Lemma~\ref{lem:regular:variation:fx}]
	We start by showing that the random variable $f(X)$ is regularly varying and $\bar F_{f(X)} \in RV(-\alpha_{f(X)})$ where $\alpha_{f(X)} = \alpha_X/\alpha_f$. 
	To see this, notice that by \citet[Proposition 2.6(vii)]{resnick:2007}, there exists an absolutely continuous, strictly monotone function $\bar f \in \text{RV}(\alpha_f)$ with $\lim_{x\to\infty} f(x) / \bar f(x) = 1$.
	Moreover, $\bar f^{-1} \in \text{RV}(1/\alpha_f)$ by \citet[Proposition 2.6(v)]{resnick:2007}. Since we assume in Section~\ref{sec:add_noise} that $X$ is bounded from below, and since $f$ is continuous and $f(x) \to \infty$, as $x \to \infty$, the fact that $f(X)>x$ implies that $X>t(x)$ almost surely for the sequence $t(x) = \inf\{y\geq 0: f(y) \geq x + \kappa\} \to \infty$ for some $\kappa>0$. Consequently, for $\delta > 0$, we have for $x$ large enough that ${f(X)}/{\bar f(X)} \leq 1 + \delta$ almost surely on the event $f(X)>x$, and thus 
	\begin{align}\label{eq:approx:fbar}
		\P(f(X) > x) = \P(\bar f(X) \frac{f(X)}{\bar f(X)} > x) \leq  \P(\bar f(X) > x/(1+\delta)) = \bar F_X \circ \bar f^{-1}(x/(1+\delta)).
	\end{align}
	Similarly we get a lower bound by using $1 -\delta$. By \citet[Proposition 2.6(v)]{resnick:2007}, $\bar F_X \circ \bar f^{-1}$ is a regularly varying function at infinity with index $- \alpha_X/\alpha_f$. We therefore obtain
	\begin{align}\label{eq:fX}
		\lim_{x\to\infty} \frac{\P(f(X) > tx)}{\P(f(X) > x)} \leq  
		\lim_{x\to\infty} \frac{\bar F_X \circ \bar f^{-1}(tx/(1+\delta))}{\bar F_X \circ \bar f^{-1}(x/(1-\delta))} = t^{- \alpha_X/\alpha_f} \left(\frac{1+\delta}{1-\delta}\right)^{\alpha_X/\alpha_f}.
	\end{align}
	Since $\delta>0$ was arbitrary, we obtain that $\bar F_{f(X)} \in \text{RV}(-\alpha_X/\alpha_f)$.
	Furthermore, applying again Equation~\eqref{eq:approx:fbar}, for all $\delta > 0$, we have
	\begin{align*}
		\lim_{x \to \infty}\frac{ \bar F_{f(X)} }{ \bar F_X \circ \bar f^{-1}(x) } \leq
		\lim_{x \to \infty}
		 \frac{\bar F_X \circ \bar f^{-1}(x/(1+\delta))}{ \bar F_X \circ \bar f^{-1}(x) } 
		 = (1+\delta)^{\alpha_X/\alpha_f}.
	\end{align*}	
	It is also possible to obtain a lower bound by $(1-\delta)^{\alpha_X/\alpha_f}$, and again since $\delta > 0$ is arbitrary we obtain $\bar F_{f(X)} \sim \bar F_X \circ \bar f^{-1} $. 
	\par 
	We now compute the tail asymptotics of the distribution of $Y$. 
	For this, 
	note that by \citet[Lemma 3.1(3)]{mikosch2014modeling} the random variable $f(X) + \varepsilon$ is regularly varying with index $-\alpha_{f(X)}$ since it is the sum of two regularly varying random variables. 
	Moreover, by Feller's convolution Lemma \cite[Lemma 1.3.1]{mikosch2014modeling}
	\begin{align*}
		\P( Y > x )  = \P(f(X) + \varepsilon > x) &\sim \P(f(X) > x ) + \P(\varepsilon > x)\\
		&\sim (1+c_\epsilon)\P(f(X) > x )\\
		& \sim (1+c_\epsilon) \bar F_X \circ \bar f^{-1} (x), \quad x\to \infty,
	\end{align*} 	
	where the last tail equivalences follow since $\P(\varepsilon > x) \sim c_\epsilon \P(f(X) > x)$, as $x \to \infty$, and since we have shown
	previously that $\bar F_{f(X)} \sim \bar F_X \circ \bar f^{-1} $. 
	Finally, since $f$ is regularly varying we can argue again as in \eqref{eq:fX} to conclude 
	\beao 
	\bar F_{Y} \circ f(x) \sim (1+ c_\epsilon) \bar F_X(x), \quad x \to \infty. 
	\eeao 
	Finally, we conclude by Lemma~\ref{lem:median:equation} that
	 \beao 
	\lim_{x^* \to \infty }|\M[Y^* | X^* = x^*] - x^* - \log(1/(1+c_\epsilon))|  = 0,
	\eeao 
	which gives the desired result to conclude the proof of Lemma~\ref{lem:regular:variation:fx}. 
	 \end{proof}

\subsection{Proofs Section~\ref{sec:gumbel:subexponential}}\label{proof:sec:asymptotic:dep:mod:2}

\begin{proof}[Proof  Lemma~\ref{ex:exp:chi:subexponential}]
	
	We show that there exists $\rho < 1$ and $s \geq 0$ such that 
	 \begin{itemize}
		\item[(i)] $\chi(yt)/\chi(t) \leq y^\rho, \quad \text{for all } y \geq 1, t \geq s$;
		\item[(ii)] the function $\exp\{-(2-2^\rho) \chi(t)\}$ is integrable on $[2,\infty)$.
	 \end{itemize}
	Subexponentiality of $F$ then follows by \citet[][Theorem 3]{cline:1986}. 
	
	To show (i), note that by the Potter bounds \citep[][Proposition B.1.9]{dehaan:ferreira:2006}, for $\epsilon > 0$ with $0 <\alpha_\chi  - \epsilon  <\alpha_\chi + \epsilon < 1$, there exists $s > 0$ such that for all $t, ty \geq s$,
	\[
		 (1-\epsilon) y^{\alpha_\chi - \epsilon} \leq {\chi(yt)}/{\chi(t)} \leq (1+\epsilon) y^{\alpha_\chi + \epsilon}.
	\]
	Thus, there exists $\epsilon^\prime$ with $\rho := \alpha_\chi + \epsilon^\prime < 1$ and  ${\chi(yt)}/{\chi(t)} \leq y^{\rho}$, showing (i).

	We now show (ii) holds. Note that Potter bounds hold uniformly for all $t \geq s$ and $ty \geq s$. This implies the existence of a constant $C = C(s) > 0$ such that for $t \geq s$, we have $\chi(t) \geq C y^{\alpha_X - \epsilon}$. Then, for all $\rho \in (0,1)$ we have 
	\[
		\exp\{-(2-2^\rho)\chi(t)\} \leq \exp\{-\chi(t)\} \leq \exp\{- C t^{\alpha_\chi - \epsilon}\}, \quad t \geq s.
	\]
	Finally, since $\alpha_\chi -\epsilon > 0$, the last term is integrable on $[2,\infty)$ and this concludes the proof of~(ii).
	Moreover, if $\chi$ verifies the von Mises condition ${\chi^{\prime \prime} (t) }/{({\chi^\prime}(t))^2} \to 0$ as $t \to \infty$, then it follows by \citet[][Example 3.3.24]{embrechts:kluppelberg:mikosch:2013}
	that $F$ satisfies the domain of attraction condition with shape parameter $\gamma = 0$.
	
\end{proof}

\begin{lemma}\label{lem:model2:subexpo}
	Suppose that $X$ admits a distribution $F_X$ such that $F_X = 1-\exp\{ \chi(x)\}$, where $\chi$ is a strictly monotone regularly varying function with index $\alpha_\chi \in(0,1)$.
	Suppose $\varepsilon$ is independent of $X$ and admits a subexponential distribution.
	Let $f$ be a strictly monotone regularly varying function $f \in\text{RV}(\alpha_f)$ with tail index $\alpha_f  > 0$.
	Further suppose 
	\[\alpha_\chi / \alpha_f < 1,\]
	and that  $\P(\varepsilon > x) \sim c_\epsilon \P\{f(X)>x\},$ for $c_\epsilon \in [0,\infty)$, as $x\to\infty$.
	Then we have $\bar F_Y \sim (1+c_\epsilon) \bar F_{f(X)}$ and 
	\begin{align*}
		|\M[Y^* | X^* = x^*] - x^* - \log(1/(1+c_\epsilon))| \to 0, \quad x^* \to\infty.
	\end{align*} 
\end{lemma}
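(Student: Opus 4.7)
The plan is to reduce Lemma~\ref{lem:model2:subexpo} to Lemma~\ref{lem:median:equation}, which converts a marginal tail equivalence of the form $\bar F_Y \circ m \sim (1+c)\bar F_X$ into the claim about the transformed conditional median. First I would show that $f(X)$ is subexponential. Since $f$ is strictly monotone with $f \in \mathrm{RV}(\alpha_f)$, $\alpha_f>0$, its inverse $f^{-1}$ exists on a half-line and belongs to $\mathrm{RV}(1/\alpha_f)$ by \citet[Proposition 2.6(v)]{resnick:2007}. Therefore $\bar F_{f(X)}(y) = \bar F_X(f^{-1}(y)) = \exp\{-\chi \circ f^{-1}(y)\}$, and the strictly monotone composition $\chi \circ f^{-1}$ is regularly varying with index $\alpha_\chi/\alpha_f \in (0,1)$ by assumption. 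Lemma~\ref{ex:exp:chi:subexponential} then yields subexponentiality of $f(X)$.

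Next I would determine the tail of $Y = f(X) + \varepsilon$. Both $f(X)$ and $\varepsilon$ are subexponential, and the hypothesis $\P(\varepsilon > x) \sim c_\varepsilon \bar F_{f(X)}(x)$ makes them asymptotically proportional relative to the common subexponential reference $\bar F_{f(X)}$. A standard subexponential convolution closure result (see, e.g., \citet[Corollary 1.3.2]{embrechts:kluppelberg:mikosch:2013}) then gives $\bar F_Y(x) \sim \bar F_{f(X)}(x) + \bar F_\varepsilon(x) \sim (1+c_\varepsilon) \bar F_{f(X)}(x)$. To convert this into a statement about $\bar F_Y \circ m$, note that the additive model yields $m(x) = f(x) + \M(\varepsilon)$, and because subexponential distributions are long-tailed, an additive shift by a constant is absorbed: $\bar F_{f(X)}(f(x) + \M(\varepsilon)) \sim \bar F_{f(X)}(f(x)) = \bar F_X(x)$, the last identity by monotonicity of $f$. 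Combining everything gives $\bar F_Y \circ m \sim (1+c_\varepsilon) \bar F_X$, and Lemma~\ref{lem:median:equation} applied with $c = c_\varepsilon$ delivers the claimed limit.

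The main obstacle is the convolution step: I need a subexponential closure statement that handles summands whose tails are asymptotically \emph{proportional} rather than identical, and that tracks the constant of proportionality through to the sum. The hypothesis $\bar F_\varepsilon \sim c_\varepsilon \bar F_{f(X)}$ with $c_\varepsilon \in [0,\infty)$ is precisely what permits this. The boundary case $c_\varepsilon = 0$ would need a brief separate argument showing that $\bar F_\varepsilon$ is then genuinely negligible against $\bar F_{f(X)}$, so that it contributes nothing to the limit; otherwise the proportional-tail version of the convolution lemma covers the remaining regime uniformly.
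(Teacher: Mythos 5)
Your proof matches the paper's argument: both establish subexponentiality of $f(X)$ via Lemma~\ref{ex:exp:chi:subexponential} applied to $\chi\circ f^{-1}\in\mathrm{RV}(\alpha_\chi/\alpha_f)$, invoke a tail-equivalent subexponential convolution result to obtain $\bar F_Y\sim(1+c_\epsilon)\bar F_{f(X)}$, and then close via Lemma~\ref{lem:median:equation}. The only minor differences are the reference for the convolution step (the paper uses \citet[Theorem~1]{cline:1986}, which covers the full range $c_\epsilon\in[0,\infty)$ including $c_\epsilon=0$) and the fact that the paper tacitly relies on the standing convention $\M(\varepsilon)=0$ from Section~\ref{sec:add_noise} so that $m=f$, whereas you absorb a possible constant median shift by long-tailedness — a valid extra step that makes the argument convention-free.
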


Relying on Lemma~\ref{lem:model2:subexpo} we are ready to show Proposition~\ref{example:asymptotic:dependence2} holds.
\begin{proof}[Proof Proposition~\ref{example:asymptotic:dependence2}]

By Lemma~\ref{ex:exp:chi:subexponential}, $X$ satisfies the the domain of attraction condition with $\gamma_X = 0$.
We start by showing that $Y$ is in the domain of attraction with $\gamma_Y = 0$. 
Relying on Lemma~\ref{lem:model2:subexpo} we have that $\bar F_Y \sim (1+c_\epsilon) \bar F_{f(X)}$. 
Hence, it is enough to check that $f(X)$ is in the domain of attraction with $\gamma_{f(X)} = 0$.
For this purpose, let $g(x) = \chi \circ f^{-1}(x)$, and notice we assumed that $g$ satisfies the second-order von Mises condition: $-{g^{\prime \prime} (x) }/{({g^\prime}(x))^2} \to 0$, as $x \to \infty$.
By~\citet[][Example 3.3.24]{embrechts:kluppelberg:mikosch:2013}, this condition implies $f(X)$ is in the domain of attraction with $\gamma_{f(X)} = 0$. 
This also implies that $Y$ satisfies the first-order condition with $\gamma_Y = 0$.
Together with Lemma~\ref{lem:model2:subexpo}, we have thus verified all the assumptions of Theorem~\ref{thm:main}.
\end{proof}

\begin{proof}[Proof Lemma~\ref{lem:model2:subexpo}]
	We start by showing that the distribution of $f(X)$ with distribution function $F_{f(X)}$ is a subexponential. Indeed, since $f$ is a strictly 
	monotone function, it admits an inverse $ f^{-1} \in \text{RV}(1/\alpha_f)$ by \cite[Proposition 2.6(v)]{resnick:2007}, such that
	\beam\label{eq:bound:margins:subexponential} 
	F_{f(X)}(x) = \P( f(X) \leq x ) = \P( X \leq f^{-1}(x)) = 1-\exp\{ - \chi \circ f^{-1} (x) \}.
	\eeam 
	Hence, applying \cite[Proposition 2.6(v)]{resnick:2007}, we see that $\chi \circ f^{-1}$ is regularly varying with index $\alpha_\chi/\alpha_f < 1$. 
	Lemma~\ref{ex:exp:chi:subexponential} then shows that $f(X)$ also has a subexponential distribution. 
	This fact together with the assumption that $\P(\varepsilon > x) \sim c_\epsilon \P(f(X) > x),$ as $x \to \infty$   implies \begin{align*}
		\bar F_Y(x) = \P(f(X) + \varepsilon > x) \sim (1+c_\epsilon) \P(f(X) > x) = (1+c_\epsilon) \bar F_X\circ  f^{-1}(x),
	\end{align*}
	where the first equivalence follows from \cite[Theorem 1]{cline:1986}. 
	Thus, since $f$ is a strictly monotone we have 
	\beao   \bar F_Y \circ f \sim (1+c_\epsilon) \bar F_X, \quad  x\to\infty. 
	\eeao 
	Finally, we can apply Lemma~\ref{lem:median:equation} to conclude
	\beao 
	\lim_{x^* \to \infty }|\M[Y^* | X^* = x^*] - x^* - \log(1/(1+c_\epsilon))|  = 0. 
	\eeao 
	
\end{proof}

\subsection{Proof Section~\ref{subsec:light:tails:gumbel}}\label{sec:model:3}

This section builds on the theory developed in \cite{asmussen:hashorva:laub:taimre:2017} together with the original contributions 
from \cite{bal1993}.
We consider random variables $X$, $\varepsilon$ with distribution $F_X$ and $F_\varepsilon$ admitting densities $g_X,g_\epsilon$ such that
\begin{align}\label{eq:den:assumptions}
	g_X(x) &\sim d_X \, x^{\eta_X + \tau_X - 1} \exp\{ - c_X x^{\tau_X}\}, \quad x \to \infty \nonumber \\
	g\varepsilon(x) &\sim d_\varepsilon \, x^{\eta_\varepsilon + \tau_\eps  - 1} \exp\{ - c_\varepsilon  x^{\tau_\eps  }\},  \quad x \to \infty,
\end{align}
with $\tau_X, \tau_\eps > 1$, $\eta_X, \eta_\varepsilon \in\mathbb R$ and $c_X, c_\varepsilon, d_X, d_\varepsilon > 0$. 
To prove Proposition~\ref{prop:weibull:light:tails} we need the lemma below whose proof is given at the end on this section.

\begin{lemma}\label{lem:weibull:light:tails:2}
Let $X$, $\varepsilon$ be as in \eqref{eq:den:assumptions:2},
$f(x) = x^{\alpha_f}$ for $x > 0$ and $Y = f(X) + \varepsilon$. Assume 
\[
\tau_X/\alpha_f = \tau_\eps > 1.  
\]
Then, $\bar F_Y(x) \sim d x^\eta \exp\{ - c x^{\tau_\eps}\}$,  for suitable $d, \eta >0$, with $x= ac_X$ and 
\begin{align}\label{eq:median:result}
		|\M[Y^* | X^* = x^*]  	- a x^*| = O\{ \log(x^*)\}, \quad x^* \to\infty.
	\end{align} 
	with $a\in (0,1)$ as in Proposition~\ref{prop:weibull:light:tails}.
\end{lemma}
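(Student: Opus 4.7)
The plan is to reduce the problem to a convolution of two Weibull-like densities with matching shape parameter, appeal to the saddle-point tail asymptotics for such convolutions, and then translate the resulting tail of $F_Y$ into the Laplace-scale conditional median.

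First, I would match the shapes. A change-of-variables computation shows that $U := f(X) = X^{\alpha_f}$ admits a density
\beao
g_U(u) \sim (d_X/\alpha_f)\, u^{(\eta_X+\tau_X)/\alpha_f - 1}\exp\{-c_X u^{\tau_X/\alpha_f}\}, \quad u\to\infty,
\eeao
which, under the hypothesis $\tau_X/\alpha_f = \tau_\eps$, is Weibull-like in the sense of~\eqref{eq:den:assumptions:2} with the same shape parameter $\tau_\eps$ as $\varepsilon$, scale $c_X$, and power correction $\eta_U = \eta_X/\alpha_f$. The convolution $Y = U + \varepsilon$ is thus a sum of two Weibull-like variables with a common $\tau = \tau_\eps > 1$.

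Second, I would apply the saddle-point convolution asymptotics for such sums, as developed in \cite{asmussen:hashorva:laub:taimre:2017} (building on \cite{bal1993}). Writing $u = \theta y$ in $g_Y(y) = \int g_U(u) g_\varepsilon(y-u)\,du$, the exponent becomes $-y^{\tau_\eps}[c_X\theta^{\tau_\eps} + c_\varepsilon(1-\theta)^{\tau_\eps}]$, which is minimized over $\theta \in (0,1)$ at
\beao
\theta^* = \{1+(c_X/c_\varepsilon)^{1/(\tau_\eps-1)}\}^{-1},
\eeao
exactly the $\theta$ of Proposition~\ref{prop:weibull:light:tails}. Since $\tau_\eps > 1$ the saddle is non-degenerate and the minimum value equals $a c_X$, where $a = (\theta^*)^{\tau_\eps} + (c_\varepsilon/c_X)(1-\theta^*)^{\tau_\eps} \in (0,1)$ as claimed. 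A Laplace expansion around $\theta^*$, combined with the standard density-to-tail conversion $\bar F_Y(y) \sim g_Y(y)/(a c_X \tau_\eps y^{\tau_\eps - 1})$, yields
\beao
\bar F_Y(y)\sim d\, y^\eta \exp\{-a c_X y^{\tau_\eps}\}, \quad y\to\infty,
\eeao
for explicit constants $d>0$ and $\eta \in \mathbb R$ computable from $(\eta_X, \eta_\varepsilon, \tau_\eps, c_X, c_\varepsilon, \alpha_f)$.

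Third, I would pass to the Laplace scale. Since $\M(\varepsilon)=0$ gives $m(x)=f(x)$, invariance of the median under the monotone maps $Q_L\circ F_Y$ and $Q_X\circ F_L$ implies
\beao
\M(Y^* \mid X^* = x^*) = -\log 2 - \log\bar F_Y(f(x)), \qquad x = Q_X\circ F_L(x^*).
\eeao
Using the expansion from the previous step together with $\alpha_f\tau_\eps = \tau_X$, I obtain $-\log\bar F_Y(f(x)) = a c_X x^{\tau_X} - \eta\alpha_f \log x + O(1)$. Integrating the density of $X$ yields $\bar F_X(x) \sim d_X' x^{\eta_X}\exp(-c_X x^{\tau_X})$, hence $x^* = -\log(2\bar F_X(x)) = c_X x^{\tau_X} - \eta_X \log x + O(1)$, which inverts to $\log x = O(\log x^*)$. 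Substituting delivers $\M(Y^* \mid X^* = x^*) = a x^* + O(\log x^*)$, as required. The main obstacle is the saddle-point analysis of Step~2: a rigorous Laplace expansion requires uniform control of the Gaussian correction near $\theta^*$ and negligibility estimates away from it, and the cleanest route is to verify that the densities in~\eqref{eq:den:assumptions:2} satisfy the hypotheses of the convolution theorem in \cite{asmussen:hashorva:laub:taimre:2017} and invoke it, rather than to reproduce the full saddle-point argument.
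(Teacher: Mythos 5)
Your proposal is correct and follows essentially the same route as the paper's proof: compute the density of $f(X)$ by change of variables, invoke the Weibull-like convolution asymptotics of \cite{asmussen:hashorva:laub:taimre:2017} to get $\bar F_Y$, and then control the logarithmic remainders after passing to the Laplace scale. The only genuine divergence is in the last step, where you expand $x^*$ and $\M(Y^*\mid X^*=x^*)$ forward as functions of $x$ and use the crude bound $\log x = O(\log x^*)$, whereas the paper inverts $\bar F_X$ explicitly via a Lambert-function expansion; your forward substitution is a small but welcome simplification of that computation.
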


\begin{proof}[Proof Proposition~\ref{prop:weibull:light:tails}]
Under the assumptions of Proposition~\ref{prop:weibull:light:tails}, we can use Lemma~\ref{lem:weibull:light:tails:2}. To verify the assumptions of Theorem~\ref{thm:main} under first-order conditions, it thus sufficies to show that $Y$ satisfies the domain of attraction condition~~\eqref{eq:domain:attraction:1} with $\gamma_Y = 0$, and that~\eqref{ass_aU} holds.
First note that with $\chi(x) =  c x^{\tau_\eps} - \eta \log(x) + \log(d)$,
we can write $F_Y(x) \sim  \exp\{- \chi(x)\}$ as $x  \to \infty$. It is straightforward to check that $\chi$ verifies the von Mises condition $-{\chi^{\prime \prime} (x) }/{({\chi^\prime}(x))^2} \to 0$ as $x \to \infty$. \citet[][Example 3.3.24]{embrechts:kluppelberg:mikosch:2013} then implies that $Y$ satisfies the domain of attraction  condition with $\gamma_Y = 0$. 

Similar calculations as in \eqref{eq:qx:weibull} below show that $U_Y(t) = \{ \log(t/c) + O(\log \log t)\}^{1/{\tau_\eps}}$, and thus~\eqref{eq:first:order:U} with scaling function $ \sigma_Y\{U_Y(t)\} = a_Y(t) = {\tau_\eps}^{-1} (\log t)^{1/{\tau_\eps} -1}$. This implies that for any $\delta \in (0,1)$ we have
	 \begin{align*}
		\lim_{u \to \infty}	\frac{\sigma_Y(u) }{u}
		\left\{- \log \bar F_Y(u) \right\}^{\delta}  &= 		
		\lim_{t \to \infty}	\frac{a_Y(t)}{U_Y(t)}
		\left(\log t \right)^{\delta} \\
		&= 	\lim_{x \to \infty}	\frac{a_Y\{\exp(x)\}x^{\delta}}{U_Y\{\exp(x)\}}
		  \\
		 & = \lim_{x\to\infty} O( x^{\delta -1}) = 0.
	 \end{align*}
 	 This verifies \eqref{ass_aU} from Theorem~\ref{thm:main}.
\end{proof}

\begin{proof}[Proof Lemma~\ref{lem:weibull:light:tails:2}] 
Under assumptions~\eqref{eq:den:assumptions} we can compute the asymptotic development of the density of $f(X)$ as
\beam \label{eq:comp:density:fx} 
g_{f(X)}(x) \sim d_X x^{ (\eta_X + \tau_X - 1)/\alpha_f } \exp\{ - c_X  x^{\tau_X/\alpha_f}\}, \quad x\to \infty.
\eeam  
By \citet[][Theorem 3.1]{asmussen:hashorva:laub:taimre:2017} we have
 \begin{align*}
	\bar F_Y(x) = \P(f(X) + \varepsilon > x) \sim k x^{\gamma} \exp\{ - c x^{\tau}\},
\end{align*}
where $\gamma, k >0$ and $c = a c_X$. We next compute the transformation from the Laplace margins to the predictor distribution. 
From the density expression in \eqref{eq:den:assumptions} we now derive bounds on the quantile function $Q_X$ of $X$. \citet[Equation 3.3]{asmussen:hashorva:laub:taimre:2017} yields 
\[
 \bar F_X(x) \sim \frac{d_X}{\tau c_X} x^{\eta_X}\exp\{-c_X x^{\tau_X}\} =:h(x), \quad x\to\infty.
\] 
Denoting $v = v(x)$ the solution to the equation $ h(v) = \bar F_L(x) = \exp\{-x\}/2$, for large $x>0$, we can write this exactly in terms of the Lambert function $W$ as 
\[
 	v = \left[  \frac{\eta_X}{c_X \tau_X } 
 		W\left( c_X \tau_X \eta_X^{-1}
 		\exp\{ \tau_X \eta_X^{-1} x  - q \tau_X \eta_X^{-1} \} \right) \right]^{1/\tau_X},
\]
where $q = d_X/\tau c_X$ and the Lambert function $W$ is given by
\beam \label{eq:inv:gaussian}
W(x) = \log x - \log \log x + O\Big(  \frac{ \log \log x}{ 2 \log x } \Big), \quad x \geq e.
\eeam 
A series expansion shows
\[
\left| W\left( c_X \tau_X \eta_X^{-1}
 		\exp\{ \tau_X \eta_X^{-1} x  - q \tau_X \eta_X^{-1} \} \right) - \tau_X \eta_X^{-1} x \right| = O\left\{\log(x)\right\}, \quad x\to\infty
\]
 and since then $Q_X \circ F_L(x) = v(x)$, we have
\beam \label{eq:qx:weibull2}
|Q_X \circ F_L(x)  - (x/c_X)^{1/\tau_X}| = O\left(\log(x) x^{1/\tau_X - 1} \right), \quad x \to \infty.
\eeam 
Moreover, by~\eqref{eq:bound:margins:subexponential}, we can compute an approximation of the transformation of the response distribution to Laplace margins: $ Q_L \circ F_Y(x) =  -\log \bar F_Y(x) - \log 2$, which yields
\beam \label{eq:fy:weibull2} 
|Q_L \circ F_Y(x) - c x^{\tau_\varepsilon} | = O\{\log(x)\}, \quad x \to \infty.  
\eeam 
In particular, this implies there exists $C > 0$, such that we can now bound the median as 
\begin{align*}
	\M[Y^* | X^* = x^*]  &= Q_L \circ F_Y( \M[Y | X = Q_X\circ F_L(x^*)] ) \\
		&= Q_L \circ F_Y \circ f(Q_X\circ F_L(x^*)) \\
		&\leq Q_L \circ F_Y \circ f \{(x/c_X)^{1/\tau_X} + C \log(x) x^{1/\tau_X}/x \},
\end{align*}
where the last relation follows by \eqref{eq:qx:weibull2}. Moreover, appealing to \eqref{eq:fy:weibull2} we found 
\begin{align*} 		
		\M[Y^* | X^* = x^*]  & \leq   Q_L \circ F_Y \left\{ (x/c_X)^{1/\tau_X} + C \log(x) x^{1/\tau_X}/x \right\}^{\alpha_f} \\
		&= c \left\{ (x/c_X)^{1/\tau_X} + C \log(x) x^{1/\tau_X}/x \right\}^{\tau_X}  + O\{\log(x)\}.
\end{align*} 
where the last relation follows as $\alpha_f \tau_\varepsilon = \tau_X$. 
Moreover, since $\tau_X > 1$, the term $\log(x) x^{1/\tau_X}/x \to 0$, as $x \to 0$. Now notice since $\tau_X > 1$, a Taylor development on the last expression yields 
\beao 
		\M[Y^* | X^* = x^*]  	- c_X^{-1} c \, x \leq O\{\log(x)\}.
\eeao 
Similar calculations yield a lower bound and hence, 
\beao 
		|\M[Y^* | X^* = x^*]  	- a x| = O\{ \log(x)\}, \quad x \to \infty,
\eeao 
recalling that $x = ac_X$.
This conlcudes the proof.  
\end{proof}

\subsection{Proof of Example~\ref{examples:gaussian:copula}}

\begin{proof}[Proof Example~\ref{examples:gaussian:copula}]
	It follows directly from Proposition~\ref{prop:weibull:light:tails} with $\tau_X = \tau_\varepsilon = 2$, $\alpha_f = 1$, $c_X = 1/(2\sigma_X^2), c_\varepsilon = 1/(2\sigma_\varepsilon^2)$. It remains to	 compute the value of $a$. Straightforward computations yields
	\[\theta = \{1 + \sigma_\varepsilon^2/(\rho^2\sigma_X^2)\}^{-1} \]
	and therefore
	\[a =  \theta^{\tau_\eps} +  c_{\varepsilon} c_X^{-1}  (1-\theta)^{\tau_\eps}  = \frac{\rho^2\sigma_X^2}{\sigma_\eps^2 + \rho^2\sigma_X^2} .\]
	This concludes the proof of the form of $a$.
	\end{proof}

\subsection{Proof of Example~\ref{example:beta:not:zero}}\label{lem:last:example}

\begin{proof}[Proof Example~\ref{lem:weibull:light:tails}]
The explicit form of the function $f$ implies $\bar F_{f(X)} =  \exp\{-x^{\tau_f}\}/2$, for $x > 0$.
By \citet[][Remarks 2.2 and 3.1]{asmussen:hashorva:laub:taimre:2017}, there exists  $\gamma >0$ such that
\beam \label{eq:bounds:Fy:light:tails:asymp:dep}
	\bar F_Y(x) \sim Cx^{\gamma} \exp\{-x^{\min\{\tau_f,\tau_\eps\} }\}.
\eeam 
Moreover,
\beam \label{eq:qx:weibull}
Q_X \circ F_L(x)  = x^{1/\tau_X}.
\eeam 
By~\eqref{eq:bounds:Fy:light:tails:asymp:dep} and borrowing the calculations in \eqref{eq:fy:weibull2}, we compute an approximation of the transformation of the response distribution to Laplace margins $ Q_L \circ F_Y(x) =  -\log \bar F_Y(x) - \log 2$ by
\beam \label{eq:fy:weibull} 
|Q_L \circ F_Y(x) - x^{\min\{\tau_f,\tau_\eps\} } | = O\{\log(x)\}, \quad x \to \infty.  
\eeam 
Then, 
\begin{align*}
	\M[Y^* | X^* = x^*]  &= Q_L \circ F_Y( \M[Y | X = Q_X\circ F_L(x^*)] ) \\
		&= Q_L \circ F_Y \circ f(Q_X\circ F_L(x^*)) \\
		&= Q_L \circ F_Y ( (x^*)^{1/\tau_f}),
\end{align*}
where the last relation follows by \eqref{eq:qx:weibull}. Moreover, by \eqref{eq:fy:weibull} we find
\begin{align*} 		
		\M[Y^* | X^* = x^*]  & =  Q_L \circ F_Y ( (x^*)^{1/\tau_f} ) \\
		&= (x^*)^{\min\{1,\tau_\eps/\tau_f\} }  + O\{\log(x^*)\}.
\end{align*} 
Finally, 
since $X$, $\varepsilon$, $f(X)$ and $Y$ have exact Weibull-like tails, the assumptions of Theorem~\ref{thm:main} have been verified previously in Section~\ref{subsec:examples:tails}. This concludes the of this example. 
 \end{proof}

\end{document}